\newcommand{\smallcap}[1]{\textsc{#1}\index{#1@$\textsc{#1}$}}
\newcommand{\pad}{\smallcap{pad}}
\newcommand{\ib}{\mathsf{FP}^{\mathsf{PSPACE}}}
\newcommand{\iib}{\mathsf{IB}}
\newcommand{\Turing}{\mathrm{T}}
\newcommand{\manyone}{\mathrm{M}}
\newcommand{\noredux}{-}
\newcommand{\reversible}{\mathrm{r}}
\newcommand{\invertible}{\mathrm{i}}
\newcommand{\bijective}{\mathrm{b}}
\title{The Complexity of Iterated Reversible Computation}
\crefname{observation}{observation}{observations}
\begin{document}
\maketitle  

\begin{abstract}
We study a class of functional problems reducible to computing $f^{(n)}(x)$ for inputs $n$ and $x$, where $f$ is a polynomial-time bijection. As we prove, the definition is robust against variations in the type of reduction used in its definition, and in whether we require $f$ to have a polynomial-time inverse or to be computable by a reversible logic circuit. These problems are characterized by the complexity class $\ib$, and include natural $\ib$-complete problems in circuit complexity, cellular automata, graph algorithms, and the dynamical systems described by piecewise-linear transformations.
\end{abstract}

\section{Introduction}

\emph{Reversible logic} circuits, made from Boolean gates with bijective input-output mappings, 
can simulate any other combinational logic circuit~\cite{Ben-IBM-73}.  Other natural models of computing based on low-level reversible operations include reversible cellular automata~\cite{Kar-LS-92,Kar-UC-09,Tof-JCSS-77,Mor-TCS-95}, reversible Turing machines~\cite{JacMenSon-SIDMA-90,Cer-TMMO-60,Lec-CR-63}, and even reversibility in LISP programs, where the pair of fundamental operations (car,cdr) can be seen as inverse to cons~\cite{Kor-IJCAI-81,Epp-IJCAI-85}.
Early motivation for reversible computing came from a lower bound on the energy needed for each transition of a conventional logic gate, arising from fundamental physics~\cite{Lan-IBM-61}, and the realization that reversible devices can transcend this bound and compute with arbitrarily small amounts of energy~\cite{Ben-IJTP-82,Vit-CF-05}. More recently, reversible computing has gathered interest and attention from the recognition that quantum circuits must be reversible~\cite{Per-PRA-85,Deu-PRSA-89}. Deterministic reversible logic gates do not have the full power of quantum logic, but are the only kind of deterministic gates that can be incorporated into quantum circuits, so understanding their power is important for understanding the power of quantum computing more generally~\cite{tHo-JSP-88}.

Combinational logic circuits are acyclic; the sequential logic circuits of classical computers form cycles by feeding the output of a combinational logic circuit back to its input. One way to model this kind of feedback is through functional iteration: If $f$ is a function with equally many inputs and output bits, then $f^{(n)}(x)$ denotes the result of repeatedly applying $f$, $n$ times, to an input value~$x$. When a function $f$ is described as a combinational logic circuit, computing this kind of iterated function, for large values of $n$, requires the full power of $\mathsf{PSPACE}$. In one direction, the space needed to directly simulate the iterated computation of $f$, storing the circuit, its values, and an iteration counter, is polynomial. And in the other direction, any $\mathsf{PSPACE}$ computation can be simulated by the iteration of a circuit whose input and output bits describe the state of a space-bounded Turing machine, and whose evaluation takes one state to the next. However, $\mathsf{PSPACE}$-completeness does not imply that repeated computation of $f$ is the only way to compute $f^{(n)}$: for some computationally universal problems of this flavor, such as simulating the behavior of Conway's Game of Life, many instances can be simulated in time $o(n)$, at least in practice if not in theory~\cite{Gos-PD-84}.

The combination of these two ideas, that reversible logic can simulate combinational logic, and that iterated combinational logic characterizes $\mathsf{PSPACE}$, would suggest that iterated reversible logic might again be as powerful as $\mathsf{PSPACE}$. However, this is not obvious and does not follow from the known simulations of combinational logic by reversible logic. The problem is that  these simulations use additional bits of input and output, beyond those of the simulated circuit. Padding bits of input (assumed to be zero) are transformed by the reversible circuits of these simulations into garbage bits of output (not necessarily zero)~\cite{Ben-IBM-73}. In iterating these reversible circuits, the garbage bits may build up to more than polynomial space. What then is the computational power of iterated reversible computation? It is this question that we investigate in this paper.

To do so, we need to carefully define reversibility. We are interested in the iteration of bijections from $n$-bit inputs to $n$-bit outputs. However, the bijections that we iterate can be characterized computationally in at least three natural ways: we can assume that they are described by reversible circuits (as above), we can assume that they are given merely as polynomial-time functions, or we can assume that they are \emph{invertible}, polynomial-time bijections whose inverse function is also a polynomial-time bijection (as it would be for a function described by a reversible circuit). The question of whether all polynomial-time bijections are invertible is a major open problem in the complexity of reversible computing~\cite{JacMenSon-SIDMA-90}. If they are all invertible, it would imply the non-existence of one-way permutations, deterministic polynomial-time bijections whose inverse is not computable in randomized polynomial time, even in the average case rather than the worst case~\cite{Yao-FOCS-82,ImpRud-STOC-89}. In the other direction, a proof that one-way permutations exist would have significant cryptographic applications, but would also imply that $\mathsf{P}\ne\mathsf{NP}$, currently out of reach. The distinction between these three types of bijection, in the context of function iteration, can be seen as a space-complexity analogue of the same open problem, one that we resolve in this work.

We are led by these considerations to the following class of computational problems:

\begin{quotation}
\noindent
Let $f$ be a polynomial-time function whose numbers of input and output bits are equal, and assume in addition
that one of the following is true:
\begin{itemize}
\item $f$ is a bijection from its inputs to its outputs,
\item $f$ is a bijection whose inverse function is also computable in polynomial time, or
\item $f$ can be computed by a uniform family of reversible logic circuits.
\end{itemize}
What is the complexity of computing $f^{(n)}(x)$ from $n$ and $x$?
\end{quotation}

As we prove, the three variations in assumptions about $f$ lead to equivalent complexity for this problem, even though they may describe different classes of functions. To show this, we formulate a complexity class of functional problems encapsulating the  iteration of polynomial-time bijections. As we prove, the resulting complexity class turns out to equal $\ib$, the class of functional problems that can be solved in polynomial time with access to a $\mathsf{PSPACE}$ oracle.\footnote{Note that $\ib\ne\mathsf{FPSPACE}$, the class of functional problems solvable in polynomial space, because $\mathsf{FPSPACE}$ (as defined e.g. by Ladner~\cite{Lad-SICOMP-89}) does not count its output against its space complexity, and includes problems with exponentially long outputs. In contrast $\ib$ is limited to outputs of polynomial size.} We demonstrate the applicability of this theory by finding concrete $\ib$-complete problems coming from circuit complexity, graph algorithms, cellular automata, and dynamical systems.

\subsection{New results}

We prove the following results.
\begin{itemize}
\item We define a family of functional complexity classes based on the iteration of bijective polynomial-time functions, with nine variations depending on whether we incorporate reductions from other problems into the definition, which kind of reduction is allowed, and which specific class of polynomial-time bijection is allowed (\cref{def:9}). Despite this high variation, we show that six of these classes, the ones allowing either Turing or many-one reductions and allowing any of three types of polynomial-time bijection, are equal to each other (\cref{thm:6way}).
\item We observe that these equivalent complexity classes have a complete problem in circuit complexity, falling naturally out of one of our equivalent formulations: given a reversible logic circuit, a number $n$, and an initial value for the circuit input wires, what is the result of feeding the outputs of the circuit back through the inputs for $n$ iterations through the circuit? Equivalently, what would be the output of the circuit formed by composing in sequence $n$ copies of the given circuit? (\cref{obs:circuit-completeness}.)
\item We consider a family of computational problems on implicit graphs (graphs defined procedurally by polynomial-time algorithms for listing the neighbors of each vertex) in which the graph is undirected of maximum degree two, the input is a leaf (a vertex of degree one), and the desired output is the other leaf in the same connected component. Such ``second-leaf problems'' are known to be hard for $\ib$~\cite{Ben-84,Pap-JCSS-94,LanMcKTap-JCSS-00}, and include producing the same result as Thomason's lollipop algorithm for a second Hamiltonian cycle in a cubic graph~\cite{Tho-ADM-78}. We show that all second-leaf problems can be transformed into an equivalent iterated bijection (\cref{thm:leaf-in-bif}), and as a consequence that the complexity class of iterated bijections, under any of its six equivalent definitions, equals $\ib$ (\cref{thm:FPPS}).
\item We study finding the configuration of a reversible cellular automaton, after $n$ steps from a given initial configuration. We show that this is complete for $\ib$ for the billiard-ball model, a two-dimensional reversible cellular automaton of Margolus~\cite{Mar-PD-84} (\cref{thm:bbm-hardness}).
Although certain one-dimensional reversible cellular automata were known to be Turing-complete, their proofs of universality cannot be used for $\ib$-completeness, because they use space proportional to time.  Instead, we find a new family of one-dimensional reversible cellular automata that can simulate any two-dimensional reversible cellular automaton with the Margolus neighborhood and suitable boundary conditions (\cref{thm:dim-redux}). Finding the configuration after $n$ steps for these one-dimensional automata is $\ib$-complete (\cref{thm:1d-rca-completeness}).
\item We consider a family of polynomial-time bijections defined by piecewise linear transformations, and we show that finding their iterated values is $\ib$-complete (\cref{thm:plb-complete}). However, a natural special case of this problem has a non-obvious polynomial time algorithm, obtained by a transformation into computational topology (\cref{thm:integer-exchange}).
\end{itemize}

In these problems, as is standard in computational complexity theory, all input values are assumed to be represented as binary sequences, and all time bounds are based on the Turing machine model or on polynomially-equivalent models such as random access machines with logarithmically-bounded word sizes. In particular, our model of computation does not allow unit-cost arithmetic operations on arbitrarily large numbers or exact real-number arithmetic.

\subsection{Related work}

Our research combines research from structural complexity theory, circuit complexity theory, the theory of cellular automata, graph algorithms, and dynamical systems theory, which we survey in more detail in the relevant sections and summarize here.

In structural complexity theory, many early complexity classes such as $\mathsf{P}$, $\mathsf{PSPACE}$, and $\mathsf{NP}$ were defined by bounding resources such as space or time in computational models such as deterministic or nondeterministic Turing machines. More recently, it has become common instead to see complexity classes defined by reducibility to certain fundamental problems or classes of problems: for instance, $\mathsf{PPA}$ and $\mathsf{PPAD}$ are based on searching for specific structures in graphs~\cite{Pap-JCSS-94}, while $\exists\mathbb{R}$ is based on reducibility to problems in the existential theory of the real numbers~\cite{Sch-GD-09}. Similarly, one can redefine $\mathsf{NP}$ by reducibility to brute force search algorithms on deterministic machines. Our work takes inspiration from this shift in perspective, and similarly describes a class of problems that are reducible to iteration of bijections. One difference, however, is that while $\mathsf{PPA}$, $\mathsf{PPAD}$, and $\exists\mathbb{R}$ are all in some sense ``near'' $\mathsf{NP}$, the resulting class is more similar to $\mathsf{PSPACE}$.

In the theory of reversible circuits, a central result is the ability of these circuits to simulate non-reversible combinational logic circuits~\cite{Ben-IBM-73}. Reversible logic gates or families of gates, such as the Fredkin gate~\cite{FreTof-IJTP-82} or the Toffoli gate~\cite{Tof-ICALP-80}, are said to be universal when they can be used for these simulations. This simulation is always possible when additional padding bits are included in the inputs and outputs of the simulated circuits. Without padding, it is not possible for these gates to compute all functions or even all bijective functions. Fredkin gate circuits can only compute functions that preserve Hamming weight~\cite{FreTof-IJTP-82}, and some easily-computed bijections cannot be computed by any reversible logic circuit with gates of bounded complexity (\cref{obs:non-universal}).

Reversible circuits and reversible cellular automata are connected through the simulation of Fredkin-gate circuits by Margolus's billiard-ball block-cellular automaton~\cite{Mar-PD-84}. This provides one route to Turing completeness of reversible cellular automata, for which many more constructions are known~\cite{Tof-JCSS-77,Mor-TCS-95,Dur-STACS-97,ImaMor-TCS-00,KarSalTor-RC-14,MilFre-CF-05}. Although Turing completeness is closely related to the completeness properties studied here, these constructions generally involve infinite arrays of cells and in some cases initial conditions with infinite support, in contrast to our focus on space-bounded complexity. Another route to Turing completeness is the simulation of other cellular automata by reversible cellular automata. The billiard-ball model can simulate any other two-dimensional locally reversible cellular automata~\cite{Dur-CBC-02}, and Toffoli simulates arbitrary (non-reversible) $d$-dimensional cellular automata by $(d+1)$-dimensional reversible cellular automata~\cite{Tof-JCSS-77}. Our construction of a one-dimensional $\ib$-complete (and Turing-complete) reversible cellular automaton that simulates a two-dimensional one stands in sharp contrast to a result of Hertling~\cite{Her-UMC-98} that (under weak additional assumptions) simulations of cellular automata by reversible ones must increase the dimension, as Toffoli's construction does. Our construction does not meet Hertling's assumptions.

Another well-established connection relates algorithmic problems on implicitly-defined graphs to problems in computational complexity, by considering graphs that describe the state spaces of Turing machines or other computational models. It is standard, for instance, to reinterpret Savitch's theorem relating nondeterministic and deterministic space complexity as  providing a quadratic-space algorithm for reachability in implicit directed graphs. Similarly, the Immerman--Szelepcs\'enyi theorem on closure of nondeterministic space classes under complement has an equivalent algorithmic form, a nondeterministic linear-space algorithm for non-reachability in directed graphs~\cite{Wig-MFCS-92}. The complexity classes $\mathsf{PPA}$ and $\mathsf{PPAD}$ were formulated in the same way from algorithmic problems on implicit graphs~\cite{Pap-JCSS-94}. A specific algorithm that has been frequently studied in this light is Thomasen's lollipop algorithm for finding a second Hamiltonian cycle in an (explicit) cubic graph by following a path in a much larger implicit graph defined from the given graph~\cite{Tho-ADM-78}. Although some inputs cause this algorithm to take exponential time~\cite{BriSza-DM-22,Cam-DM-01,Zho-BAMS-18} the complexity of finding a second Hamiltonian cycle in a different way is unknown, and was one of the motivating problems for the definition of $\mathsf{PPA}$~\cite{Pap-JCSS-94}. We formulate the same question in a different way, asking how hard it is to find the same cycle that Thomasen's algorithm finds, but again the complexity of this problem remains unknown.

An important precursor of our work is the result of Bennett~\cite{Ben-84} and of Lange, McKenzie, and Tapp~\cite{LanMcKTap-JCSS-00} that reversible Turing machines with polynomial space can compute functions complete for $\ib$. Citing Bennett, Papadimitriou~\cite{Pap-JCSS-94} rephrased this result in terms of implicit graphs: it is complete for $\ib$ to find the other end of a path component in an implicit graph, given a vertex at one end of the path. Our proof that the iterated functional problems we study are complete for $\ib$ is based on this result, and on a reduction converting this path problem into an iterated bijection. For related time-space tradeoffs in the power of reversible Turing machines, see also Williams~\cite{Wil-00}.

Our final section concerns the iteration of invertible piecewise linear functions. This topic is well studied in the theory of dynamical systems; previously studied functions of this type include Arnold's cat map~\cite{DysFal-AMM-92} and the baker's map~\cite{Orn-AMM-71}, both acting on the unit square, and the interval exchange transformations on a one-dimensional interval~\cite{Kea-MZ-75}. The focus of past works on these transformations has been on their chaotic dynamics, rather than on the computational complexity of computing their iterates. We also consider perfect shuffle permutations, formulated as piecewise linear functions; their iterates have again been studied, notably to determine their order in the symmetric group~\cite{DiaGraKan-AAM-83}.

\section{Invertability, reversibility, and reversible logic}
\label{sec:bir}

We define a \emph{bijection of bitstrings} to be a function on binary strings of arbitrary length that, on $n$-bit inputs, produces $n$-bit outputs, and is one-to-one for each $n$. We define a \emph{polynomial-time bijection} to be a bijection of bitstrings computable in polynomial time, and we define a \emph{polynomial-time invertible bijection} to be a polynomial-time bijection whose inverse function is also a polynomial-time bijection. We define a \emph{reversible logic gate} to be a Boolean logic gate with equally many input and output bits that computes a bijection from inputs to outputs; the number of inputs and outputs is its \emph{arity}. Finally, we define a \emph{polynomial-time reversible function} to be a bijection of bitstrings that, for each $n$, can be computed by a circuit of reversible logic gates of fixed arity that can be constructed from the argument $n$ in time polynomial in $n$. Any polynomial-time reversible function or its inverse can be computed in polynomial time by constructing and simulating its circuit, so the polynomial-time reversible functions are a subset of the polynomial-time invertible bijections.

As in classical logic, certain reversible gates have been identified as \emph{universal}. These include the three-input \emph{Fredkin gate} in which one \emph{control} input is passed through unchanged but determines whether to swap the other two inputs~\cite{FreTof-IJTP-82}, and the \emph{Toffoli gate} in which the conjunction of two control inputs determines whether to negate a third input~\cite{Tof-ICALP-80}. Universality, in this context, has sometimes been incorrectly stated as meaning that all bijections can be implemented with these gates. This is impossible for any finite set of gates:

\begin{observation}
\label{obs:non-universal}
Let $n>1$ and let $-$ be the polynomial-time invertible bijection that takes an $n$-bit binary number $x$ in 2's-complement notation to its negation $-x$, with zero and the all-ones binary value taken to themselves. Then $-$ is not a polynomial-time reversible function. More strongly, no $n$-input $n$-output reversible logic circuit with gates of arity less than $n$ can compute~$-$.
\end{observation}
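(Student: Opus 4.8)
The plan is to argue via a parity (signature) invariant on permutations of the $2^{n}$ binary strings in $\{0,1\}^{n}$. I would regard an $n$-wire reversible circuit as computing the permutation obtained by composing, in order, the permutations induced by its individual gates, a gate acting on a string by applying its bijection to the bits on its wires and leaving the remaining bits fixed. The first point to nail down is that this really is the circuit's function: because reversibility forces the number of wires to stay constant, an ``$n$-input $n$-output reversible circuit'' operates on exactly $n$ wires, with no room for ancillas, so the gates are the whole story. The signature homomorphism $S_{2^{n}}\to\{\pm 1\}$ then reduces everything to two claims: every gate of arity less than $n$ induces an even permutation, and $-$ induces an odd one.

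For the first claim, I would note that a gate of arity $k\le n-1$ acting on a set $S$ of $k$ wires by a bijection $\sigma$ of $\{0,1\}^{S}$ sends $(a,b)\mapsto(\sigma(a),b)$, with $b$ ranging over the $2^{n-k}$ settings of the other wires; its cycle structure is therefore $2^{n-k}$ disjoint copies of that of $\sigma$, so its signature is $\operatorname{sgn}(\sigma)^{2^{n-k}}=1$, the exponent being even since $k\le n-1$. Composition of even permutations is even, so the whole circuit computes an even permutation. If the circuit model additionally treats explicit wire relabelings as primitives, such a relabeling is a product of transpositions of pairs of wires, and each such transposition is again even for $n\ge 3$ by the same fiber argument (it is supported on just two wires); the single remaining case $n=2$ reduces to checking that the arity-$1$ gates, together with the wire swap, generate a group of order $8$ that does not contain $-$.

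For the second claim, I would observe that $-$ fixes exactly two strings and pairs up the remaining $2^{n}-2$ as $x\leftrightarrow -x$, so it is a product of $2^{n-1}-1$ disjoint transpositions, which is an odd permutation whenever $n>1$. An odd permutation cannot equal an even one, which gives the ``more strongly'' assertion directly; the headline assertion follows because a polynomial-time reversible function is, for each $n$, computed by a circuit all of whose gates have arity at most some fixed constant and hence (for $n$ past that constant) arity less than $n$. Since $-$ is its own inverse and is computable in linear time, it is nonetheless a polynomial-time invertible bijection, which establishes the claimed separation.

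I expect the one real obstacle to be the bookkeeping behind the first paragraph: making rigorous that the circuit computes precisely the composition of its gates' local permutations, with no wire quietly created or destroyed. This is where the no-padding hypothesis does its work --- a single padding bit already kills the parity obstruction and makes $-$ computable by reversible gates --- so the argument must genuinely use that the circuit is on exactly $n$ wires. Past that, the proof is just the two short parity computations above.
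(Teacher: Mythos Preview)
Your proposal is correct and follows essentially the same approach as the paper: both arguments view the circuit as a composition of gate-induced permutations of $\{0,1\}^n$, show each such permutation is even because at least one wire is untouched (you phrase this as $\operatorname{sgn}(\sigma)^{2^{n-k}}=1$, the paper as pairing cycles by flipping an unused bit), and then count that negation is a product of $2^{n-1}-1$ transpositions and hence odd. Your extra remarks on wire relabelings and the $n=2$ case are harmless but unnecessary for the statement as given.
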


\begin{proof}
Given any reversible circuit, match up inputs and outputs at each gate to form $n$ longer wires running through the entire circuit, and describe the function of the circuit as a permutation on the $2^n$ truth assignments to these $n$ wires, obtained by composing in topological order permutations at each gate. Each gate's permutation is even, because each cycle in its permutation is paired with another cycle, obtained by negating a value unused by the gate. Therefore, the function of the whole circuit is also an even permutation. However, binary negation swaps $(2^n-2)/2$ pairs of values, an odd number for $n>1$, giving an odd permutation.
\end{proof}

We do not expect permutation parity to be the only obstacle to the existence of reversible circuits. For instance, modifying a function by adding an extra input that is passed unchanged to the output and otherwise does not affect the result (unlike a padding bit, which must be zero for correct output) can change the permutation parity from odd to even, but appears unlikely to affect the existence of a circuit, although we do not prove this.
Nevertheless, every logic circuit (even an irreversible one) can be simulated by a reversible circuit of approximately the same size with more inputs and outputs (equally many total inputs and outputs). The added ``dummy'' inputs must all be set to zero in the simulation, and the added ``garbage'' outputs produce irrelevant values, discarded in the simulated output. Each gate of the simulated circuit can be transformed into $O(1)$ reversible gates that use up $O(1)$ dummy inputs and produce $O(1)$ garbage outputs~\cite{Ben-IBM-73}. A stronger version of this simulation, for polynomial-time invertible bijections, uses the same zero-input padding but produces zeros for the output bits instead of garbage, allowing the resulting functions to be iterated. This is, essentially, a result of Jacopini, Mentrasti, and Sontacchi~\cite{JacMenSon-SIDMA-90}, but we include the proof because it is central to our later results, because we use similar techniques in other proofs, and because Jacopini, Mentrasti, and Sontacchi phrased it in terms of reversible Turing machines rather than reversible logic circuits. Define $\pad(k,x)$ to be the result of padding a binary string $x$ by prepending $k$ zero-bits.

\begin{lemma}[Jacopini, Mentrasti, and Sontacchi~\cite{JacMenSon-SIDMA-90}]
\label{lem:invertible-to-reversible}
Let $f$ be a polynomial-time invertible bijection. Then there exists a polynomial-time reversible function $g$, and a polynomial $p$, such that for every binary string $x$ of length $n$,
\[g\Bigl(\pad\bigl(p(n),x\bigr)\Bigr)=\pad\bigl(p(n),f(x)\bigr).\]
That is, on strings consisting of $p(n)$ zeros followed by an $n$-bit string $x$, $g$ behaves like the evaluation of $f$ on the final $n$ bits, leaving the zeros unchanged.
\end{lemma}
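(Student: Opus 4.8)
The plan is to follow Bennett's \emph{compute--copy--uncompute} reversible simulation~\cite{Ben-IBM-73}, and then to exploit the polynomial-time computability of $f^{-1}$ in order to additionally erase the input; this last step is the new ingredient needed to make the output garbage-free (all padding bits restored to zero) so that the resulting function can later be iterated.

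First I would use the standard Turing-machine-to-circuit conversion to fix, for each length $n$, a circuit of size polynomial in $n$ computing $f$ on $n$-bit inputs and another computing $f^{-1}$ on $n$-bit inputs, both constructible from $n$ in polynomial time. Applying the gate-by-gate reversible simulation of~\cite{Ben-IBM-73} to each of these circuits, followed by copying the answer onto fresh zero wires and running the simulation in reverse to clear the garbage, yields for each $n$ two reversible logic circuits $B_n$ and $B'_n$, over gates of fixed arity and of size polynomial in $n$, whose behaviour on zero-padded inputs is
\[ B_n\bigl(x,0^{k(n)}\bigr)=\bigl(x,\,0^{k(n)-n},\,f(x)\bigr), \qquad B'_n\bigl(y,0^{k'(n)}\bigr)=\bigl(y,\,0^{k'(n)-n},\,f^{-1}(y)\bigr), \]
for suitable polynomials $k,k'$; note that the input copy $x$ (resp.\ $y$) is necessarily retained, since as permutations of all of $\{0,1\}^{n+k(n)}$ these maps cannot discard information, and we only invoke the displayed identities on the zero slice.

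Next I would assemble $g$, on inputs of length $n+p(n)$ where $p(n)$ is the (fixed, polynomial) total amount of auxiliary space used below, as a composition of reversible circuits acting on an $n$-bit register $X$ initialised to $x$, two further $n$-bit registers $Y,X'$ initialised to $0$, and a shared block of zeroed scratch wires: (i) apply $B_n$ with $X$ feeding its input wires and $Y$ reserved for its output, producing $f(x)$ in $Y$ while $X$ still holds $x$ and the scratch returns to $0$; (ii) apply $B'_n$ with $Y$ feeding its input and $X'$ reserved for its output, producing $f^{-1}(f(x))=x$ in $X'$; (iii) XOR $X'$ bitwise into $X$ using a layer of controlled-NOT gates, which clears $X$ to $0^n$ because $x\oplus x=0$; (iv) apply the inverse circuit $(B'_n)^{-1}$ --- again a reversible circuit over fixed-arity gates --- restoring $X'$ and the scratch to $0$; (v) permute the wires so the only register that can be nonzero, namely $Y$, occupies the last $n$ positions. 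The composite sends $\pad(p(n),x)$ to $\pad(p(n),f(x))$, using $|f(x)|=|x|=n$ since $f$ is a bijection of bitstrings, and the whole circuit is constructible from $n$ in polynomial time. Choosing $p$ so that $n\mapsto n+p(n)$ is strictly increasing lets a polynomial-time procedure recognise which input lengths arise this way and recover $n$; on all other lengths I would simply define $g$ to be the identity, so $g$ is a total bijection of bitstrings and hence a polynomial-time reversible function.

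The one genuinely essential point --- and the reason the hypothesis must be a polynomial-time \emph{invertible} bijection rather than an arbitrary polynomial-time function --- is steps (ii)--(iv): the garbage-free reversible simulation of~\cite{Ben-IBM-73} unavoidably leaves a copy of the input $x$ next to $f(x)$, and the only reversible way to delete that copy is to recompute $x$ from $f(x)$ and cancel it, which needs a polynomial-time algorithm for $f^{-1}$. The remaining details --- uniformity of the circuit family, the polynomial bound on $p(n)$, and checking that each sub-circuit leaves the scratch block all-zero so it can be reused by the next one --- are routine.
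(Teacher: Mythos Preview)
Your proposal is correct and rests on the same core idea as the paper: compute $f(x)$ reversibly alongside $x$, then use the polynomial-time algorithm for $f^{-1}$ to regenerate $x$ from $f(x)$ and cancel the original copy, leaving only $f(x)$ with clean padding.

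The difference is purely in how that idea is packaged. You first build the clean Bennett blocks $B_n$ and $B'_n$ (each already doing compute--copy--uncompute internally), then compose them as $B_n;\,B'_n;\,\mathrm{CNOT};\,(B'_n)^{-1}$. The paper instead works with the \emph{raw} garbage-producing simulations of $f$ and $f^{-1}$ directly and interleaves them with XORs, passing through the intermediate value $f(x)\oplus x$: copy $x$; simulate $f$ leaving garbage; XOR to get $f(x)\oplus x$; unsimulate $f$ to clear the garbage; XOR to turn $x$ into $f(x)$; simulate $f^{-1}$ leaving new garbage; XOR to isolate $f(x)$; unsimulate $f^{-1}$; XOR away the spare copy. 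Your decomposition is more modular and easier to verify, at the cost of roughly twice as many invocations of the $f^{-1}$ simulation (since each $B'_n$ already contains a forward and backward pass, and you then apply its inverse as well); the paper's nine-step sequence is tighter in gate count but less transparent. Both are polynomial, so the distinction is cosmetic for the lemma's purposes. Your attention to making $n\mapsto n+p(n)$ strictly increasing, and defaulting $g$ to the identity on stray lengths, correctly handles a totality issue the paper leaves implicit.
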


\begin{proof}
Construct a circuit on padded inputs that performs the following computations:
\begin{enumerate}
\item Replace $n$ padding bits with their bitwise exclusive or with input $x$,
transforming a padded input of the form $\bar 0,\bar 0,x$ (where the first $\bar 0$ denotes the remaining unused padding bits, the second $\bar 0$ denotes the padding bits replaced in this computation, and the comma denotes concatenation)
into $\bar 0,x,x$, the remaining unused padding bits together with two copies of~$x$.
\item\label{forward-step} Expand the polynomial-time computation of $f$, on $n$-bit inputs, into a classical logic circuit, and
use the simulation of classical logic by reversible logic to compute $f$ on one of the copies of $x$, replacing more padding bits by garbage bits. After this step, the input has been transformed into the form $y,f(x),x$, where $y$ is the garbage produced by the simulation.
\item Use additional bitwise exclusive ors to transform the input to the form $y,f(x),f(x)\oplus x$.
\item Reverse the circuit of Step~\ref{forward-step} to transform the input to the form $\bar 0,x,f(x)\oplus x$.
\item Use additional bitwise exclusive ors to transform the input to the form $\bar 0,f(x),f(x)\oplus x$.
\item\label{backward-step} Use the simulation of classical by reversible logic to compute $f^{-1}$ on input $f(x)$, transforming the input to the form $z,x,f(x)\oplus x$ where $z$ is the garbage produced by the simulation.
\item Use additional bitwise exclusive ors to transform the input to the form $z,x,f(x)$.
\item Reverse the circuit of Step~\ref{backward-step} to transform the input to the form $\bar 0,f(x),f(x)$.
\item Use additional bitwise exclusive ors to transform the input to the form $\bar 0,\bar 0,f(x)$.\qedhere
\end{enumerate}
\end{proof}

This proof produces circuits that combine Fredkin or Toffoli gates with additional two-input two-output exclusive or gates (also called controlled not gates). If a circuit using only one type of gate is desired, then these controlled not gates can be simulated by Fredkin gates or Toffoli gates by using additional (reusable) dummy bits that, like the padding bits, can be passed unchanged from the input to the output of the resulting circuit. We omit the details.

\section{Complexity classes and their equivalences}

We will consider two different types of reduction in our definitions of completeness:
\begin{itemize}
\item A \emph{polynomial-time functional Turing reduction} (Turing reduction, for short) from one functional problem $f$ to another functional problem $g$ can be described as a polynomial-time oracle Turing machine for problem $X$, using an oracle for problem $Y$. That is, it is an algorithm for computing the function $f$ that is allowed to make subroutine calls to an algorithm for function $g$, and that takes polynomial time outside of those calls.
\item A \emph{polynomial-time functional many-one reduction} (many-one reduction, for short) consists of two polynomial-time algorithms $r_1$ and $r_2$, such that $f=r_1\circ g\circ r_2$. That is, we can compute~$f$ by translating its input in polynomial time into an input for function $g$, computing a single value of $g$, and then translating the computed value of $g$ in polynomial time into the value of $f$. Equivalently, this can be thought of as a Turing reduction that is limited to a single oracle call.
\end{itemize}
Additionally, as we have already seen, we have three choices of which type of bijection to use in the iteration.
This naturally gives rise to the nine variant complexity classes defined below. However, we will later see that six of these are actually the same as each other (and all the same as the known complexity class $\ib$): as long as the definition of complexity class includes one of these two types of reduction, the choice of reduction type and bijection type does not matter.

\begin{definition}
\label{def:9}
Define the nine complexity classes $\iib_{x,y}$, for $x\in\{\Turing,\manyone,\noredux\}$ and $y\in\{\bijective,\invertible,\reversible\}$ (short for ``iterated bijection''), as follows.
\begin{itemize}
\item The complexity classes $\iib_{\noredux,\bijective}$, $\iib_{\noredux,\invertible}$, and $\iib_{\noredux,\reversible}$ denote the classes of problems for which the input is a pair $(n,s)$ and the output is $f^{(n)}(s)$, where the $n$-times iterated function $f$ is respectively a polynomial-time bijection, a polynomial-time invertible bijection, or a polynomial-time reversible function.

\item The complexity classes $\iib_{\Turing,y}$ denote the classes of problems having a polynomial-time functional Turing reduction to a problem in $\iib_{\noredux,y}$, for each $y$ in $\{\bijective,\invertible,\reversible\}$.

\item The complexity classes $\iib_{\manyone,y}$ denote the classes of problems having a polynomial-time functional many-one reduction to a problem in $\iib_{\noredux,y}$, for each $y$ in $\{\bijective,\invertible,\reversible\}$.
\end{itemize}
\end{definition}

Padding an input by zeros and unpadding the output in the same way is a many-one reduction, and every many-one reduction is also a Turing reduction. For both kinds of reduction, the composition of two reductions is another reduction.
Therefore, the following is an immediate consequence of \cref{lem:invertible-to-reversible}, according to which every polynomial-time invertible bijection can be padded to an equivalent polynomial-time reversible function.

\begin{observation}
\label{obs:invertible-to-reversible}
$\iib_{\Turing,\invertible}=\iib_{\Turing,\reversible}$ and
$\iib_{\manyone,\invertible}=\iib_{\manyone,\reversible}$.
\end{observation}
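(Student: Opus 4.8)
The plan is to establish each of the two claimed equalities as a pair of inclusions, one trivial and one resting on \cref{lem:invertible-to-reversible}. The inclusions $\iib_{\Turing,\reversible}\subseteq\iib_{\Turing,\invertible}$ and $\iib_{\manyone,\reversible}\subseteq\iib_{\manyone,\invertible}$ are immediate: as noted in \cref{sec:bir}, every polynomial-time reversible function is in particular a polynomial-time invertible bijection, so the underlying iterated-function problems already satisfy $\iib_{\noredux,\reversible}\subseteq\iib_{\noredux,\invertible}$, and closing both sides under Turing (respectively many-one) reductions preserves the containment. So the real work is in the reverse inclusions.

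For the reverse direction I would start from an arbitrary problem $P\in\iib_{\noredux,\invertible}$, say $P\colon(n,s)\mapsto f^{(n)}(s)$ for a polynomial-time invertible bijection $f$, and apply \cref{lem:invertible-to-reversible} to obtain a polynomial-time reversible function $g$ and a polynomial $p$ with $g\bigl(\pad(p(m),x)\bigr)=\pad\bigl(p(m),f(x)\bigr)$ for every string $x$ of length $m$. The key step is the observation that, since $f$ maps $m$-bit strings to $m$-bit strings, the padding amount $p(m)$ is the same at every iteration; hence, by induction on $k$, $g^{(k)}\bigl(\pad(p(m),x)\bigr)=\pad\bigl(p(m),f^{(k)}(x)\bigr)$ for all $k\ge 0$. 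Consequently $P$ many-one reduces to the problem $(n,t)\mapsto g^{(n)}(t)$, which belongs to $\iib_{\noredux,\reversible}$: the input reduction $r_2$ sends $(n,s)$ to $\bigl(n,\pad(p(|s|),s)\bigr)$, and the output reduction $r_1$ strips off the leading $p(|s|)$ zero bits, both in polynomial time.

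This yields $\iib_{\noredux,\invertible}\subseteq\iib_{\manyone,\reversible}$. Since a many-one reduction is in particular a Turing reduction, and since the composition of two many-one (respectively Turing) reductions is again a reduction of the same type, closing the left-hand side under many-one (respectively Turing) reductions keeps us inside $\iib_{\manyone,\reversible}$ (respectively $\iib_{\Turing,\reversible}$); hence $\iib_{\manyone,\invertible}\subseteq\iib_{\manyone,\reversible}$ and $\iib_{\Turing,\invertible}\subseteq\iib_{\Turing,\reversible}$. Combining these with the easy inclusions gives both equalities. I do not anticipate a genuine obstacle here; the only point requiring care is the bookkeeping in the inductive claim — checking that the number of padding bits depends only on the input length $|s|$ and is therefore invariant under iteration, which is precisely the form of guarantee supplied by \cref{lem:invertible-to-reversible}.
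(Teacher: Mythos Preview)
Your proposal is correct and follows essentially the same approach as the paper: pad the input via \cref{lem:invertible-to-reversible}, iterate the reversible function, and strip the padding, observing that padding/unpadding constitutes a many-one reduction and that reductions compose. You have helpfully made explicit the inductive step showing that the padding amount $p(|s|)$ is invariant under iteration (since $f$ preserves length), which the paper leaves implicit in its one-paragraph justification.
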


Next, we show that any Turing reduction to iterating a bijective or invertible function can be strengthened to a many-one reduction of a different function in the same class. That is:

\begin{lemma}
\label{lem:reduction-equivalence}
$\iib_{\Turing,\bijective}=\iib_{\manyone,\bijective}$ and
$\iib_{\Turing,\invertible}=\iib_{\manyone,\invertible}$.
\end{lemma}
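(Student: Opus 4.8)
The plan is to collapse an entire Turing reduction — oracle machine together with all of its oracle calls — into the iteration of a single polynomial-time bijection, so that one many-one reduction suffices. Fix $X\in\iib_{\Turing,\bijective}$, witnessed by a polynomial-time oracle machine $M$ that computes $X$ with an oracle for the problem ``given $(n,s)$, return $f^{(n)}(s)$'' for some polynomial-time bijection $f$. I would construct a polynomial-time bijection $g$ on encoded configurations, together with polynomial-time maps $r_1,r_2$, so that $X=r_1\circ(\text{iteration of }g)\circ r_2$; this places $X\in\iib_{\manyone,\bijective}$. The reverse inclusion $\iib_{\manyone,\bijective}\subseteq\iib_{\Turing,\bijective}$ is immediate because a many-one reduction is a special case of a Turing reduction, and the same two observations dispatch the invertible case, so the content is the forward direction.

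The configuration space of $g$ records a step-by-step reversibilized simulation of $M$ (Bennett's history-tape construction applies, since outside its oracle calls $M$ runs in polynomial time, so the history stays polynomial), together with a mode flag, a scratch register, and a write-once answer register. In ``compute'' mode, $g$ performs one reversible step of the simulation of $M$; if that step would issue an oracle query $(n_i,s_i)$, $g$ switches to ``oracle'' mode, which over the next $n_i$ iterations applies $f$ to a work register initialized to $s_i$, under a counter counting up to the stored value $n_i$, and then writes $f^{(n_i)}(s_i)$ back and resumes ``compute'' mode. When the simulation reaches its halting state, $g$ copies $M$'s output into the (blank, hence reversibly writable) answer register and enters a ``finishing'' mode of fixed length $2\Lambda$, where $\Lambda$ is a polynomial-time-computable power of two exceeding $\ell$, the number of compute-and-oracle iterations performed before the simulation halts. (This $\ell$ depends on the input and the oracle's answers, but is always at most $p(|x|)+p(|x|)2^{q(|x|)}$, where $p$ bounds the running time of $M$ and $q$ the bit-length of its queries, so such a $\Lambda$ is easy to compute.) Finishing mode merely applies $f$ to the scratch register $2\Lambda$ times, leaving the answer register alone.

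To turn the reachable configurations into a single cycle through the start configuration $C_0$, the map $g$ closes the trajectory: in the bijective case, the last finishing configuration is sent to $C_0$ by erasing the work tapes and answer register and restoring $M$ to its initial state — a polynomial-time operation, which is all that is needed here, even though its inverse is not polynomial-time — and in the invertible case, by reversibly uncomputing the whole trajectory back to $C_0$ with the answer still appended (this needs $f^{-1}$, available since $f$ is invertible, and keeps $g$ polynomial-time invertible). Either way the reachable part of $g$ is a single cycle through $C_0$ whose length is $\ell+2\Lambda+O(1)$ (or $2(\ell+2\Lambda)+O(1)$ in the invertible case), the finishing configurations form a contiguous block of $2\Lambda$ positions beginning at position $\ell+O(1)$, and each of them carries the answer in its answer register. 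Taking a fixed $N$ (for instance $N:=2\Lambda$) that is less than the cycle length but at least $\ell+1$ for every possible value $\ell<\Lambda$, we get that $g^{(N)}(C_0)$ is always a finishing configuration; so $r_2(x)=(N,C_0)$ and the map $r_1$ reading off the answer register constitute the many-one reduction. For the invertible case one checks that each ingredient of $g$ — the reversibilized simulation of $M$, the oracle and finishing applications of $f$, and the uncomputing closure — is individually polynomial-time invertible, whence $g$ lies in the invertible variant and $X\in\iib_{\manyone,\invertible}$.

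The main obstacle, and where I expect most of the work to go, is verifying that $g$ is a bijection on \emph{all} bitstrings of each length, not merely on the reachable configurations. Reversibilizing the arbitrary, possibly irreversible, machine $M$ so that its one-step relation is injective is standard but must be arranged first. The genuinely delicate part is the mode boundaries — entering and leaving oracle mode, entering finishing mode, and the reset or uncompute closing the loop — which must be set up so that no configuration acquires two preimages or none. In particular the entry configuration of each phase must not be reachable by that phase's own internal dynamics from a configuration with a larger counter value; this is why the counters count \emph{up} from a fixed value rather than down, and why unreachable configurations have to be shunted into a separately-defined, self-contained bijective behavior. Once this bookkeeping is in place, the fact that $\ell$, the number of oracle calls, and the individual query arguments $n_i$ are all data-dependent does no harm: only the length $2\Lambda$ of the finishing phase needs to be fixed in advance, and it is chosen with room to spare.
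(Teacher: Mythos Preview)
Your overall strategy—fold the oracle machine and its calls into the iteration of a single bijection—is the paper's as well, but the implementations diverge, and one step of yours has a real gap.

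The paper does not reversibilize the machine with a history tape or use mode flags. Instead it first expands the polynomial-time algorithm $A_f$ into a polynomial-size Boolean circuit containing special ``oracle gates'' for the calls $g^{(n)}(s)$, and simulates that circuit gate-by-gate under a two-level modular clock $(c_1,c_2)$: the big hand $c_1$ indexes the current gate in a fixed topological order, and for each gate the little hand $c_2$ sweeps a fixed modulus $N$ chosen larger than any possible oracle argument. The constructed bijection $h$ always increments $c_2\bmod N$ (and $c_1\bmod M$ on wraparound); on top of that it either XORs a gate's output into its output wire when $c_2=0$ (standard gate), or copies $s$ to $t$ at $c_2=0$ and applies $g$ once per tick for $0<c_2\le n$ (oracle gate), doing nothing on the remaining ticks. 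Each of these side actions is a bijection on the data component $b$ irrespective of its contents, so $h$ is manifestly bijective on \emph{every} triple $(c_1,c_2,b)$: no unreachable-configuration analysis, no mode boundaries, no cycle closure. One iterates $h$ exactly $MN$ times—a number fixed in advance, independent of the data—and reads the output wires. The same construction covers the invertible case without change.

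The step you yourself flag as ``the main obstacle'' really is one, and the proposal does not resolve it. In the bijective case you close the orbit by sending ``the last finishing configuration'' to $C_0$, erasing work tapes and the answer register; you note that the inverse of this reset is not polynomial-time, but the more basic problem is that the reset is not \emph{injective}. Every configuration with the local signature (finishing mode, counter $=2\Lambda$, input register $x$) but arbitrary scratch, answer, and history is collapsed to the single $C_0(x)$. Shunting ``unreachable configurations into a separately-defined bijective behavior'' would repair this only if your $g$ could tell in polynomial time which such configuration is the genuine one reached from $C_0(x)$, and that recognition is essentially inverting the whole run. The invertible case via uncompute is more plausible, but for the bijective case you need a different mechanism; the paper's clock-and-XOR design is one clean way to supply it, since a data-independent global counter plus per-tick updates that are each individually bijective makes global bijectivity automatic and removes any need for a reset.
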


\begin{proof}
Let $f$ be a functional problem in $\iib_{\Turing,\bijective}$ or $\iib_{\Turing,\invertible}$. This means that $f$ can be solved by an algorithm $A_f$, that takes polynomial time outside of a polynomial number of calls to a subroutine for computing $g^{(n)}(s)$ for a fixed polynomial-time bijection $g$. To show that $f\in\iib_{\manyone,\bijective}$ (or, respectively, $f\in\iib_{\manyone,\invertible}$), we construct a different polynomial-time bijection $h$ whose iteration will simulate the behavior of algorithm $A_f$. In preparation for doing so, we expand $A_f$ into a (conventional logic) circuit of polynomial size, consisting of the standard Boolean logic gates together with a special many-input many-output gate that takes as input $n$ and $s$ and produces as output $g^{(n)}(s)$, implementing the oracle calls of algorithm~$A_f$. We will simulate this circuit gate-by-gate, in a topological ordering of its gates, by a function $h$ that operates on triples $(c_1,c_2,b)$, where:
\begin{itemize}
\item The value $c_1$ (the ``big hand of the clock'') will indicate the progression of the simulation through the gates of the expanded circuit for algorithm~$A_f$.
\item The value $c_2$ (the ``little hand of the clock'') will indicate the progression of the simulation through an iteration of function $g$, within a single oracle gate of the circuit.
\item The value $b$ will indicate the Boolean values on all wires of the circuit, with zeros for wires whose value has not yet been determined by the simulation.
\end{itemize}
Initially, these values will all be zero, except for the values in $b$ that describe input wires of the simulated circuit; the part of the many-one reduction that determines the initial value of the iterated function can easily calculate what these input wire values should be.

The function $h$ that performs a step of the simulation will always increase $c_2$ by one modulo a suitable value $N$, and if the result is zero it will increase $c_1$ by one modulo a suitable value $M$. These moduli are chosen so that $N$ is larger than the largest possible argument $n$ in an oracle call to $g^{(n)}(s)$, and so that $M$ is larger than the number of gates in the simulated circuit. Because these increments are performed in modular arithmetic, they are bijective and invertible. We will iterate $h$ for $MN$ iterations, so that the big hand will increase for at least as many steps as the number of gates to be simulated. Each iteration of $h$ will also perform additional invertible functions, depending on $c_1$ and $c_2$:
\begin{itemize}
\item If $c_1$ is the position of standard logic gate $G$ in the topological ordering of the circuit for algorithm~$A_f$, and $c_2=0$, then let $o$ be the correct output of $G$, let $b_i$ be the wire where that output should go, and let $h$ replace $b_i$ by its exclusive or with $o$. This operation is bijective and invertible (it is its own reverse).
\item If $c_1$ is the position of a standard logic gate $G$, and $c_2\ne 0$, then $h$ does nothing beyond incrementing its counters.
\item If $c_1$ is the position in the topological ordering of an oracle gate with input $n,s$ and output~$t$, computing $t=g^{(n)}(s)$, and $c_2=0$, then $h$ uses bitwise exclusive ors (as in the proof of \cref{lem:invertible-to-reversible}) to copy $s$ onto $t$.
\item If $c_1$ is the position in the topological ordering of an oracle gate with input $n,s$ and output~$t$, and $0<c_2\le n$, then $h$ replaces $t$ by $g(s)$. If $g$ is bijective, this operation is bijective, and if~$g$ is invertible, this step is invertible.
\item If $c_1$ is the position in the topological ordering of an oracle gate with input $n,s$ and output~$t$, and $c_2>n$, then $h$ does nothing beyond incrementing its counters.
\end{itemize}
Finally, the part of the many-one reduction that maps the output of $h^{(MN)}$ to the value of $f$ does so simply by copying the output bits of the simulated circuit.
\end{proof}

The next observation reduces the computation of a polynomial time bijection (for which we do not necessarily have a polynomial-time inverse) to the iteration of a different polynomial-time invertible bijection (for which we do have the inverse). We include it here to introduce a counting trick in its proof, which we will use in a more complicated way in what follows.

\begin{observation}
\label{obs:inversion-by-iteration}
Let $f$ be a polynomial time bijection. Then $f\in\iib_{\manyone,\invertible}$.
\end{observation}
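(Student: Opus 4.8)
The plan is to exhibit a single polynomial-time invertible bijection $h$ whose iteration, for a fixed constant number of steps, performs exactly one evaluation of $f$, together with the obvious padding reductions witnessing membership in $\iib_{\manyone,\invertible}$. The delicate point is that we are handed a polynomial-time algorithm for $f$ only in the forward direction, not for $f^{-1}$, so $h$ must be arranged so that \emph{its} inverse is still polynomial-time computable. The device that makes this work is the exclusive-or trick already used in the proof of \cref{lem:invertible-to-reversible}: rather than overwriting the argument with its image, we retain the argument and write the image into a block of previously-zero bits, since the map $(s,t)\mapsto(s,\,t\oplus f(s))$ is an involution (hence a bijection whose inverse is itself), and both it and its inverse are computable from forward evaluations of $f$ alone.

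To state $h$ and, more importantly, to set up the counting technique reused in the later constructions, I would let $h$ act on triples $(c,s,t)$ in which $c$ is a counter taken modulo a fixed constant $N\ge 2$ and $s,t$ are two blocks of equal length, encoded as a bitstring in the natural way, with $h$ acting as the identity on bitstrings whose length is not of this form. One application of $h$ increments $c$ modulo $N$ and, exactly when the incoming value of $c$ is $0$, additionally replaces $t$ by $t\oplus f(s)$. This is a bijection — to invert it, decrement $c$ and, when the incoming value of $c$ is $1$, undo the same exclusive-or — and it is polynomial-time computable in both directions using only the forward algorithm for $f$. Iterating $h$ exactly $N$ times from $(0,x,0^{|x|})$ cycles the counter back to $0$ while the exclusive-or fires exactly once, producing $(0,x,f(x))$; iterating a multiple of $N$ times would fire the exclusive-or a matching number of times and cancel it, which is why the reduction pins the iteration count to $N$.

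The many-one reduction is then immediate: $r_2$ sends $x$ to the instance of the $\iib_{\noredux,\invertible}$ problem for $h$ with iteration count $N$ and starting string the encoding of $(0,x,0^{|x|})$, and $r_1$ extracts the $t$-block of the returned value (it can do so in polynomial time, since it knows $N$ and hence can recover $|x|$ from the length of its input). Both $r_1$ and $r_2$ run in polynomial time, and $r_1$ applied to the output of the iterated-bijection problem on input $r_2(x)$ equals $f(x)$, so $f\in\iib_{\manyone,\invertible}$. The one thing that genuinely needs care — and the reason this short observation is worth isolating before the heavier proofs — is exactly the polynomial-time invertibility of $h$ in the absence of an algorithm for $f^{-1}$: it rests on never discarding information, always carrying the argument forward and only ever \emph{adding} newly computed bits into zeroed space via exclusive-or, so that every step can be run backwards using nothing but the forward evaluation of $f$.
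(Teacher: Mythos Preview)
Your proof is correct, but it takes a different route from the paper's. You build $h$ on triples $(c,s,t)$ with a constant-modulus counter and, on one distinguished counter value, XOR $f(s)$ into $t$; a constant number of iterations from $(0,x,0)$ lands on $(0,x,f(x))$, and $h^{-1}$ is polynomial-time because the only nontrivial step is an involution computable from forward $f$ alone. The paper instead defines $g$ on pairs by $(a,b)\mapsto(a+1 \bmod 2^k,\, b+\tilde f(a))$, with $\tilde f(a)=f(a)$ when $a=x$ and $0$ otherwise, and iterates $g$ over \emph{all} $2^k$ values of $a$, so that $b$ accumulates the trivial sum $\sum_a \tilde f(a)=f(x)$; the inverse decrements the counter and subtracts $\tilde f$ of the decremented value, again needing only forward $f$.

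Your argument is more direct and uses $O(1)$ iterations rather than exponentially many. The paper's version is deliberately heavier: the text says the observation is included ``to introduce a counting trick in its proof, which we will use in a more complicated way in what follows,'' namely sweeping a counter across the entire domain and accumulating exactly one matching contribution. That is precisely the mechanism reused in \cref{lem:iterated-inversion} (the steps with $1<c_2<2^k+2$, which search all $c$ for the unique one with $f(c)=b$) to recover a preimage of $f$ without an inverse algorithm. Your constant-modulus counter, while it does echo the clocking idea of \cref{lem:reduction-equivalence}, does not rehearse that domain-sweep trick, so it proves the observation cleanly but forfeits the warm-up role the paper assigns to it.
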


\begin{proof}
We reduce the computation of $f(x)$ to the iteration of an invertible function through the trivial summation
\[
f(x)=\sum_{y}
\begin{cases}
f(y) & \mbox{if }y=x\\
0 & \mbox{otherwise.}\\
\end{cases}
\]
To do so, define $\tilde f(y)$ to be $f(y)$, if $y=x$, and $0$, otherwise, so that the sum is over the values of $\tilde f$.
Create an invertible function $g$ that operates on pairs $a,b$,
and maps $(a,b)\mapsto \bigl(a+1\bmod N,b+\tilde f(a)\bigr)$, invertible via the map
$(a,b)\mapsto \bigl(a-1\bmod N,b-\tilde f(a-1)\bigr)$, where $N$ is the number of possible inputs to function $f$. Then we can simply evaluate $f(x)$ as the $b$-component of $g^{(N)}(0,0)$.
\end{proof}

To simulate the iteration of a polynomial-time bijection using invertible steps, we combine the trivial-summation idea of \cref{obs:inversion-by-iteration}, the alternating forward and backward steps of \cref{lem:invertible-to-reversible}, and the big-hand little-hand timing idea of \cref{lem:reduction-equivalence}, as follows.

\begin{lemma}
\label{lem:iterated-inversion}
$\iib_{\noredux,\bijective}\subset\iib_{\manyone,\invertible}$.
\end{lemma}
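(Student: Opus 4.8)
The plan is to take an instance of a problem in $\iib_{\noredux,\bijective}$ — computing $f^{(n)}(s)$ for a polynomial-time bijection $f$ whose inverse we have no way to compute quickly — and realize it by iterating a single polynomial-time invertible bijection $h$, together with a many-one reduction that builds $h$'s initial configuration from $(n,s)$ and reads $f^{(n)}(s)$ out of the final configuration. The objects $h$ acts on are configurations consisting of a clock together with three work registers $R_1,R_2,R_3$, each of the same length $m$ as $s$ and interpreted as an integer modulo $N=2^m$; the reduction initializes $R_1=s$ and $R_2=R_3=0$. Following the big-hand little-hand device of \cref{lem:reduction-equivalence}, the clock is a mixed-radix counter with a ``big hand'' modulo some $M>n$ counting how many applications of $f$ have been completed, a phase index running over five phases, and a ``little hand'' $a$ modulo $N$ serving as an inner-loop counter. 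On each iteration, $h$ reads the current clock value, performs on $(R_1,R_2,R_3)$ the operation determined by that value, and then advances the clock by one. Since advancing the clock is a bijection on clock values and the inverse of $h$ can first un-advance the clock to recover the old clock value and then apply the inverse of the operation it indexes, $h$ is a bijection — and $h$, $h^{-1}$ are polynomial-time computable — provided each per-clock-value operation is a bijection of the register triple with a polynomial-time inverse.

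The core of the construction is to carry out the in-place update $R_1\mapsto f(R_1)$ reversibly using only forward evaluations of $f$, via a compute–copy–uncompute pattern as in \cref{lem:invertible-to-reversible} but with the trivial-summation trick of \cref{obs:inversion-by-iteration} standing in for the reversible circuit for $f^{-1}$ that we lack. One application of $f$ occupies five phases, each sweeping its little hand over $a=0,\dots,N-1$. (i)~For each $a$, set $R_2\mathrel{+}=[a=R_1]\cdot f(a)$; because $f$ restricted to $m$-bit strings is a bijection, this summation leaves $R_2=f(R_1)$. (ii)~For each $a$, set $R_3\mathrel{+}=[f(a)=R_2]\cdot a$; the same summation idea, now used to recover a preimage, leaves $R_3$ equal to the unique $a$ with $f(a)=R_2$, namely the old value of $R_1$. (iii)~Set $R_1\mathrel{\oplus}=R_3$, which zeroes $R_1$. (iv)~For each $a$, set $R_3\mathrel{-}=[f(a)=R_2]\cdot a$; since $R_2$ still holds $f(\text{old }R_1)$, this restores $R_3=0$ (it is the inverse of phase~(ii) on $(R_2,R_3)$, the summands commuting so the order is immaterial). (v)~Swap $R_1$ and $R_2$, so that afterwards $R_1=f(\text{old }R_1)$ and $R_2=R_3=0$, re-establishing the invariant for the next application. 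Each elementary step is a modular addition or subtraction to a register of a quantity computed in polynomial time from the clock and the registers, an exclusive-or of two registers, or a swap of two registers — each a bijection of the triple with a polynomial-time inverse — so $h$ is indeed a polynomial-time invertible bijection.

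The many-one reduction then outputs the iteration count $5nN$ together with the initial configuration just described. Since $M>n$, the big hand does not wrap around within $5nN$ iterations, so $h^{(5nN)}$ performs exactly $n$ applications of $f$ and its $R_1$-component equals $f^{(n)}(s)$, which the output-translation step extracts. The count $5nN=5n\,2^{|s|}$ has only polynomially many bits, so this is a legitimate many-one reduction, giving $\iib_{\noredux,\bijective}\subset\iib_{\manyone,\invertible}$.

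The point that I expect to require the most care — and where a careless version of the argument would fail — is the reversibility bookkeeping rather than any computation: one must check that every per-clock operation is a bijection of the \emph{entire} configuration space, not merely of configurations reachable from a valid start (otherwise $h$ is not a bijection of bitstrings at all), and one must fix an encoding of configurations from which the bit-lengths of the clock fields and of $R_1,R_2,R_3$ are recoverable from the total length, so that $h$ is well-defined and length-preserving on inputs of every length. Both are routine, but they are exactly what makes the construction legitimate.
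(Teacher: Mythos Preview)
Your proposal is correct and follows essentially the same approach as the paper: a big-hand/little-hand clock drives a fixed cycle of reversible register updates, with the trivial-summation trick of \cref{obs:inversion-by-iteration} supplying the preimage of $f$ so that the compute--copy--uncompute pattern of \cref{lem:invertible-to-reversible} goes through without a circuit for $f^{-1}$. The paper's organization of the phases differs only cosmetically---it loads $f(a)$ into the auxiliary register by a single XOR rather than by a summation sweep, and thereby gets by with one fewer work register and one summation loop instead of three---and the bookkeeping points you flag (per-step bijectivity on the full configuration space, a length-recoverable encoding) are exactly the ones the paper addresses.
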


\begin{proof}
Let $f$ be a polynomial-time bijection for which we wish to compute $f^{(n)}(x)$, the form taken by all problems in $\iib_{\noredux,\bijective}$. We must show that $f^{(n)}(x)$ can be computed in $\iib_{\manyone,\invertible}$, by iterating a polynomial-time invertible bijection. To do so, we define  an invertible bijection $g$ on 5-tuples $(c_1,c_2,a,b,c)$ where $c_1$ and $c_2$ are the big hand and little hand of the big-hand little-hand timing technique, $p$ is an adequate supply of polynomially many padding bits (zero before and after each iteration), $a$ is the current iterated value (initially the starting value $x$), and $b$ and $c$ are equally-long values used within the iteration. If the inputs and outputs to $f$ have $k$ bits, we will choose the lengths of the values in these 5-tuples to all be monotonic and easily-computed functions of $k$, so that the computation of $g$ can determine $k$ and decode the 5-tuple to its components in polynomial-time; we omit the details of this decoding process. For inputs to $g$ whose length is not of the correct form to be decoded into a 5-tuple in this way, we define $g$ to be the identity function.

Otherwise, as in \cref{lem:reduction-equivalence}, we define $g$ so that in each iteration it increments $c_2$ modulo some sufficiently large number $M$ and, if the resulting value of $c_2$ is zero, it also increments $c_1$ modulo some sufficiently large number $N>n$. Each increase of $c_1$ will correspond to one more iteration of the function $f$, so that $f^{(n)}(x)$ may be obtained by iterating $g$ exactly $nM$ times, starting from $(0,0,x,0,0)$, and examining the $a$ component of the resulting tuple. If $a$ has $k$ bits, $M$ is chosen to be greater than $2^k+2$. The effect of $g$ on the $a$, $b$, and $c$ components of the 5-tuple are determined by the value of $c_2$:
\begin{itemize}
\item If $c_2=0$, function $g$ sets $b=b\oplus f(a)$. Our overall simulation will only perform this step with $b$ and $c$ initially zero,
transforming $a,0,0$ to $a,f(a),0$, corresponding to Step 2 of \cref{lem:invertible-to-reversible}.
\item If $c_2=1$, function $g$ sets $a=a\oplus b$. When applied to $a,f(a),0$, this transforms it into $a\oplus f(a),f(a),0$, corresponding to Step 3 of \cref{lem:invertible-to-reversible}.
\item If $1<c_2<2^k+2$, function $g$ checks whether $f(c)=b$, and if so replaces $a$ with $a\oplus c$. Then regardless of the outcome of the check, it increments $c$ modulo $2^k$. When applied to $a\oplus f(a),f(a),0$, these steps use the trivial summation method to find $a$ and exclusive-or it into the first component, producing $f(a),f(a),0$ as in Step 8 of \cref{lem:invertible-to-reversible}.
\item If $c_2=2^k+2$, function $g$ replaces $b$ by $a\oplus b$. When applied to $f(a),f(a),0$, this produces $f(a),0,0$, ready for another iteration.
\item For all other values of $c_2$, $g$ does nothing to $a$, $b$, and $c$.
\end{itemize}
The changes to $c_1$ and $c_2$ in each computation of $g$ are easily inverted, and other than those changes the only effect of $g$ is to perform an exclusive-or into one of the three components $a$, $b$, and $c$, with a value computed only from the other two components, an operation that is its own inverse. Therefore, $g$ is a polynomial-time invertible bijection, and we have shown how to compute the iterated values of a polynomial-time bijection~$f$ by iterating a different polynomial-time invertible bijection~$g$.
\end{proof}

\begin{theorem}
\label{thm:6way}
For $x\in\{\manyone,\Turing\}$ and $y\in\{\bijective,\invertible,\reversible\}$ the complexity classes $\iib_{x,y}$ are all equal.
\end{theorem}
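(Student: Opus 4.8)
The plan is to assemble the theorem by a short chase through the six classes, using only the lemmas and observations already established. The trivial direction of all inclusions comes for free from the definitions: every polynomial-time reversible function is a polynomial-time invertible bijection, and every polynomial-time invertible bijection is a polynomial-time bijection, so the base classes satisfy $\iib_{\noredux,\reversible}\subseteq\iib_{\noredux,\invertible}\subseteq\iib_{\noredux,\bijective}$ as families of problems. Since closing a family of problems under many-one (or Turing) reductions is monotone, the same inclusions persist after closure, giving $\iib_{\manyone,\reversible}\subseteq\iib_{\manyone,\invertible}\subseteq\iib_{\manyone,\bijective}$ together with the analogous chain for Turing reductions.

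Next I would collapse most of the six classes using the two already-proved equivalences. \Cref{obs:invertible-to-reversible} identifies the invertible and reversible variants for each reduction type, and \cref{lem:reduction-equivalence} identifies the Turing and many-one variants for the bijective and invertible types. Combining them yields $\iib_{\Turing,\bijective}=\iib_{\manyone,\bijective}$ and $\iib_{\Turing,\invertible}=\iib_{\Turing,\reversible}=\iib_{\manyone,\invertible}=\iib_{\manyone,\reversible}$, so at most two distinct classes remain among the six, namely $\iib_{\manyone,\bijective}$ and $\iib_{\manyone,\invertible}$. It therefore suffices to prove that these two coincide.

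One inclusion, $\iib_{\manyone,\invertible}\subseteq\iib_{\manyone,\bijective}$, is already covered by the free direction above. For the reverse inclusion I would invoke \cref{lem:iterated-inversion}, which asserts $\iib_{\noredux,\bijective}\subseteq\iib_{\manyone,\invertible}$. Given $P\in\iib_{\manyone,\bijective}$, fix a many-one reduction from $P$ to some $Q\in\iib_{\noredux,\bijective}$; by \cref{lem:iterated-inversion}, $Q$ has a many-one reduction to some $R\in\iib_{\noredux,\invertible}$; and composing the two many-one reductions — the composition of many-one reductions again being a many-one reduction — exhibits a many-one reduction from $P$ to $R$, so $P\in\iib_{\manyone,\invertible}$. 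This closes the loop and shows all six classes are equal.

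I do not expect a genuine obstacle at the level of this theorem: all of the substantive work — the big-hand/little-hand timing scheme, the alternating forward/backward evaluation of \cref{lem:invertible-to-reversible}, and the trivial-summation inversion trick — has already been carried out in the preceding lemmas, and the theorem is essentially a bookkeeping step that packages them. The only points needing a moment's care are verifying that the base-class inclusions propagate correctly through reduction closure and that the two-stage reduction in the last paragraph is legitimately a single many-one reduction, both of which the paper has already noted.
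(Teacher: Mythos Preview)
Your proposal is correct and matches the paper's own proof: both use the trivial inclusions together with \cref{obs:invertible-to-reversible}, \cref{lem:reduction-equivalence}, and \cref{lem:iterated-inversion} to close the chain, differing only cosmetically in the order of invocation (the paper phrases it as showing the maximal class $\iib_{\Turing,\bijective}$ is contained in the minimal class $\iib_{\manyone,\reversible}$, whereas you collapse to two candidate classes first and then equate them). The mathematical content is identical.
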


\begin{proof}
From their definitions, these classes are naturally partially ordered by inclusion, with $\iib_{\manyone,y}\subseteq\iib_{\Turing,y}$ and $\iib_{x,\reversible}\subseteq\iib_{x,\invertible}\subseteq\iib_{x,\bijective}$, so we need only show that every problem from the largest class in this partial order, $\iib_{\Turing,\bijective}$, is contained within the smallest class in this partial order, $\iib_{\manyone,\reversible}$. Therefore, let $F$ be a functional problem in $\iib_{\Turing,\bijective}$, meaning that it can be reduced by a Turing reduction to the iteration of a polynomial-time bijection $f$. By \cref{lem:iterated-inversion} and the composition of this Turing reduction with the many-one reduction of \cref{lem:iterated-inversion} to produce another Turing reduction, $F\in\iib_{\Turing,\invertible}$. By \cref{lem:reduction-equivalence}, $F\in\iib_{\manyone,\invertible}$. And by \cref{obs:invertible-to-reversible}, $F\in\iib_{\manyone,\reversible}$.
\end{proof}

Because of this equivalence, it is justified to drop the subscripts and use $\iib$ to refer to any of the six equivalent complexity classes of \cref{thm:6way}. We will later see (\cref{thm:FPPS}) that $\iib=\ib$. For now, we prove the easy direction of this equivalence.

\begin{observation}
\label{obs:fp-pspace}
$\iib\subseteq\ib$.
\end{observation}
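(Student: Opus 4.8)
The plan is to reduce the claim to two facts: first, that the bare problem of iterating a single polynomial-time bijection lies in $\ib$; and second, that $\ib$ is closed under polynomial-time Turing reductions. Together with \cref{thm:6way}, which lets us identify $\iib$ with $\iib_{\Turing,\bijective}$, these two facts immediately give $\iib\subseteq\ib$: any $F\in\iib$ Turing-reduces in polynomial time to iterating some polynomial-time bijection $f$, the first fact places that iteration problem in $\ib$, and the second fact then places $F$ in $\ib$.

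For the first fact, let $f$ be a polynomial-time bijection and consider computing $f^{(n)}(s)$ from $(n,s)$. Since $f$ sends $k$-bit inputs to $k$-bit outputs, every intermediate value $f^{(i)}(s)$ has exactly the length of $s$, so the naive simulation that repeatedly overwrites a single work register with $f$ applied to its current contents uses only $|s|$ bits for the value, plus the reusable $\mathrm{poly}(|s|)$ space needed to evaluate $f$ once, plus an iteration counter of $O(|n|)$ bits (enough to count down from $n\le 2^{|n|}$). All of this is polynomial in the input length, so the language $L=\{(n,s,i): \text{the $i$th bit of } f^{(n)}(s) \text{ is } 1\}$ is in $\mathsf{PSPACE}$. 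A polynomial-time machine that on input $(n,s)$ queries $(n,s,i)$ for each $i\le|s|$ and concatenates the answers then computes $f^{(n)}(s)$, and because its output has polynomial length this witnesses membership in $\ib=\mathsf{FP}^{\mathsf{PSPACE}}$. (The polynomial output bound, which the footnote identifies as the difference between $\ib$ and $\mathsf{FPSPACE}$, holds here exactly because $f$ is length-preserving.)

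For the second fact, suppose a functional problem $F$ has a polynomial-time Turing reduction to some $G\in\ib$. The reduction is a polynomial-time oracle machine, so it issues at most polynomially many queries to $G$, each of polynomial length; each such query is answered by a polynomial-time computation equipped with a $\mathsf{PSPACE}$ oracle and produces a polynomial-length answer. Since polynomial-time computations with a $\mathsf{PSPACE}$ oracle compose — at the level of languages $\mathsf{P}^{\mathsf{PSPACE}}=\mathsf{PSPACE}$, and here all query strings and intermediate outputs stay of polynomial size — the combined computation is again polynomial time with a $\mathsf{PSPACE}$ oracle, so $F\in\ib$.

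I expect no genuine obstacle, as this is the easy direction; the only point requiring care is confirming that the straightforward iteration really fits in polynomial space even though $n$ may be exponentially large in its binary encoding. This hinges precisely on $f$ being length-preserving (so the value register never grows) and on tracking the remaining iterations with an $O(|n|)$-bit counter rather than an explicit list of steps, and it is also what keeps the final output polynomial so that the problem lands in $\ib$ rather than merely $\mathsf{FPSPACE}$.
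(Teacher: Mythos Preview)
Your proof is correct and follows essentially the same approach as the paper: both rest on the observation that iterating a length-preserving polynomial-time function, together with a counter, fits in polynomial space, so each output bit lies in $\mathsf{PSPACE}$ and the full output can be recovered by polynomially many oracle calls. The only cosmetic difference is that the paper works from the many-one characterization $\iib=\iib_{\manyone,\bijective}$ and folds the single reduction directly into the polynomial-space computation, whereas you start from $\iib_{\Turing,\bijective}$ and therefore add the (standard) closure of $\ib$ under polynomial-time Turing reductions.
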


\begin{proof}
Any function in $\iib$ can be computed by performing a polynomial-time reduction and then
repeatedly computing a polynomial-time function, using a counter to keep track of the number of iterations performed. The reduction, the computation of the function, and the counter all use only polynomial space. Therefore, the $i$th bit of output of a function in $\iib$ belongs to $\mathsf{PSPACE}$, and all polynomially-many bits of output can be obtained by using a polynomial number of calls to a $\mathsf{PSPACE}$ oracle to obtain the bits for each different value of~$i$.
\end{proof}

\begin{observation}
\label{obs:circuit-completeness}
Define $R$ be a functional problem whose input is a specification of a reversible logic circuit composed of universal reversible logic gates, a number $n$, and a Boolean assignment to each input wire of the circuit, and whose output is the result of applying the circuit~$n$ times, passing its outputs back to its inputs. Then $R$ is complete for $\iib$ under both many-one and Turing reductions.
\end{observation}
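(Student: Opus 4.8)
The plan is to prove the two inclusions separately: that $R \in \iib$, and that every problem in $\iib$ many-one reduces to $R$. Turing-completeness then comes for free, since every many-one reduction is also a Turing reduction.

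For membership I would replace ``the circuit is part of the input'' by ``the circuit is part of the iterated state'', using a single universal simulator. Write $\langle C\rangle$ for an encoding of a reversible circuit $C$ over the universal gates allowed in instances of $R$, and, on strings encoding a pair $\bigl(\langle C\rangle, s\bigr)$ with $s$ an assignment to the wires of $C$, define
\[
u\bigl(\langle C\rangle, s\bigr) = \bigl(\langle C\rangle, C(s)\bigr),
\]
one pass of $C$ applied to the wire values, with the description left untouched; on badly-formatted inputs let $u$ be the identity. Since $C$ is reversible, $u$ is a bijection, and $u^{-1}$, given by $\bigl(\langle C\rangle, t\bigr)\mapsto\bigl(\langle C\rangle, C^{-1}(t)\bigr)$, is computable in polynomial time by running the gates of $C$ in reverse; so $u$ is a polynomial-time invertible bijection, and the problem of computing its $n$-th iterate belongs to $\iib_{\noredux,\invertible}$ by definition. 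From $u^{(n)}\bigl(\langle C\rangle, s\bigr) = \bigl(\langle C\rangle, C^{(n)}(s)\bigr)$ we get a many-one reduction from $R$: send $(\langle C\rangle, n, x)$ to the instance consisting of $n$ together with the encoding of $\bigl(\langle C\rangle, x\bigr)$, and read off the wire-value component of the answer. Hence $R \in \iib_{\manyone,\invertible} = \iib$ by \cref{thm:6way}.

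For hardness, by \cref{thm:6way} it is enough to reduce an arbitrary problem of $\iib_{\noredux,\reversible}$ to $R$. Such a problem maps $(n,s)$ to $f^{(n)}(s)$ for a polynomial-time reversible function $f$, given by a uniform family $\{C_k\}$ of fixed-arity reversible circuits with $C_k$ having $k$ input and $k$ output wires. On input $(n,s)$ the reduction uses uniformity to build $C_{|s|}$ in polynomial time, and then rewrites it over the universal gates of $R$ by replacing each fixed-arity gate with a Fredkin-or-Toffoli subcircuit that consumes a constant number of extra wires set to zero and restores them to zero, exactly as in the proof of \cref{lem:invertible-to-reversible}; this gives an equivalent circuit $C'_{|s|}$. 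Padding $s$ with the corresponding zeros to $s'$, the triple $\bigl(\langle C'_{|s|}\rangle, n, s'\bigr)$ is an instance of $R$ whose answer equals $f^{(n)}(s)$ on the non-padding wires, since the gadgets keep the pad wires zero throughout every pass; projecting onto those wires finishes the many-one reduction. Since the six classes of \cref{thm:6way} coincide, $R$ is hard for all of $\iib$ under many-one, hence Turing, reductions, and with membership this makes $R$ complete for $\iib$ under both.

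The one place that needs care is the bookkeeping behind $u$: decoding the formatted input into $\bigl(\langle C\rangle, s\bigr)$, locating for each gate the wires it acts on, and applying the gate, all while preserving $\langle C\rangle$ bit for bit and handling malformed lengths, so that $u$ is genuinely a total bijection of bitstrings (and, if one wants it to be a polynomial-time reversible rather than merely invertible function, invoking \cref{lem:invertible-to-reversible} once more). This is routine — a demultiplex/apply/multiplex pattern, each stage reversible — but it is where the actual work sits; the hardness direction is essentially immediate once the fixed-arity-to-universal rewriting of \cref{lem:invertible-to-reversible} is in hand.
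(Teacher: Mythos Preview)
Your proof is correct and follows essentially the same approach as the paper: membership via a universal simulator $u$ that carries the circuit description unchanged while applying one pass (exactly the paper's argument), and hardness by observing that every problem in $\iib_{\noredux,\reversible}$ reduces to $R$ by writing down its circuit, then invoking \cref{thm:6way}. You are more careful than the paper about one point it glosses over---the conversion from the arbitrary fixed-arity gates allowed in the definition of polynomial-time reversible functions to the specific universal gates required by $R$---and your use of the zero-preserving gadget technique from \cref{lem:invertible-to-reversible} to handle this per gate is correct.
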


\begin{proof}
This is the definitional problem for $\iib_{\noredux,\reversible}$, and its hardness for $\iib$ follows from the composition of reductions with problems in that class. We must also show that this problem itself belongs to $\iib$, but this is easy: it is the problem of computing the $n$th iterate of an invertible polynomial-time function $f$ that takes as input a specification of a reversible circuit and an assignment to its input wires, and that produces as output the unchanged specification of the circuit and the assignment to its output wires obtained by simulating the circuit. The inverse of this function can be obtained by applying it to the reversal of the specified circuit.
\end{proof}

\section{Implicit linear forests}

An \emph{implicit graph} is a graph whose vertices are represented as binary strings of a given length, and whose edges are determined by a computational process (an oracle or subroutine for computing neighbors of each vertex) rather than being listed explicitly in an adjacency list or other data structure. These typically represent state spaces of computations or of combinatorial structures, and the use of implicit graphs is common in complexity theory. Savitch's theorem, for instance, can be interpreted as defining an algorithm for finding a path between two selected vertices in an implicit directed graph, in low deterministic space complexity~\cite{Wig-MFCS-92}.

The problems we have already considered can easily be reformulated in the language of implicit graphs: a bijection can be thought of as a directed graph with the values on which it operates as vertices, and with in-degree and out-degree both exactly one at each vertex. The iteration problem we have been considering, rephrased in this language, asks for the vertex that one would reach by following a path of length $n$ in this graph. However, this is somewhat artificial as a graph problem. Instead, we consider undirected implicit graphs in which every connected component is a path, known as \emph{linear forests}, or more generally implicit graphs with maximum degree two. Given a leaf vertex (a vertex of degree one) in such a graph, how easy is it to find the other leaf of the same path? As we show in this section, this provides an alternative equivalent formulation of the class $\iib$ that is based on graph search rather than on bijective functional iteration.

\subsection{Thomason's lollipop algorithm}

\begin{figure}[t]
\centering\includegraphics[width=0.5\textwidth]{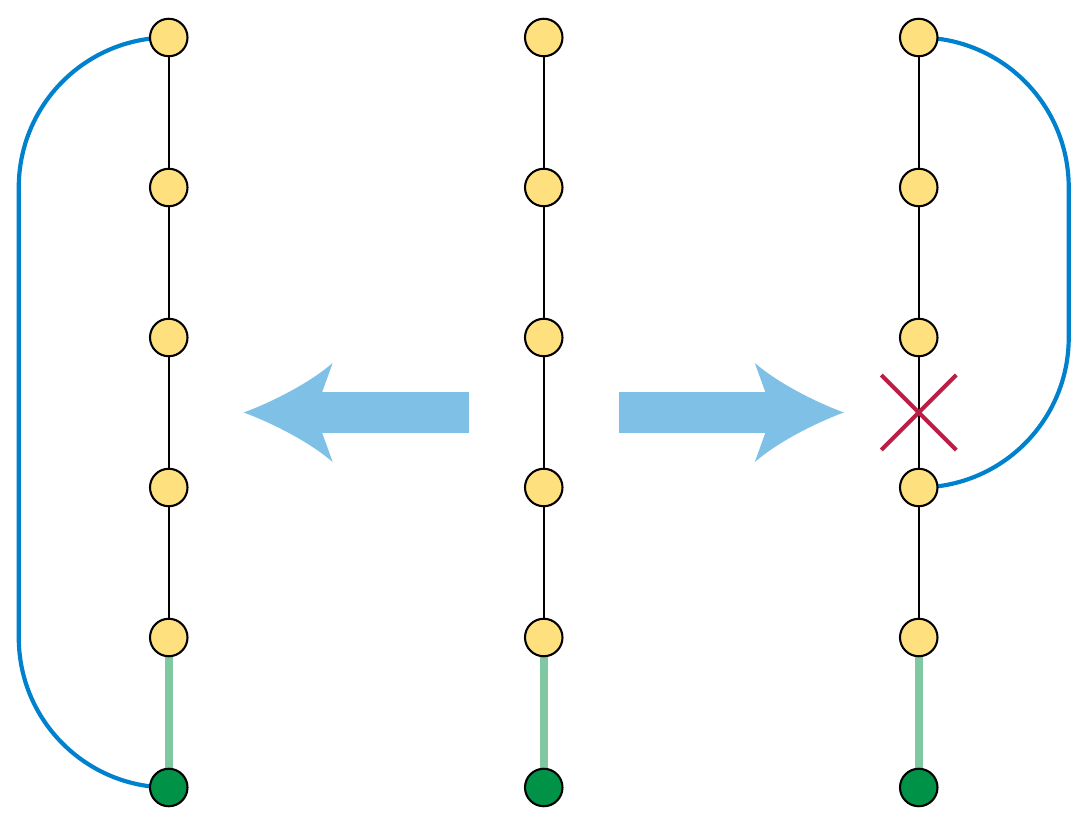}
\caption{State space and transitions for Thomason's lollipop algorithm. From any Hamiltonian path (center) with a fixed starting vertex and edge (green), extending the other end of the path by one more edge (blue) can either produce a Hamiltonian cycle (left) or a ``lollipop'', a shorter cycle with a dangling path (right). Removing one edge from the cycle in a lollipop (red X) produces another Hamiltonian path with the same fixed starting vertex and edge.}
\label{fig:lollipop}
\end{figure}

Before proving the equivalence of this formulation, we briefly discuss a prototypical example of a problem of this type, Thomason's lollipop algorithm for a second Hamiltonian cycle. In a 3-regular undirected graph, the number of Hamiltonian cycles through any fixed edge is even~\cite{Tut-JLMS-46}. A proof of this fact by Thomason~\cite{Tho-ADM-78} constructs a state space, or implicit graph, as follows (\cref{fig:lollipop}):
\begin{itemize}
\item The states of the state space are Hamiltonian paths starting at a fixed endpoint of a fixed edge. The initial Hamiltonian cycle can be transformed into one of these states by choosing arbitrarily one of its vertices and edges as the fixed vertex and edge of the state space, and removing the other Hamiltonian cycle edge that is incident to the chosen vertex.
\item Each state can transition to at most two other states, by adding one more edge to the far end of the Hamiltonian path from the fixed edge. The number of choices for this added edge is exactly two, because the given graph has degree three and one of the edges at the end vertex is already used as part of the path. If this edge is incident to the fixed vertex, adding it produces a Hamiltonian cycle; otherwise, adding it produces a ``lollipop'' or spanning subgraph in the form of a cycle with a dangling path. When it produces a lollipop, we can break the cycle at the other edge incident to the dangling path, and produce a new state. Therefore, the states that can form Hamiltonian cycles by the addition of an edge have exactly one neighbor, while the other states have exactly two neighbors.
\end{itemize}
The evenness of the number of Hamiltonian cycles through a fixed edge follows immediately from this construction: After fixing an orientation for the fixed edge, each Hamiltonian cycle corresponds to a degree-one state in this state space, which can only belong to a path of states. Every path has exactly two degree-one states, so the paths in the state space group the Hamiltonian cycles into pairs~\cite{Tho-ADM-78}.

The same argument also provides an algorithm for finding a second Hamiltonian cycle, given as input a single Hamiltonian cycle in an (explicitly represented) graph. One simply chooses arbitrarily an edge of this cycle to be a fixed edge and the orientation of this chosen edge, constructs the state space as above, and walks along the path in the state space from the initial state to another degree-one state, which must come from a different Hamiltonian cycle~\cite{Tho-ADM-78}.

Fnding a second Hamiltonian cycle is one of the prototypical examples of a problem in the complexity class $\mathsf{PPA}$, defined more generally in terms of finding a second odd-degree vertex in an implicit graph, although it is not known to be complete for $\mathsf{PPA}$~\cite{Pap-JCSS-94}. However, a solution of the $\mathsf{PPA}$ version of the problem is not required to be in the same component of the state space as the given Hamiltonian cycle, so $\mathsf{PPA}$ does not characterize the complexity of finding the same Hamiltonian cycle as the cycle found by Thomason's algorithm.
Instead, we are interested in the complexity of a more specific problem, solved by Thomason's algorithm: given a Hamiltonian cycle, a fixed edge, and a fixed orientation for that edge, find the other Hamiltonian cycle from the same component of Thomason's state space. This is an instance of the second leaf problem, in an implicit graph of maximum degree two (not necessarily a linear forest). It is known that some instances may cause Thomason's lollipop algorithm to take an exponential number of steps~\cite{Kra-JCSS-99,Cam-DM-01,Zho-BAMS-18,BriSza-DM-22}, but while this settles the complexity of this specific algorithm, it leaves open the complexity of the functional problem solved by the algorithm.

\subsection{Equivalence to iterated bijection}

We will formalize a class of computational problems like the one solved by Thomason's algorithm, rather than a single problem, in order to make the neighbor-finding subroutines by which we define an implicit graph be part of the problem definition rather than part of the input. However, we also need input data, used by those subroutines to specify the implicit graph. For instance, in the problem formalizing the input--output behavior of Thomason's lollipop algorithm, the definition of the problem includes the fact that its state space consists of Hamiltonian paths with fixed starts in a cubic graph, rather than being some other kind of implicit graph. However, the specific cubic graph containing these paths is input data rather than being part of the problem specification. Thus, we make the following definitions.

\begin{definition}
A \emph{parameterized family of implicit graphs} is defined by a polynomial time function $N(G,v)$ that takes as input two bitstrings $G$ and $v$, where $G$ identifies a specific implicit graph and $v$ names a vertex within that graph, and that produces as output a finite sequence of distinct bitstrings of equal length to $v$ naming the neighbors of $v$ in $G$. If $v$ is invalid (meaning that it does not name a vertex in the graph specified by $G$), $N$ should output a failure condition, again in polynomial time. A parameterized family is \emph{undirected} if, whenever $w$ belongs to the output of $N(G,v)$, $v$ symmetrically belongs to the output of $N(G,w)$. It is \emph{bivalent} if every call to $N$ produces at most two neighbors. A \emph{connected leaf problem} is defined by an undirected bivalent family of implicit graphs, defined by a polynomial time function $N$. An input to a problem defined in this way consists of input values $G$ and $v$ such that the output of $N(G,v)$ has length exactly one. The output to a connected leaf problem, defined in this way, is a bitstring $w$ describing a vertex of degree one in the same connected component as $v$ of the implicit graph defined by $N$ and $G$.
When the input does not have the correct form ($N(G,v)$ produces a failure condition or the wrong number of neighbors) the output is undefined.
\end{definition}

Thus, the problem of duplicating the output of Thomason's lollipop algorithm is a connected leaf problem in which $G$ describes the underlying cubic graph in which a second Hamiltonian cycle is to be found, $v$ encodes a description of a Hamiltonian path in this underlying graph, and~$N$ performs a single step of the lollipop algorithm described above, for each of the two ways of extending the path described by $v$, and outputs the Hamiltonian path or paths that result from this step.

\begin{theorem}
\label{thm:leaf-in-bif}
Every connected leaf problem belongs to $\iib$.
\end{theorem}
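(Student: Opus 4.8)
The plan is to place the problem in $\iib_{\manyone,\bijective}$ — hence in $\iib$, by \cref{thm:6way} — by exhibiting a single polynomial-time bijection $f$ together with a many-one reduction to the problem of iterating $f$. The bijection $f$ will be the ``universal'' one that carries the graph description $G$ along unchanged as part of its argument and, on the rest of its argument, performs one step of a walk along the path that contains the input leaf.

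First I would set up the dart bijection. Fix the polynomial-time neighbor function $N$ that defines the family, and call a \emph{dart} a pair $(u,w)$ with $w$ listed among the neighbors of $u$ in the graph named by $G$. Define the \emph{path-following map} $\rho_G$ on darts by $\rho_G(u,w)=(w,x)$, where $x$ is the neighbor of $w$ other than $u$ when $w$ has two neighbors, and $\rho_G(u,w)=(w,u)$ (a ``bounce'') when $w$'s only neighbor is $u$. A short check, using bivalence and undirectedness of the family, shows $\rho_G$ is a bijection on the darts of $G$. On a path component $v_0-v_1-\cdots-v_k$ (with $v_0,v_k$ the two leaves), the $\rho_G$-orbit of the dart $(v_0,v_1)$ is the $2k$-cycle $(v_0,v_1),(v_1,v_2),\dots,(v_{k-1},v_k),(v_k,v_{k-1}),\dots,(v_1,v_0)$, and the only two darts in this orbit whose tail is a leaf are $(v_0,v_1)$ and $(v_k,v_{k-1})$ — the second of which names the answer $v_k$.

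The obstacle is that $k$ may be exponential in $|v|$, so the iteration cannot merely walk the path step by step, and the reduction cannot precompute the right number of iterations: the iteration itself must notice that it has reached the far leaf and then sit still. I would handle this by iterating a bijection $g_G$ on pairs $(d,c)$, where $d$ is a dart and $c$ ranges over $\{0,\dots,C-1\}$ for a conveniently large power of two $C=2^{\,|v|+3}$. While $c=0$ and the tail of $d$ has degree two, $g_G$ walks: $(d,0)\mapsto(\rho_G(d),0)$. When $c=0$ and the tail of $d$ is a leaf, $g_G$ parks the walk and starts an idle clock: $(d,0)\mapsto(d,1)$. For $1\le c\le C-2$ it only advances the clock, $(d,c)\mapsto(d,c+1)$, freezing $d$. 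The two cases with $c=C-1$ are then forced, if $g_G$ is to be a bijection, to be the inverses of the walking and parking cases (namely $(d,C-1)\mapsto(d,1)$ when the tail has degree two and $(d,C-1)\mapsto(\rho_G(d),0)$ when the tail is a leaf); checking that $g_G$ is indeed a bijection is a routine case analysis in the spirit of the counting gadgets in \cref{lem:iterated-inversion}. Extending $g_G$ to a genuine bijection of bitstrings — taking it to be the identity on strings that do not decode to a valid triple $(G,d,c)$, and fixing an encoding from which $|v|$, hence the lengths of $d$ and $c$, and hence $G$, can be read back — is done exactly as in the proof of \cref{lem:iterated-inversion}, and I would omit the details.

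Finally I would assemble the reduction. Let $f$ be the polynomial-time bijection sending a well-formed encoding of $(G,d,c)$ to $(G,g_G(d,c))$ and acting as the identity on all other bitstrings; $f$ is polynomial time because $N$, $\rho_G$, and arithmetic on the $O(|v|)$-bit counter $c$ are. Given a valid input $(G,v)$ — so $N(G,v)$ has length one, say $N(G,v)=(v')$ — the first reduction map computes the starting dart $d_0=(v,v')$, takes one step by hand to form $d_1=\rho_G(d_0)$ (so the iteration begins past the leaf-test rather than parking at $v$ immediately), and outputs the pair $(n,s)$ with $s$ the encoding of $(G,d_1,0)$ and $n=2^{\,|v|+2}$, which exceeds $k$ since the path has at most $2^{\,|v|}$ vertices. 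After the oracle returns $f^{(n)}(s)$, the second reduction map outputs the tail of its dart component. Correctness: from $(d_1,0)$ the walking rule reaches $(d_k,0)$ after $k-1$ iterations (vacuously so when $k=1$), the parking rule then yields $(d_k,1)$, and the remaining $n-k$ iterations merely advance the clock, which stays below $C$ throughout, so the wraparound cases never fire; thus $f^{(n)}(s)=(G,d_k,c^\ast)$ for some $c^\ast<C$, and the tail of $d_k$ is $v_k$, the other leaf of the component of $v$. The delicate point — and what I expect to be the main difficulty — is precisely the design of this ``park-and-idle'' gadget so that it stays a bijection while still letting a caller read off the answer after a fixed, polynomially-described number of iterations; once that is in hand, the rest is bookkeeping.
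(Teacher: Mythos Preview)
Your proposal is correct and follows essentially the same approach as the paper: both construct a polynomial-time bijection on (counter, dart)-like triples that walks along the path while the counter is zero, parks at the far leaf by incrementing the counter, and is iterated a fixed $2^{O(|v|)}$ times so that the result is guaranteed to lie in the idle phase with the far leaf readable. The only cosmetic differences are that the paper tests the degree of the head of the dart (so the walk leaves the starting leaf automatically) whereas you test the tail and compensate by applying $\rho_G$ once by hand, and the two constructions arrange the counter-wraparound cases slightly differently; neither difference is substantive.
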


\begin{proof}
Given a connected leaf problem defined by a polynomial time function $N(G,v)$,
we find a Turing reduction from instances $(G,v)$ to equivalent problems of iterated bijection, as follows.
Let $k$ be the number of bits in the bitstring $v$; by the way we have defined parameterized families of implicit graphs, all vertices in the connected component of $v$ in the implicit graph specified by $G$ have the same number $k$ of bits in their descriptions. We construct a polynomial-time bijection $f$ whose inputs and outputs are triples $(n,v,w)$ of $k$-bit values. In these triples~$n$ can be interpreted as a number modulo $2^k$, at least as large as the number of vertices in the component of $v$. The two remaining values $v$ and $w$ in these triples should be interpreted as describing adjacent vertices in the graph specified by $G$. However, to formulate this as a problem in $\iib$, the bijection $f$ that we construct
cannot depend on this interpretation: it must be capable of handling values $v$ and $w$ that do not specify vertices, or that specify vertices that are not adjacent. We compute the value of $f$ according to the following case analysis:
\begin{itemize}
\item First, use $N$ to compute the neighbors $N(G,v)$ and $N(G,w)$ of $v$ and $w$. If either call returns a failure condition, or if the two vertices are not neighbors, return the input $(n,v,w)$ unchanged as the output.
\item If $n>0$, check whether $v$ has one neighbor. If so, return $(n+1\text{ mod }2^{k+1},v,w)$. However, if~$v$ has two neighbors, return $(n,v,w)$ unchanged.
\item In the remaining case, $n=0$. If $w$ has one neighbor, return $(1,w,v)$. Otherwise, $w$ has two neighbors, $v$ and another vertex $u$. In this case, return $(0,w,u)$.
\end{itemize}
If $v$ is a leaf vertex in the implicit graph described by $G$, and $w$ is its one neighbor, then iterating this function has the effect of walking in one direction along a path, waiting $2^k$ steps, and then walking in the same way in the opposite direction along the path, acting bijectively on all of the triples of values seen in this walk. If $v$ and $w$ are neighbors in a cycle of the implicit graph, then iterating this function starting from $(0,v,w)$ has the effect of walking from arc to arc around the cycle, with no waiting steps. For all of the remaining triples of values, this function acts as the identity. Therefore, in all cases it is bijective. Each step involves only two calls to the polynomial time function $N$, and simple case analysis, so it takes polynomial time to compute $f$.

We can solve a connected leaf problem with neighbor function $N$ and data $G,v$ by first using $N$ to find the neighbor $w$ of $v$ and then iterating the function $f$ constructed above for $2^k$ iterations starting from $(0,v,w)$ to produce another triple $(a,b,c)$, and finally returning~$b$. By the construction of $f$, iterating it will necessary reach the leaf at the other end of the component of $v$ in fewer than $2^k$ steps and then wait for $2^k$ steps while incrementing the first component of the triple modulo $2^k$ until it reaches zero again. If we iterate $f$ for exactly $2^k$ steps, the resulting triple $(a,b,c)$ will necessarily be part of this waiting stage of the dynamics of $f$, and the returned value $b$ will necessarily be the other leaf connected to $v$, as desired.
\end{proof}

\begin{theorem}
\label{thm:FPPS}
$\iib=\ib$.
\end{theorem}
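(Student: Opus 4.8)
The inclusion $\iib\subseteq\ib$ is already in hand as \cref{obs:fp-pspace}, so the task is to prove $\ib\subseteq\iib$. The plan is to exhibit a single connected leaf problem that is hard for $\ib$, apply \cref{thm:leaf-in-bif} to conclude that this problem lies in $\iib$, and then invoke the closure of $\iib$ under polynomial-time reductions to deduce that every problem in $\ib$ lies in $\iib$ as well. Combining the two inclusions then yields $\iib=\ib$.

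For the $\ib$-hard connected leaf problem I would appeal to the result of Bennett~\cite{Ben-84} and of Lange, McKenzie, and Tapp~\cite{LanMcKTap-JCSS-00}, in the graph-theoretic form stated by Papadimitriou~\cite{Pap-JCSS-94}: the configuration graph of a reversible, polynomial-space Turing machine is an implicit undirected graph of maximum degree two, since determinism gives each configuration at most one successor and reversibility at most one predecessor; starting from an initial configuration, which can be arranged to be a degree-one vertex, the opposite endpoint of its path component encodes the output of the machine. Because every $\mathsf{PSPACE}$ predicate, and hence bit-by-bit every function in $\ib$, can be computed by such a machine via Bennett's reversible simulation, finding the second leaf in this family of implicit graphs is hard for $\ib$ under polynomial-time (in fact many-one) reductions. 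This is exactly a connected leaf problem in the sense of our definition, with $G$ encoding the machine together with its input, and with the neighbor function $N$ computing the at most two neighbors of a configuration by simulating one step of the transition function forwards and one step backwards.

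By \cref{thm:leaf-in-bif}, this second-leaf problem belongs to $\iib$. Since $\iib$ — in any of its six equivalent presentations, by \cref{thm:6way} — is by definition closed under composition with polynomial-time Turing reductions, and the second-leaf problem is $\ib$-hard under such reductions, it follows that every problem in $\ib$ is also in $\iib$. Together with \cref{obs:fp-pspace} this establishes $\iib=\ib$.

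The main obstacle is checking that the cited completeness result really does deliver a connected leaf problem matching our exact definition: one must verify that the configuration graph is genuinely undirected and bivalent (a consequence of determinism plus reversibility), that $N$ runs in polynomial time (routine, since simulating a single tape step does), and that the designated start configuration can be taken to have degree one while the answer remains readable off the other leaf — using, if necessary, the standard device of prepending a move with no predecessor. One must also confirm that reducing an arbitrary $\ib$ computation to this problem respects the polynomial output bound built into $\ib$, so that polynomially many invocations (one per output bit, or a single invocation with the output written on the tape) suffice. All of these ingredients are present in the literature we cite; the only real care needed is in assembling them into the precise form required by the definition of a connected leaf problem.
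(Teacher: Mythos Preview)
Your proposal is correct and matches the paper's own argument essentially line for line: the paper's proof simply cites \cref{obs:fp-pspace}, \cref{thm:leaf-in-bif}, and the known $\ib$-completeness of the connected leaf problem from \cite{Ben-84,Pap-JCSS-94,LanMcKTap-JCSS-00}, and you have unpacked exactly those three ingredients. Your additional remarks about verifying that the reversible-Turing-machine configuration graph really fits the definition of a connected leaf problem are useful due diligence but do not diverge from the paper's approach.
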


\begin{proof}
This follows immediately from the fact that $\iib\subseteq\ib$ (\cref{obs:fp-pspace}), from \cref{thm:leaf-in-bif}, and from the known $\ib$-completeness of the connected leaf problem~\cite{Ben-84,Pap-JCSS-94,LanMcKTap-JCSS-00}.
\end{proof}

Because of this equivalence, from now on we will generally refer to this class by its conventional name, $\ib$, instead of the nonce name $\iib$.

Although producing the same output as Thomason's lollipop algorithm belongs to $\ib$, by \cref{thm:leaf-in-bif}, we do not know whether it is $\ib$-complete, just as we do not know whether finding an arbitrary second Hamiltonian cycle in a cubic graph is $\mathsf{PPA}$-complete.

\section{Reversible cellular automata}

A \emph{cellular automaton} has a finite set of states, and a periodic system of cells. For us, these cells will form one-dimensional or two-dimensional arrays; although it is common to treat these arrays as infinite, we will form finite computational problems by using arrays of varying size with periodic boundary conditions. A \emph{configuration} of the automaton assigns a state to each cell. The automaton is updated by simultaneously computing for each cell a new state, determined in a translation-invariant way as a function of the states of a constant number of neighboring cells. The resulting cellular automaton is \emph{reversible} if the transformation from one configuration to the next is a bijection. When a cellular automaton is reversible, its inverse transformation can also be described by a reversible cellular automaton~\cite{Hed-MST-69,Ric-JCSS-72}. A periodic array of reversible logic gates would define a reversible automaton whose reverse dynamics uses the same neighborhood structure, but other reversible cellular automata can have reverse neighborhoods that are much larger than the forward ones~\cite{Kar-LS-92}. Every one-dimensional or two-dimensional reversible cellular automaton can be defined by a rule with locally reversible steps, as would be obtained by an array of reversible gates, but for higher dimensions this remains unknown~\cite{Kar-UC-09}. Just as irreversible circuits can be simulated by reversible ones, irreversible cellular automata can be simulated by reversible ones at the cost of an increase in dimension~\cite{Tof-JCSS-77} or of the simulation becoming asynchronous~\cite{Mor-TCS-95}.

For a fixed reversible cellular automaton rule, a simulation should take as input an initial configuration $x$ and a number of steps $n$, and produce as output the configuration of the automaton after $n$ steps. We do not require this simulation to be performed by directly calculating the transformations from each configuration to the next; for instance, for the (non-reversible) Conway's Game of Life automaton, hashing techniques have been successful at running simulations using computation time substantially sublinear in the number of simulated steps~\cite{Gos-PD-84}. What is the complexity of simulating reversible cellular automata? 

\subsection{Billiard-ball model}

\begin{figure}[t]
\centering\includegraphics[width=0.8\textwidth]{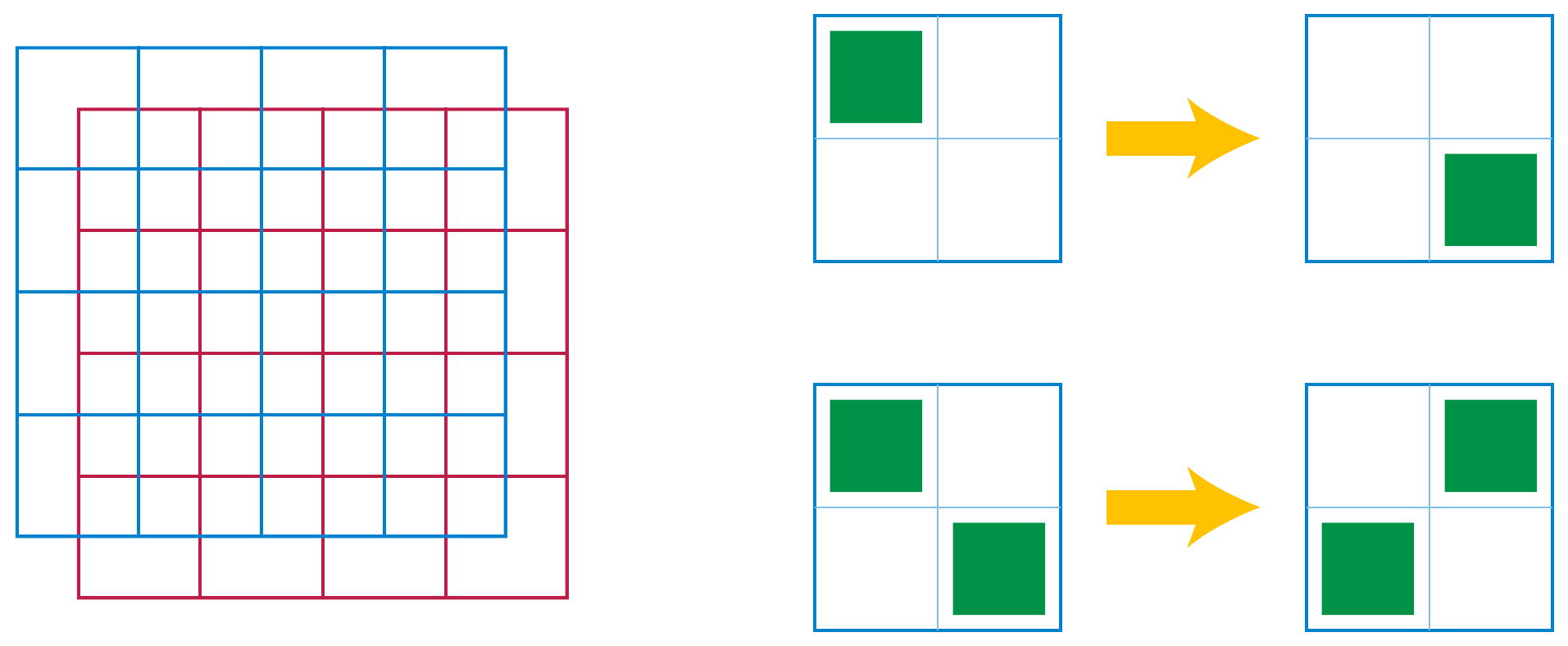}
\caption{The billiard-ball model. Left: the Margolus neighborhood breaks up the square grid of cells into $2\times 2$ square blocks in two alternating ways, as shown by the blue and red blocks. Right: Blocks with one live cell, or with two diagonal live cells, change in the ways shown; all other blocks remain unchanged.}
\label{fig:bbm}
\end{figure}

We will consider in more depth the \emph{billiard-ball model} or BBM block-cellular automaton, devised by Margolus to simulate reversible logic, universal Turing machines, and other reversible cellular automata~\cite{Mar-PD-84}. This cellular automaton uses the \emph{Margolus neighborhood}, in which the cells of a square grid are grouped into square blocks of four cells, in two alternating ways, with the corners of the square blocks in even generations of the automaton forming the centers of the square blocks in odd generations (\cref{fig:bbm}, left). Cells have two states, dead or alive. The transition function of the automaton acts independently within each square block. In a block with a single live cell, the updated state has again a single live cell in the diagonally opposite position. In a block with two diagonally placed live cells, all four cells change from live to dead or vice versa. All other blocks remain unchanged (\cref{fig:bbm}, right).

Patterns within this automaton can simulate any Fredkin-gate circuit. Wires of the circuit are simulated by diagonal paths along which travel signals consisting of pairs of live cells. Each Fredkin gate is simulated by fixed blocks of live cells that interact with these signals, causing them to change their paths in ways that match the behavior of the gate. Other fixed blocks of live cells can be used to bend or delay wires so that signals meet up in synchrony or cross without interacting. These issues of reversible circuit layout and synchronization in the BBM cellular automaton have been discussed in detail in multiple previous works, to which we refer the reader for details~\cite{Mar-PD-84,Hay-SA-84,Dur-CBC-02,Mor-TCS-08}. As well as reversible logic circuits, this automaton can simulate arbitrary Turing machines and other reversible cellular automata~\cite{Dur-CBC-02}. However, this work leaves open the question: what is the power of this system when the grid is not infinite, but of bounded size, with periodic boundary conditions?

\begin{theorem}
\label{thm:bbm-hardness}
Simulating a given number of iterations of a BBM pattern of bounded size, given as an array of initial cell states with periodic boundary conditions, is complete for $\ib$.
\end{theorem}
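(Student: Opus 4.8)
\emph{Membership} is immediate. One Margolus update acts independently and bijectively inside each $2\times 2$ block, so on a finite even-sized torus the map $f$ sending an encoded pair (torus dimensions, configuration) to (same dimensions, configuration after one BBM step, i.e.\ a pair of generations, one per block partition), and acting as the identity on bitstrings of the wrong shape, is a polynomial-time reversible function whose inverse is an equally simple block automaton. The problem in the theorem is exactly that of computing $f^{(n)}$ from $n$ and the initial array, an instance of $\iib_{\noredux,\reversible}$, so it lies in $\iib=\ib$ by \cref{obs:fp-pspace,thm:FPPS}.

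For hardness the plan is to reduce the $\ib$-complete circuit-iteration problem $R$ of \cref{obs:circuit-completeness} to BBM simulation. Given an instance $(C,n,x)$ of $R$, where $C$ is a reversible circuit on $k$ wires computing a bijection $\phi$ and the desired output is $\phi^{(n)}(x)$, the first step is to replace $C$ by a circuit $C'$ built only from Fredkin gates (with controlled-not gates eliminated in favour of Fredkin gates on reusable dummy wires, as in the remark following \cref{lem:invertible-to-reversible}). Since $\phi$ is polynomial-time computable, \cref{lem:invertible-to-reversible} produces in polynomial time such a $C'$, on $p(k)+k$ wires, with $C'(\bar 0,y)=(\bar 0,\phi(y))$ for the $p(k)$-bit padding block $\bar 0$; hence $C'^{(n)}(\bar 0,x)=(\bar 0,\phi^{(n)}(x))$, so iterating the honest bijection computed by $C'$ from the padded input simulates iterating $\phi$.

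The heart of the proof is then a \emph{toroidal feedback layout} of $C'$ in the BBM. Margolus's construction represents a Fredkin-gate circuit by diagonal signal wires carrying pairs of live cells, static blocks realizing each gate, and further static blocks that bend, delay, and cross wires, with every gadget returning to its original state once the signals acting on it have passed through~\cite{Mar-PD-84,Hay-SA-84,Dur-CBC-02,Mor-TCS-08}. I would lay out $C'$ this way but, instead of leaving the output wires free, route output wire $i$ back to input wire $i$ using bending gadgets to form a closed loop of wires, padding every wire with delay gadgets so that all signals traverse the circuit and return to the input positions in lockstep, with a single period $T$ polynomial in $|C'|$ (hence in $|C|$). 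The whole pattern fits in a polynomially large square, which we place on a torus of that size with an empty margin so that the wrap-around causes no spurious interaction. Running the BBM for $jT$ steps from the pattern whose gate and routing blocks are in place and whose loop carries the values $(\bar 0,x)$ at the input positions then leaves the gadgets restored and the loop carrying $C'^{(j)}(\bar 0,x)=(\bar 0,\phi^{(j)}(x))$. The reduction therefore maps $(C,n,x)$ to the BBM instance given by this initial array together with the step count $nT$ (a binary number of polynomial length even when $n$ is large), makes one oracle call, and reads $\phi^{(n)}(x)$ off the cells at the input positions of the returned configuration. This is a many-one reduction, and $R$ is $\ib$-hard under many-one reductions (\cref{obs:circuit-completeness}), so together with the membership argument the BBM simulation problem is $\ib$-complete under both many-one and Turing reductions.

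The main obstacle is the engineering described in the previous paragraph: verifying that the classical BBM gate and routing gadgets truly reset after each signal passage, so that the toroidal trajectory is exactly periodic and both block partitions leave the static background invariant; routing wire crossings so the two crossing signals do not interact; equalizing all path lengths with delay gadgets to obtain a single global period $T$; and confirming that periodic boundary conditions on a bounded torus can host the feedback loop without the wrap-around corrupting the computation. All of the underlying gadgetry is classical~\cite{Mar-PD-84,Dur-CBC-02}; what is new — and what turns Turing-completeness into $\ib$-completeness — is confining it to bounded space and closing the loop, so that one trip around the torus is exactly one application of the circuit.
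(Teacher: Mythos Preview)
Your proposal is correct and follows essentially the same route as the paper: membership via the fact that one BBM step is a polynomial-time invertible bijection, and hardness via a many-one reduction from the iterated reversible-circuit problem of \cref{obs:circuit-completeness}, realized by a polynomial-area BBM layout of the circuit with output wires bent back to inputs and delay gadgets inserted for synchrony, then simulated for $nT$ steps. The paper's proof sketches exactly this, citing the same layout and gadget literature and omitting the same engineering details you flag in your final paragraph. Your explicit detour through \cref{lem:invertible-to-reversible} to obtain a Fredkin-only circuit whose padding bits return to zero (so that the feedback loop can be iterated cleanly) is a point the paper leaves implicit; it is a reasonable elaboration rather than a different argument, though strictly speaking you are invoking the uniform, constructive content of that lemma's proof (applied to the input-dependent $\phi$ with inverse given by reversing $C$) rather than the lemma as literally stated for a fixed bijection.
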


\begin{proof}
The problem can clearly be expressed as an iterated bijection, as each step of the simulation is a polynomial-time invertible function. Therefore it is in $\ib$.

By \cref{obs:circuit-completeness}, simulating the behavior of a reversible logic circuit, with its outputs fed back into its inputs, for a given number of steps, is complete for $\ib$. We outline a many-one reduction from this problem to the simulation of BBM patterns, using previously-described ways of simulating circuit components in BBM. The reduction lays out the given circuit as a BBM pattern, including wire-bending and delay circuits that feed the output signals from the circuit back into the inputs, delayed so that the outputs all return to the inputs in synchrony. We use known methods for laying out circuits or other bounded-degree planar graphs onto grid graphs of polynomial area~\cite{Tam-SICOMP-87}, with the layout oriented diagonally with respect to the BBM cell grid, in accordance with the diagonal movement of the circuit signals. The circuit requires only a bounding box of size polynomial in the circuit size, requires only a polynomial amount of delay for re-synchronization (and a corresponding amount of area for the delay circuits), and performs all its simulations of the given reversible logic gates within a polynomial number of steps.

Hundreds of published NP-completeness proofs already follow this same approach of using orthogonal layouts of circuits (often, of circuits for 3-satisfiability problems) in their reductions; for a typical example, see \cite{MooRob-DCG-01}. The use of delay gadgets to correctly synchronize or desynchronize signals within the billiard-ball model is also standard~\cite{Dur-CBC-02}. Therefore, we omit the details of these constructions.

The output of the given circuit after a given number of iterations can be obtained by simulating the translated BBM pattern, with input signals set to match the inputs to the given circuit, for a number of steps equal to the product of the number of iterations for the circuit and the time for an input signal to return to the same point in the BBM pattern.
\end{proof}

\subsection{Other known universal reversible cellular automata}
The completeness of simulating other universal reversible cellular automaton rules would need to be considered case-by-case, depending on how the universality of those other rules has been proved. For instance, Toffoli~\cite{Tof-JCSS-77} transforms arbitrary non-reversible cellular automata of dimension $d$ into reversible automata of dimension $d+1$ by making the higher-dimensional automaton construct the entire time-space diagram of the lower-dimensional automaton. However, this also has the effect of increasing the space (number of cells) required for the higher-dimensional automaton to accurately perform this simulation, to be proportional to the product of space and number of simulated steps of the lower-dimensional automaton. Because the space bound for Toffoli's method is not polynomial in the space of the simulated automaton, this method cannot be formulated as a polynomial-time many-one or Turing reduction from one problem to another, and cannot be used for proving $\ib$-completeness. Similarly, Morita~\cite{Mor-JCA-07} has shown how to simulate cyclic tag systems by a finite pattern in a universal one-dimensional reversible cellular automaton, but the correct behavior of this automaton requires a number of cells proportional to the number of steps of the automaton, so that garbage states from the automaton do not wrap around into the part of the automaton used for describing the rules of the tag system. Again, this need for a number of cells that depends in some way on the time complexity of the simulated computation prevents this method from being used to prove $\ib$-completeness.

\subsection{Dimension reduction}

The $\ib$-completeness of the two-dimensional BBM automaton, and our failure to translate the existing universality proofs of one-dimensional reversible cellular automata into $\ib$-completeness, raise a natural question: can simulating a one-dimensional reversible cellular automaton be $\ib$-complete? We answer this question affirmatively, by providing a one-dimensional simulation of any two-dimensional Margolus-neighborhood reversible cellular automaton, using the following ingredients:

\begin{figure}[t]
\centering\includegraphics[width=0.6\textwidth]{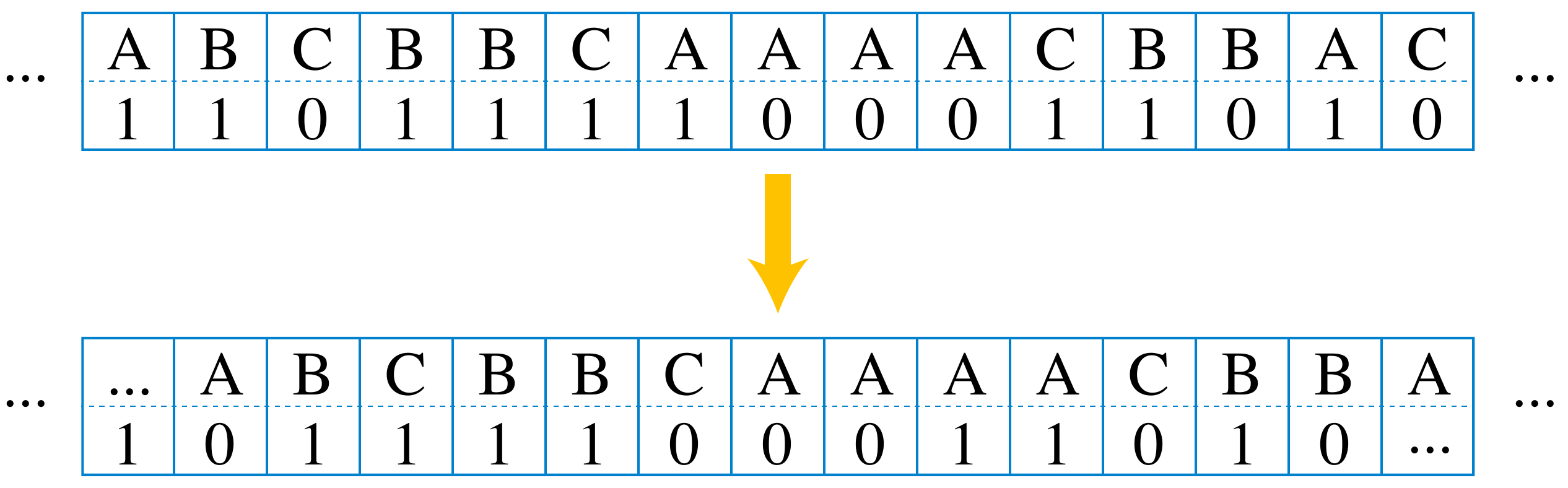}
\caption{A two-track automaton in which the update rule for the top track copies the left neighbor and the update rule for the bottom track copies the right neighbor.}
\label{fig:band}
\end{figure}

\paragraph{Tracks.} It will be convenient to think of each cell of a one-dimensional cellular automaton as being composed of multiple \emph{tracks}, each containing a finite state, with possibly different sets of states for different tracks. It is possible for each track to have an update rule that is independent of the states in other tracks, with the value of a single track in a cell computed as a combination of the values in the same track of neighboring cells.  \cref{fig:band} shows an example with two tracks, in which the update rule for the top track copies the left neighbor while the update rule for the bottom track copies the right neighbor, causing the states of the tracks to move relative to each other while remaining otherwise unchanged. (This example is from a family of reversible automata described by Boykett~\cite{Boy-TCS-04} as having an update rule that acts on the values of whole cells by combining the left and right neighbors using an algebraic structure called a rectangular band.) Alternatively,  the state of one track can control the update rule of another track. As long as these controlled update rules remain individually reversible, the whole automaton will again be reversible, with a reverse dynamics that computes the predecessor value of the controlling track and then uses it to control the update rule of the other track.

 The number of states of the whole cell is then the product of the numbers of states within each track. These numbers grow quickly, so the automata resulting from multi-track constructions will in general have many states. However, if we have a fixed number of tracks with a fixed number of states in each track, the total number of states remains finite, as is required for a cellular automaton.

 \paragraph{Partitioning automata.}
 
A general construction for one-dimensional reversible automata of Imai and Morita~\cite{ImaMor-TCS-96}
can be thought of as having three tracks per cell: a left track, center track, and right track.  The left and right tracks must have equal sets of available states; the states of the center track can differ. The update rule for the automaton performs two operations (as a single automaton step):
 \begin{itemize}
 \item Swap the value in the right track of each cell with the value in the left track of its right neighbor.
 \item Apply a bijective transformation to the state of each cell (the combination of the states of all three of its tracks), independently of the states of its neighbors.
 \end{itemize}
 All one-dimensional cellular automata defined in this way are automatically reversible. The reverse dynamics can be described similarly, as applying the inverse bijective transformation and then swapping values in the same way. Although this is also a cellular automaton, it is not a partitioning automaton of the same type, because the bijection and the swap are performed in a different order.

\paragraph{Firing squad synchronization.}

We use a reversible solution to the firing squad synchronization problem found by Imai and Morita~\cite{ImaMor-TCS-96}.
 
 \begin{lemma}[Imai and Morita~\cite{ImaMor-TCS-96}]
 \label{lem:firing-squad}
There is a one-dimensional reversible cellular automaton with the following behavior. First, its states can be partitioned into three sets: quiescent, active, and firing. Second, a cell that is quiescent remains quiescent throughout the evolution of the automaton; therefore, the behavior of any pattern can be described purely by considering its contiguous subsequences of non-quiescent cells. Third, for every $k$ there exists a pattern $P_k$, consisting of $k$ cells in active states, bounded on both sides by quiescent cells, with the following property: for all $0\le i<3k$, the pattern resulting from $P_k$ after $i$ steps consists only of active states, but the pattern resulting from $P_k$ after exactly $3k$ steps consists only of firing states.
 \end{lemma}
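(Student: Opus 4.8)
The plan is to reconstruct the construction of Imai and Morita~\cite{ImaMor-TCS-96}, which realizes a classical signal-based divide-and-conquer solution to the firing squad problem inside the reversible partitioning-automaton framework described above; since the lemma is quoted directly from their paper one could simply cite it, but it is worth sketching the argument because the same ``make the signal dynamics reversible by retaining history'' idea recurs in the dimension-reduction results that follow.

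First I would recall the non-reversible template. Treating the two quiescent/active boundaries of the band as generals, one launches from each end a fast signal that moves one cell per step and a slow signal that moves one cell per three steps; a fast signal reflects off the opposite boundary (detectable, in a partitioning automaton, because after the swap step a boundary cell receives a quiescent track from its quiescent neighbor), and the cell where a reflected fast signal meets an oncoming slow signal is the midpoint of the band, where a new pair of generals is spawned. Recursing on the two subbands, with a base case on bands of one or two cells that enters the firing state, produces global synchronization; because every signal only ever bounces between the two original quiescent boundaries, the non-quiescent region neither grows nor shrinks, so $P_k$ stays a block of exactly $k$ active cells until the moment of firing, and a routine accounting of the reflection and meeting times shows that, with the speeds and offsets chosen as in the construction, every cell first fires at step exactly $3k$.

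Second I would implement this template as a partitioning automaton of the Imai–Morita type. Each cell gets a bounded number of tracks: a left track and a right track carrying the left- and right-moving signals (moved by the swap step, which shifts the right track of a cell into the left track of its right neighbor), together with a center track recording which signals and generals are currently present. The local bijection applied after each swap updates the center track: it reflects a signal at a boundary, creates a new general where a reflected fast signal coincides with a slow signal, emits the outgoing signals of a freshly created general, advances the phase of a slow signal, and triggers the firing state at the base case. The quiescent state is the all-quiescent combination of tracks, which is a fixed point of the local bijection, and the swap maps quiescent tracks to quiescent tracks, so no non-quiescent state is ever produced next to an all-quiescent region; this gives the ``quiescent stays quiescent'' property immediately.

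The main obstacle, and the real content of Imai–Morita's work, is reversibility: signal creation, reflection, meeting, and the transition to firing are all destructive in the naive description, so the local transformation must be engineered as a genuine bijection on the finite cell-state set. The fix is to arrange every event so that the information needed to run the dynamics backwards — which signals met, at which cell a general was born, the phase of each slow signal, and, for a firing cell, enough to name its predecessor — is still encoded in the tracks after the event; one then checks, by a finite case analysis over the possible combinations of track states, that the chosen local map is injective. Since the swap step is trivially bijective and the whole rule lies in the Imai–Morita class of partitioning automata, which are reversible by construction, the resulting automaton is reversible, completing the proof.
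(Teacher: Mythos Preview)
The paper does not prove this lemma; it is cited as a result of Imai and Morita. What the paper does provide, in the paragraph immediately following the lemma, is an informal description of how their automaton operates, because specific structural features of that construction are needed for the subsequent \cref{obs:noq}.

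Your sketch is in the right spirit---partitioning automaton, left/center/right tracks, divide-and-conquer synchronization, reversibility obtained by retaining enough information in the tracks to invert every local event---but it deviates from the paper's description in one point that matters downstream. You place generals at both boundaries and launch signals from each end; the paper (following Imai--Morita) has a single general at one end of the initial pattern, which emits a fast and a slow signal in one direction and then transitions to a ``waiting'' state that ignores anything arriving from the other side. New generals spawned at midpoints likewise enter one-sided waiting states. These waiting cells are precisely what the paper exploits in \cref{obs:noq}: because a waiting cell passes its non-listening track through unchanged, the recursion boundaries seal each subpattern off from its surroundings, allowing one to extract a length-$k$ active block that fires after $3k-O(1)$ steps regardless of what lies outside it. Your symmetric two-ended variant has no obvious analogue of these one-sided barriers, so while it could presumably be tuned to establish the lemma as stated, it would not directly support the next step in the paper's argument.
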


Some additional detail on how this firing squad computation works will be important. It is a partitioning automaton, where the center track holds the state of each cell: quiescent, active, and firing, with additional information about several more specific types of active cells. On quiescent cells, the bijective transformation of the partitioning automaton is the identity. The left and right tracks hold ``signals'' that move leftwards or rightwards through the automaton, interacting with other cells as they do. The leftmost cell of the initial pattern $P_k$ is an active cell in a ``general'' state, with the remaining active states being ``soldiers''. The general sends out two signals to its left, and transitions to a ``waiting'' state, waiting for a signal to return from the left. The faster of the two signals sent out in the same direction by the general bounces off the boundary of the pattern, and meets the slower signal in the middle of the pattern. When the signals meet, their interaction produces two new generals in central cells, one sending signals to the left and the other to the right, which again transition to a waiting state, waiting for a signal to return from the direction it was sent. In this way, the pattern is split recursively into two subpatterns which behave in the same way, recursively splitting into smaller subpatterns, until at the base level of the recursion all of the constant-length patterns fire simultaneously.

An important consideration, for our purposes, is the behavior of the ``waiting'' states. In a cell waiting for a signal from the left, the bijective transformation acts as the identity on the right track, and this track otherwise does not affect the behavior of the cell. The cells waiting for signals from the right are symmetric. Effectively, these states partition the pattern into parts that do not interact with each other, without the need for quiescent states. If the initial pattern of Imai and Morita is run for $3k/2+O(1)$ steps, it reaches a configuration in which the leftmost active cell is waiting on a signal from its right, and a central active cell is waiting on a signal from its left. The pattern of length $k/2$ between these two waiting cells will then fire in $3k/2-O(1)$ more steps, regardless of any modification to the states in any other part of the pattern, because these two waiting cells block all interaction from other parts of the pattern. We can eliminate the division by two in these formulas by doubling the size of the initial pattern, and formalize this as the following observation:
 
 \begin{observation}
 \label{obs:noq}
 For the Imai--Morita firing squad automaton, for all $k$, there exist patterns of $k$ consecutive active cells that will remain active for $3k-O(1)$ steps and then simultaneously fire, regardless of how the cells outside this pattern are initialized. Additionally, these patterns can be constructed in time polynomial in~$k$.
\end{observation}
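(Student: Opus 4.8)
The plan is to produce the claimed patterns by running Imai and Morita's firing squad automaton on their pattern $P_{2k}$ for an appropriate $O(k)$ number of steps and then extracting the contiguous length-$k$ window that has, by that time, become self-contained. Concretely, I run $P_{2k}$ for $t_0 = 3k + O(1)$ steps. By the recursive structure described above, at this time the original general at the left end of $P_{2k}$ is in a state waiting on a signal from its right, a cell near the center of $P_{2k}$ is in a state waiting on a signal from its left, and the block $Q_k$ of $k$ consecutive cells lying between these two waiting cells (inclusive) consists entirely of active states. I take $Q_k$ to be the claimed pattern; it is a pattern of $k$ consecutive active cells.

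Three properties must then be verified. First, that $Q_k$ is all active at time $t_0$: this is immediate from \cref{lem:firing-squad}, since $t_0 < 3\cdot(2k) = 6k$, so no cell of $P_{2k}$ --- and in particular no cell of $Q_k$ --- has yet fired. Second, that the evolution of the $k$ positions occupied by $Q_k$ does not depend on how the cells outside $Q_k$ are initialized. Here I would isolate the crucial property of the waiting states as a short lemma about the partitioning automaton: a partitioning step swaps the right track of each cell with the left track of its right neighbor and then applies a cellwise bijection, and in a cell waiting on a signal from its right this bijection is the identity on the left track while the center and right tracks evolve without reference to the left track, so the value deposited into the left track from outside is simply discarded; symmetrically for a cell waiting on a signal from its left and its right track. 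Hence the two waiting cells at the ends of $Q_k$ form a two-sided barrier, and by induction on the number of steps the restriction of the configuration to the positions of $Q_k$ is a function of its value at time $t_0$ alone. Third, that $Q_k$ remains active for $3k - O(1)$ further steps and then fires simultaneously: applying the analysis recalled above to $P_{2k}$ in place of $P_k$, the $k$ cells between the two waiting cells carry out their own size-$k$ firing-squad process, which by \cref{lem:firing-squad} stays active and then fires after $3k - O(1)$ steps (the $O(1)$ arising from the offset between $t_0$ and the exact split time), and by the barrier property this is unaffected by the surrounding cells.

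For polynomial-time constructibility, $P_{2k}$ has an explicit $O(k)$-size description and I simulate the automaton directly: each step is a swap of adjacent tracks followed by a constant-time cellwise bijection applied to the $O(k)$ non-quiescent cells, so simulating $t_0 = O(k)$ steps takes $O(k^2)$ time, after which $Q_k$ is read off from the length-$k$ window.

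The step I expect to be the main obstacle is making the barrier argument of the second property fully rigorous against the published description of Imai and Morita's construction: one must confirm that their waiting states decouple the two tracks exactly as stated, that the left general of $P_{2k}$ and the relevant central cell are genuinely in such states at time $3k + O(1)$, and that no signal internal to $Q_k$ ever tries to cross one of these barriers outward in a way that would reintroduce a dependence on the cells outside. These facts are all asserted in the discussion preceding this observation and follow from the structure of the Imai--Morita automaton, but verifying them requires a careful reading of that construction rather than any new idea.
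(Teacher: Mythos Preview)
Your proposal is correct and follows essentially the same approach as the paper: the paper's justification for this observation is the informal discussion immediately preceding it, which likewise runs the Imai--Morita pattern of doubled length $P_{2k}$ for $3k+O(1)$ steps, identifies the leftmost (waiting-on-right) and central (waiting-on-left) cells as barriers, and takes the length-$k$ block between them as the desired pattern. Your write-up simply expands the barrier argument and the polynomial-time simulation into more explicit detail than the paper provides.
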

  
\paragraph{Strobed synchronization.} We will need to use update rules that, every $k$th step for a variable numerical parameter $k$, perform a different step than the usual computation in the other steps. This can be thought of by analogy to a strobe light, which provides brief flashes of one condition (bright light) interspersed with longer periods of a different condition (darkness).  This is not something that can be directly defined into the behavior of a cellular automaton, because directly storing the number of the current step modulo~$k$ would use a number of bits of information that is logarithmic in~$k$, rather than being encodable into a finite state. Instead, we will simulate this behavior by using tracks that perform the firing squad synchronization computation of Imai and Morita, repeating spatially. We say that a state of one-dimensional cellular automaton is \emph{spatially repeating} with pattern $P$ and period $k$ if $P$ is a sequence of automaton states of length $k$ and the state is formed by concatenating an infinite sequence of copies of $P$. (Such a state will also be repeating for any period that is a multiple of $k$.) The states of such an automaton continue repeating with the same period for all subsequent time steps. They have the same behavior as an automaton run with the same rules on a finite cycle of cells of length $k$ containing a single copy of $P$. (Connecting the start and end of $P$ in this way to form a cycle of cells is commonly referred to as using \emph{periodic boundary conditions}.) 
   
 \begin{lemma}
 \label{lem:strobe-automaton}
 There is a one-dimensional reversible cellular automaton with the following behavior. Its states can be partitioned into two sets: active, and firing. For every $k$ and $t$ with $1<t\le3k-O(1)$ there exists a pattern $P_{k,t}$, consisting of $k$ cells in active states, with the following property: for all $i$, the spatially repeating state with pattern $P_{k,t}$ and period $k$ consists only of firing states at time steps $i$ where $i$ is a multiple of~$t$, and consists only of active states at other time steps.
 \end{lemma}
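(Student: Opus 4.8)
The plan is to build this automaton by tiling the cycle of $k$ cells with many identical copies of an Imai--Morita firing squad, each tuned to fire after exactly $t$ steps, and then to modify the update rule at the instant of firing so that the all-firing configuration is carried back to the tiled starting pattern, making the whole evolution periodic. Because an Imai--Morita firing squad is a partitioning automaton, and because \emph{any} automaton built in the partitioning style is automatically reversible no matter which bijection is applied to each cell, reversibility will come for free; essentially all of the work is in arranging the timing and the locality of the reset.

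Concretely, I would first fix the tile size. By \cref{obs:noq} there is an active pattern on $m$ cells, constructible in time polynomial in $m$, that stays active for $3m-O(1)$ steps and then fires simultaneously, and that does so regardless of the states of the surrounding cells. Choosing $m=\Theta(t)$ and prepending to the general a short delay gadget of $O(1)$ cells (so the general idles a constant number of steps, chosen according to $t\bmod 3$ and the hidden constant, before launching its signals), I can make each tile fire after exactly $t$ steps. I then tile the $k$-cycle with such blocks; to make blocks of size $\Theta(t)$ cover exactly $k$ cells I use a small number of distinct block sizes near $m$, each with its own delay gadget so that all of them still fire at step $t$, which is possible by an elementary Frobenius-type argument because $t\le 3k-O(1)$ forces $m\lesssim k$ (when $t$ is close to $3k$ a single block equal to the whole cycle suffices, and the handful of smallest values of $t$ are handled by an ad hoc automaton). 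Since all blocks are identical and start in phase, and since each one fires at step $t$ no matter what its neighbours do, the configuration is all-firing at every time step that is a multiple of $t$ and consists only of active firing-squad states at every other step.

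To make the dynamics genuinely periodic, I would mark the two ends of each tile with distinguished persistent ``boundary'' states that the firing-squad interior treats merely as part of its environment (so \cref{obs:noq} still applies), and I would carry through the computation, on an auxiliary track that never changes, a record of each cell's initial state within its tile. When a cell fires it enters a firing state tagged with this recorded initial state; the update rule on firing states then simply restores each cell to its recorded initial state and clears the firing tag. This is a purely local operation and a bijection on the states that actually occur, so implementing it in the partitioning style yields a bona fide reversible cellular automaton whose orbit of the tiled configuration $P_{k,t}$ is a single cycle of length exactly $t$ (after absorbing the $O(1)$ slack into the delay gadgets) that visits an all-firing configuration precisely at the multiples of $t$; $P_{k,t}$ is the explicit tiled pattern and is plainly constructible in time polynomial in $k$.

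The hard part will be reconciling three requirements that pull against one another: the tiles must fire independently yet in perfect synchrony, the reset from the all-firing configuration back to $P_{k,t}$ must be a finite-state local rule, and the composite rule must be globally reversible. Synchrony is bought from the ``regardless of the surrounding cells'' guarantee of \cref{obs:noq}; locality of the reset is bought by the auxiliary memory track, so that a firing cell recovers its target from data already stored in that cell rather than from any global structure; and reversibility is bought by expressing everything as a partitioning automaton, for which the only thing to verify is that the per-cell transformation, including the reset on firing states, is a bijection on the finitely many states that occur. The rest -- choosing block sizes, delay-gadget lengths, and the extra sub-states attached to the firing states -- is routine bookkeeping of exactly the kind already used in \cref{lem:firing-squad} and \cref{obs:noq}, and I would omit its details just as those were omitted.
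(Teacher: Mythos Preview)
Your approach differs substantially from the paper's and has a genuine gap in the reversibility argument. You assert that for a partitioning automaton ``the only thing to verify is that the per-cell transformation, including the reset on firing states, is a bijection on the finitely many states that occur.'' That is not correct: a partitioning automaton is reversible if and only if its per-cell map is a bijection on the \emph{entire} state set, not merely on the states visited by one orbit. Your reset sends every cell whose main track is firing to the value stored on its static memory track; but in the Imai--Morita automaton ``firing'' is a center-track condition, so for a fixed memory value $m$ there are many main-track states $(l,\text{fire},r)$, and all of them are mapped to $(m,m)$. The rule is therefore not injective, and this cannot be repaired by completing it on unused states. Even if you restricted the reset to a single distinguished firing state, its image $(m,m)$ would still collide with the image of $(F^{-1}(m),m)$ under the ordinary Imai--Morita step, and you give no argument that this latter pair is absent from the orbit (nor is it clear how a uniform rule could redirect it, since $m$ varies from cell to cell).

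The paper sidesteps this by replacing your static memory track with a \emph{second, dynamically evolving} copy of the firing squad. It runs one copy forward on three top tracks and another backward on three bottom tracks, and when the top copy is firing it simply swaps top and bottom. A swap is trivially a bijection, and because the bottom copy runs $F^{-1}$ it arrives at the configuration $t-1$ steps before firing exactly when the top copy next fires, giving period $t$ with a single length-$k$ pattern---no tiling into sub-blocks, no reversible delay gadgets, and no Frobenius-type divisibility bookkeeping. The paper does still have to complete its per-cell map on a residual set of states (those that would become ``bottom-lit'' under $(F,F^{-1})$), but it identifies these states explicitly, observes that the undefined successors and undefined predecessors are equinumerous, and checks that none of them arise from the intended initial pattern---precisely the verification your sketch omits.
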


\begin{proof}
We use an automaton with six tracks (top left, top center, top right, bottom left, bottom center, and bottom right). We use a variation of partitioning automaton dynamics in which, in each step, we perform the following three steps:
\begin{enumerate}
\item Each cell exchanges the value in its top right track with the value in the top left track of its right neighbor, as in the forward partitioning automaton dynamics.
\item We apply a bijective transformation to the combination of the six values in the six tracks, independently for each cell.
\item Each cell exchanges the value in its bottom right track with the value in the bottom left track of its right neighbor, as in the reverse partitioning automaton dynamics.
\end{enumerate}
As in a standard partitioning automaton, the resulting automaton will automatically be reversible.
We define the available states for these tracks to be the same as for the left, right, and center tracks of the firing-squad automaton of \cref{lem:firing-squad}, except that we do not use quiescent states. Let $F$ be the bijective transformation used to define this firing-squad rule.  We use the following rules to define the transformations of our six-track automaton.
\begin{itemize}
\item We call a state that has a firing top center cell and an active bottom center cell ``top-lit'', and we call a state that has a firing bottom center cell and an active top center cell ``bottom-lit''. We define the bijective transformation on a top-lit cell to swap the values in its top and bottom tracks, producing a bottom-lit cell.
\item For a cell that is not top-lit, and for which the result of the transformation would not be bottom-lit, we define the bijective transformation to apply $F$ to the three top tracks and $F^{-1}$ to the three bottom tracks.
\item The definitions above leave undefined the successors of cells for which the $(F,F^{-1})$ transformation would be bottom-lit. They also leave undefined an equal number of predecessors of cells, the ones that would be reached by an $(F,F^{-1})$ transformation on a top-lit cell. Pair these missing successors and predecessors arbitrarily to form a bijective transformation. (These transitions will not be used by the patterns we construct, so the behavior of the automaton for cells of these types is unimportant in achieving the desired behavior, but it still needs to be a bijection in order to define a reversible automaton.)
\end{itemize}
Given these rules, initialize the top tracks of the automaton with the pattern of \cref{obs:noq}, for parameter $k$, in the all-cells-firing configuration obtained after $3k-O(1)$ steps of the firing-squad automaton. Initialize the bottom tracks of the automaton with the configuration $t-1$ steps earlier.

For this pattern, all cells begin top-lit, and the automaton will immediately swap the top and bottom tracks of all cells. Then, for the next $t-1$ steps, it will follow the forward evolution of the Imai--Morita firing squad automaton on the top tracks, and the reverse evolution on the bottom tracks, reaching the same configuration that it started with and repeating the same behavior. These steps cannot reach one of the arbitrary transitions in the final bullet of our transition rule definition, because that could only happen when the result of an $(F,F^{-1})$ transformation would be bottom-lit, and by construction the $t-1$ predecessors of the all-firing bottom-track state are not themselves firing.
\end{proof}
  
 \begin{lemma}
 \label{lem:strobe}
 Let $R_1$ and $R_2$ be two different update rules for reversible cellular automata, both operating on the same finite set $S$ of states. Then there is a one-dimensional reversible cellular automaton with $O(|S|)$ states that can simulate the following behavior, for every positive integer $t$: given an array of cells with states in $S$, update the cells in the array using rule $R_1$ on every $t$th step, and using rule $R_2$ on all other steps. Further, the automaton that simulates this behavior can be made to be spatially repeating (necessarily causing the simulated behavior to also be spatially repeating) for all periods larger than $t/3+O(1)$ and all patterns with the given period.
 \end{lemma}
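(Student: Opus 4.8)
The plan is to simulate the strobe-controlled evolution by running the firing-squad-based automaton of \cref{lem:strobe-automaton} on a separate bundle of tracks, alongside the simulated states, and letting the current strobe state select which of $R_1$ and $R_2$ to apply in each step. Concretely, I would build a one-dimensional automaton whose cells carry the six tracks of the strobe automaton of \cref{lem:strobe-automaton} together with enough additional tracks to hold one element of $S$ per cell. To simulate a run on a cyclic array of length $k$ holding a prescribed $S$-content, initialize the strobe tracks with the spatially repeating state of pattern $P_{k,t}$ and period $k$, and the remaining tracks with the given $S$-content (also repeated with period $k$). Such a $P_{k,t}$ exists exactly when $1<t\le 3k-O(1)$, i.e.\ whenever $k$ exceeds $t/3+O(1)$, which is the claimed range of periods; the construction of \cref{lem:strobe-automaton} (building on \cref{obs:noq}) makes it constructible in polynomial time, and the case $t=1$ is the trivial one of just running $R_1$. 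One step of the new automaton advances the strobe tracks by the rule of \cref{lem:strobe-automaton} and advances the $S$-tracks by $R_1$ when the strobe is in a firing state and by $R_2$ when it is active. Because the strobe tracks evolve independently of the $S$-tracks, are globally firing precisely at the multiples of $t$ and globally active otherwise, and stay spatially periodic forever, the $S$-tracks undergo exactly the prescribed strobed evolution. The strobe tracks contribute only a constant number of states, so the total is $O(|S|)$.

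The delicate point — and the step I expect to be the main obstacle — is that the new automaton must be a bona fide reversible cellular automaton, meaning its global map is a bijection on \emph{all} configurations, not merely on the well-formed ones where the strobe tracks are globally synchronized. The naive recipe ``apply $R_1$ at firing cells and $R_2$ at active cells'' is not injective when $R_1$ and $R_2$ disagree and the firing/active labels are mixed along the array, so reversibility cannot simply be asserted. I would obtain it by casting the new automaton in the partitioning form used in the proof of \cref{lem:strobe-automaton}: each step consists of a fixed swap of tracks between neighboring cells — an involution, hence reversible regardless of the configuration — followed by a per-cell bijection that reads only that cell's strobe state and, on that basis, applies one of two per-cell bijections, one encoding a step of $R_1$ and the other a step of $R_2$. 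This requires first rewriting $R_1$ and $R_2$ themselves in the ``neighbor swap, then per-cell bijection'' shape of a partitioning automaton, possibly using a bounded number of sub-steps per simulated step; this is possible for any reversible one-dimensional rule, and since $R_1$ and $R_2$ are fixed rules on $S$ the track sizes can be arranged so that the overall state count remains $O(|S|)$, although verifying this bound is exactly where care is needed. Once the automaton is presented in this form it is automatically reversible, its inverse being: recover the previous strobe configuration (\cref{lem:strobe-automaton} is reversible), use it cell-by-cell to select and undo the appropriate per-cell bijection, and finally undo the swap.

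The remaining obligations are routine. Spatial repetition of the simulated behavior needs no separate argument: a cellular-automaton rule is translation invariant, so a period-$k$ initial configuration evolves through period-$k$ configurations, and the period bound is inherited unchanged from \cref{lem:strobe-automaton}. Decoding the tracks, reconciling the neighborhood radii of $R_1$, $R_2$, and the strobe automaton, and checking that the firing events of \cref{lem:strobe-automaton} line up (up to a harmless constant offset) with the steps on which $R_1$ is to act, are all mechanical once the structure above is in place.
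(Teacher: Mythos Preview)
Your high-level approach matches the paper's exactly: run the strobe automaton of \cref{lem:strobe-automaton} on a separate bundle of tracks, carry the $S$-content on one more track, and let the firing/active status select between $R_1$ and $R_2$; the period constraint $k>t/3+O(1)$ is inherited from \cref{lem:strobe-automaton} just as you say.

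Where you diverge is in the reversibility argument. The paper does \emph{not} rewrite $R_1$ and $R_2$ in partitioning form. Instead it exploits the specific structure of the strobe automaton built in \cref{lem:strobe-automaton}: a top-lit cell (top center firing, bottom center active) becomes bottom-lit after one step. The paper then simply declares the forward rule to apply $R_1$ at top-firing cells and $R_2$ elsewhere, and the reverse rule to apply $R_1^{-1}$ at bottom-firing cells and $R_2^{-1}$ elsewhere. This is terser and keeps the $O(|S|)$ state bound immediate, since nothing is added beyond the constant-size strobe tracks. But it is also sketchier: the paper does not actually argue that this forward/reverse pair is globally bijective on configurations where the strobe tracks are not synchronized, which is precisely the obstacle you correctly identify. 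Your partitioning recast is a more honest way to secure reversibility, at the cost of the state-count verification you flag. In the paper's only use of this lemma (\cref{thm:dim-redux}), the rules $R_1$ and $R_2$ are already block/partitioning rules (a Margolus update plus a track swap, and a rectangular-band slide), so either argument goes through without difficulty there.
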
 
 
 \begin{proof}
 We simulate this behavior using a multi-track cellular automaton, where one track $T_1$ contains the states from $S$ in the simulated array of cells, and the remaining tracks perform the operations of \cref{lem:strobe-automaton} regardless of the values in track $T_1$.  We will apply rule $R_1$ to a cell of track $T_1$ whenever the state of the remaining tracks is top-firing, and rule $R_2$ otherwise. In the reverse dynamics, rule $R_1^{-1}$ should be applied whenever the state of the remaining tracks is bottom-firing, and otherwise $R_2^{-1}$ should be applied.
 
We initialize the tracks for the automaton of \cref{lem:strobe-automaton} according to that lemma, with the same parameter~$t$, spatially repeating with a suitable period $k$. If $S$ is spatially repeating then $k$ should be chosen to be the same period; otherwise, $k$ can be chosen arbitrarily to meet the conditions of \cref{lem:strobe-automaton}.
\end{proof}
 
 \paragraph{Helical boundary conditions.} The hard-to-simulate patterns in the BBM automaton remain confined to within their bounding box, rather than expanding beyond it. This confinement means that, for a grid of cells with a sufficient constant-sized margin beyond the bounding box, the boundary conditions of the grid are irrelevant: any reasonable choice of boundary conditions will lead to the same evolution. Therefore, we are free to choose boundary conditions that are easy to transform into a one-dimensional automaton, rather than having some particular choice forced on us.
 
 \begin{figure}[t]
 \centering\includegraphics[width=0.4\textwidth]{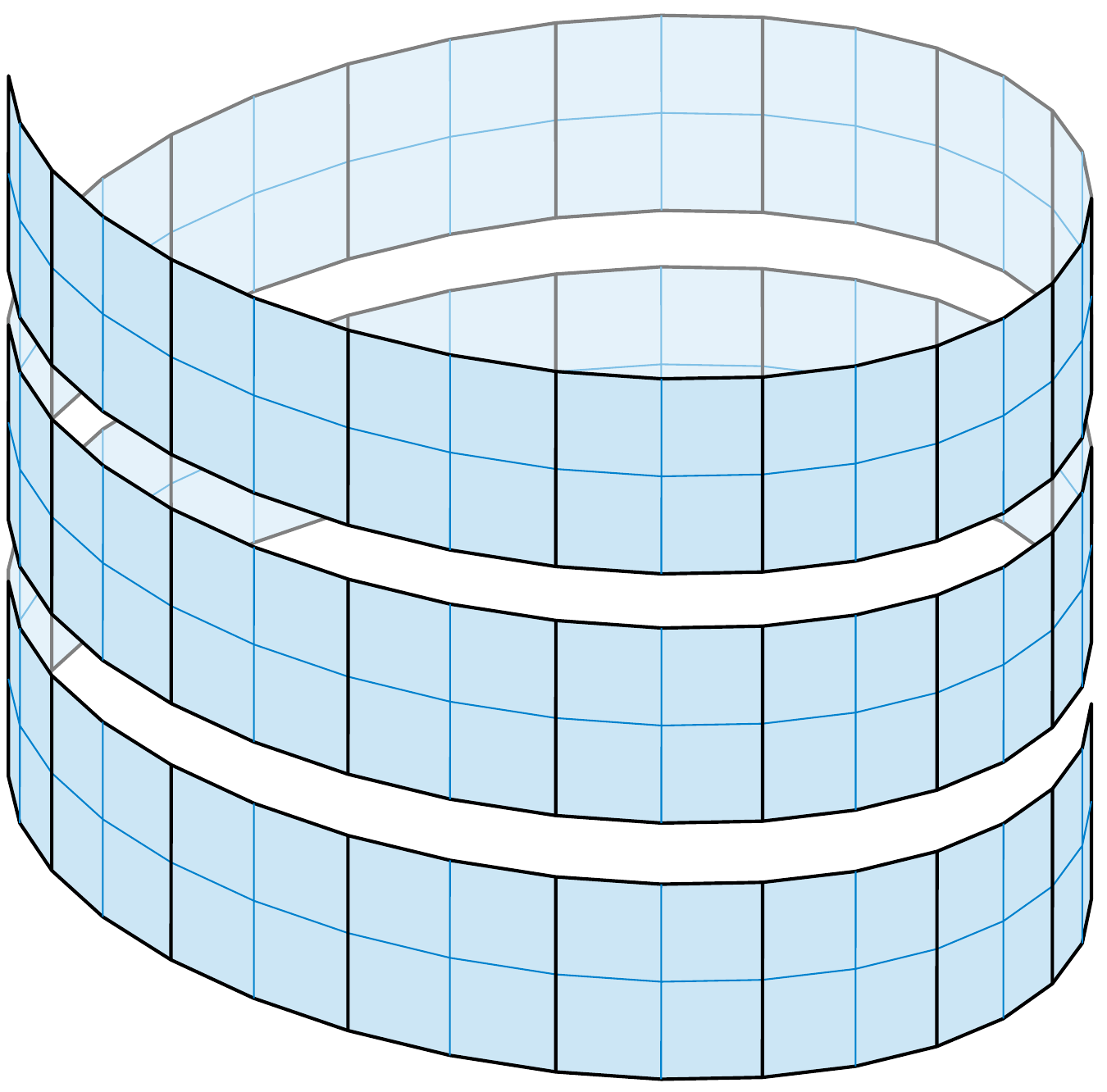}
 \caption{Helical boundary conditions for the two-dimensional Margolus neighborhood (shown here with circumference 32 in an exploded view with spacing between rows of squares) transform its behavior for a single time step into that of a two-track one-dimensional cellular automaton.}
 \label{fig:helical}
 \end{figure}
 
In a single update of the Margolus neighborhood used by the BBM automaton, the cells are partitioned into four-cell squares, and only interact with each other within these squares. Consider an arrangement of these four-cell squares into an infinite horizontal strip. We define \emph{helical boundary conditions} for this strip, with \emph{circumference} $c$, by adding vertical connections from each of these four-cell squares upwards to the square offset from it by $c$ units leftward along the strip, and downward to the square offset from it by $c$ units rightward, as depicted in $\cref{fig:helical}$. In order to get the squares to line up, we require that $c$ be even. This creates a pattern of cell connectivity that locally (within regions of width less than $c$) is indistinguishable from the infinite square grid, although globally it has the topology of a cylinder, not the same as a grid. If we map a system of cells, connected in this way, onto a two-track one-dimensional automaton, in which each cell of the one-dimensional automaton holds two cells of the Margolus neighborhood, it will be easy for the one-dimensional automaton to simulate the updates in the Margolus neighborhood that use four-cell squares in odd-numbered steps, aligned with the given strip. However, the updates in even-numbered steps combine information from cells $c$ units apart from each other in the strip, and it is not obvious how to perform those updates using a one-dimensional automaton whose neighborhood size does not depend on~$c$. Our eventual solution to this problem will use strobing synchronization to permute the cell states into a position where interacting cells are again adjacent within the one-dimensional strip.

We define \emph{toroidal boundary conditions} of circumference $c$ and period $p$ (requiring $p$ to be even and greater than $c$) by using both periodic boundary conditions of period $p$ for the one-dimensional strip of four-cell squares, and helical boundary conditions to define vertical neighbors of each square. The resulting system of cells is again locally (within regions of width less than $c$ and height less than $p/c$) indistinguishable from the infinite square grid, although globally it has the topology of a torus.

With the pieces we need all defined, we are now ready to describe our one-dimensional simulation of two-dimensional Margolus-neighborhood reversible automata.

\begin{theorem}
\label{thm:dim-redux}
Every reversible cellular automaton with the two-dimensional Margolus neighborhood and with $s$ states per cell, running on a system of cells with helical boundary conditions with any even circumference $c$, can be simulated by a one-dimensional cellular automaton with $O(s^2)$ states per cell, with a system of states and an update rule that does not depend on $c$, and with $c/2+1$ steps of the one-dimensional automaton for every simulated step of the two-dimensional automaton. The simulated automaton can be made to have toroidal boundary conditions for any period larger than the circumference, giving the one-dimensional automaton periodic boundary conditions with the same period.
\end{theorem}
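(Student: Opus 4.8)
The plan is to assemble the one-dimensional automaton out of the ingredients developed above: the packing of a helical two-dimensional configuration into a strip (\cref{fig:helical}), a ``band''-type track that can be slid past the rest using dynamics like those of \cref{fig:band}, and the strobed firing-squad synchronization of \cref{lem:strobe}, which lets every cell agree on when to perform a block update even though no individual cell can hold a counter modulo~$c$.

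First I would fix the encoding and the state set. Under helical boundary conditions of circumference $c$, the Margolus grid becomes an infinite strip of four-cell blocks, and I pack two Margolus cells into each cell of a one-dimensional automaton (as in \cref{fig:helical}), choosing the packing so that one of the two Margolus partitions --- the \emph{aligned} one --- has each block occupying two adjacent one-dimensional cells, while the other, \emph{split} partition has each block split between two matched pairs of one-dimensional cells that lie $c$ positions apart along the strip. Besides the two data tracks (with $s$ states each), a cell carries a spatially-periodic \emph{block-parity} track of period two (so that a block update, which pairs each cell with one chosen neighbour, can be written as a translation-invariant rule), a single spatially-constant \emph{phase} bit (recording whether the next simulated step uses the aligned or the split partition), and the $O(1)$ auxiliary tracks supplied by \cref{lem:strobe} to run a strobe of period $t=c/2+1$. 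Everything but the data tracks has $O(1)$ states, so a cell has $O(s^2)$ states, and the local rule and state set depend only on $s$; all of the $c$-dependence lives in the packed data and in the initial strobe pattern $P_{k,t}$ of \cref{lem:strobe-automaton}, which by \cref{obs:noq} is still polynomial-time constructible.

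The dynamics then run in windows of $c/2+1$ one-dimensional steps, one window per simulated two-dimensional step. On the single step of each window on which the strobe tracks fire, the automaton applies the appropriate Margolus block update to the data tracks --- using the block-parity track for the pairing and the phase bit to choose between the aligned pairing and the split pairing, which differ by a one-cell offset (the inherent $(1,1)$ offset between the two Margolus partitions, preserved because $c$ is even) --- and then flips the phase bit everywhere, which is legitimate since all cells fire simultaneously. On the other $c/2$ steps it leaves the block-parity, phase, and one data track fixed and slides the remaining data track by two positions per step, so that after $c/2$ steps it has moved by $c$, bringing each split block's two matched pairs into coincidence; the direction of this slide is dictated by the phase bit. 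The invariant maintained is that the data tracks are in aligned registration exactly when an aligned update fires, and in the $c$-displaced registration exactly when a split update fires. Rather than undoing the displacement immediately after a split update --- which would cost another $c/2$ steps and overrun the budget --- I let it persist, because the phase bit has just flipped, so the $c/2$ sliding steps of the \emph{next} window run in the opposite direction and restore registration; the displacement merely oscillates between $0$ and $c$ according to the parity of the number of steps simulated so far, and the decoding map that reads a two-dimensional configuration out of a one-dimensional one corrects for this known offset (the encoding map likewise installs the correct initial displacement according to which partition acts first). Since a fixed-amount slide, a bijective block update, and a bit flip are each reversible, and all are performed under the control of the reversible strobe tracks, the composite automaton is a genuine reversible cellular automaton. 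Finally, toroidal boundary conditions of even period $p>c$ for the two-dimensional strip become periodic boundary conditions of period $p$ for the one-dimensional automaton, and since $p>c\ge t$ we have $p>t/3+O(1)$, so the spatial-repetition hypothesis of \cref{lem:strobe} holds and the strobe behaves on the finite cycle exactly as on the line.

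I expect the main obstacle to be exactly this budget juggling and its attendant bookkeeping. Fitting each simulated step into $c/2+1$ one-dimensional steps forces the ``slide back'' to be amortized into the following window through the phase bit, and forces the one firing step of each window to carry out both the block update and the reconfiguration (offset and slide direction) for the next window; verifying that the schedule is self-consistent --- that at every firing step the Margolus cells that must interact are exactly the ones then sitting in a block-parity-matched pair of one-dimensional cells, for both partitions, for either initial phase, and for every even $c$ --- is the delicate part, together with checking that the strobe of \cref{lem:strobe} can be driven for the particular period $t=c/2+1$ (which needs $c$ even, so that $c/2$ is an integer, and $t\le 3p-O(1)$, valid because $p>c$) and that the whole controlled system retains a fixed, $c$-independent neighbourhood and state set.
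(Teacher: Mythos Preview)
Your proposal is essentially correct and follows the same overall architecture as the paper: two data tracks holding the helical strip, strobed synchronization from \cref{lem:strobe} to mark one step in $c/2+1$ for a block update, band-style sliding in the remaining $c/2$ steps to bring far-apart Margolus partners together, and a parity track to pick the left/right neighbour for the $2\times 2$ block. The state count and the independence of the rule from $c$ come out the same way.

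The one structural difference worth noting is how the alternation between the two Margolus partitions is realized. You keep an explicit spatially-constant phase bit, flip it on firing steps, and let it control the \emph{direction} of the slide, so the relative displacement of the two data tracks oscillates between $0$ and $c$; the decoding then depends on the parity of the number of simulated steps. The paper instead swaps the two data tracks as part of every block update and then always slides in the \emph{same} direction (top right by one, bottom left by one); the swap is what reorients the data so that the next partition's partners are again $c$ apart in the same sense. Both schemes work, and each buys something: your phase-bit version avoids the paper's somewhat delicate analysis of how the block-parity bit interacts with sliding by $c/2$ (the paper resolves this by flipping that bit at every step, with the net effect depending on the parity of $c/2$), because you slide only one track and by the even amount $c$. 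Conversely, the paper's swap trick avoids your explicit phase bit and the step-parity-dependent decoding.

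One small point you elide: after you slide one data track by $c$, the split-partition partners do land in the same one-dimensional cell, but with their vertical roles reversed—the cell now on the ``top'' track is the one that lies \emph{below} in the two-dimensional grid. So your phase bit must also flip the orientation handed to the block bijection, not merely the one-cell horizontal offset you mention. This is trivially absorbed into ``the phase bit chooses between the aligned and split pairing'', but it should be stated; the paper's swap handles this automatically.
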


\begin{proof}
Let $t=c/2+1$. We simulate the two-dimensional automaton using a multi-track one-dimensional automaton two of whose tracks represent the upper and lower rows of cells in the helical boundary conditions, and we use strobing synchronization with strobe period $t$, using more tracks. We use one more track to store a single bit of information for each one-dimensional cell, indicating whether its first two tracks should be combined with the cell to the left or with the cell to the right to form the four-cell squares of the Margolus neighborhood; we set the initial state of these bits in strict alternation between consecutive cells.

As in \cref{lem:strobe}, all states of the strobing track will be top-lit in one step, followed by $t-1$ steps in which they are not top-lit, in a temporally-repeating pattern. When a cell is top-lit, we perform the update in the Margolus neighborhood given by the reversible dynamics of the given two-dimensional automaton, with the following small modification: we swap the resulting cell values between the top and bottom tracks. As a result of this step, the cell values of the two-dimensional automaton are all computed correctly, but are placed in cells that are not adjacent to their neighbors in the next update. The cell values that are now in the top track of the one-dimensional simulation need to be paired with values that are now in the bottom track but are $c$ units farther to the right. To fix this incorrect placement, in each of the $t-1$ subsequent steps of the one-dimensional automaton, we slide the top track rightward one step and the bottom track leftward one step, according to the rectangular band dynamics depicted in \cref{fig:band}. After these sliding movements of all the cell states, they will once again be placed in a position where the one-dimensional automaton can perform a Margolus-neighborhood update.

Between one top-lit step and the next, the values that were in two vertically-adjacent squares (according to the adjacency pattern of the helical boundary conditions, although far from each other in the one-dimensional simulation) are shifted halfway around the helix in opposite directions, landing on different tracks of a single one-dimensional cell. In this same span of steps, we need to update the bit of information on the final track indicating whether each cell should look left or right to form a Margolus neighborhood in the next step. This update should be done in a way that matches the correct alignment of these squares, halfway around the helix, in alternating steps of the two-dimensional automaton. The required update to the final track depends on the parity of the number of Margolus-neighborhood squares in a single cycle around the helix, $c/2=t-1$. When there are evenly many squares in this cycle (true when $t$ is odd), the final-track states should all be flipped from one light step to the next; otherwise, when $t$ is even, they should all be left unchanged. This can most easily be accomplished by flipping these states at every step, regardless of the state of the strobing track.
\end{proof}

It seems likely that the divisibility condition on the circumference of the helical boundary conditions in \cref{thm:dim-redux} can be relaxed using a more general strobing synchronization mechanism, but we do not need this added generality for the following result:

\begin{theorem}
\label{thm:1d-rca-completeness}
There is a one-dimensional reversible cellular automaton for which simulating any given number of iterations, with periodic boundary conditions, is complete for $\ib$.
\end{theorem}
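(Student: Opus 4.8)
The plan is to combine the $\ib$-hardness of simulating the two-dimensional billiard-ball model (\cref{thm:bbm-hardness}) with the dimension-reduction simulation of \cref{thm:dim-redux}. Membership in $\ib$ is immediate: for any fixed one-dimensional reversible cellular automaton, one step of its evolution on a cycle of $p$ cells is a polynomial-time invertible bijection of the configuration, so simulating $n$ steps is an instance of $\iib_{\noredux,\invertible}$, and hence lies in $\ib$ by \cref{obs:fp-pspace} and \cref{thm:FPPS}.

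For hardness, I would first recall that the $\ib$-hard instances produced in the proof of \cref{thm:bbm-hardness} are BBM patterns laid out inside a bounding box of polynomial size, whose active region never escapes that box (signals in a circuit simulation stay on their wires). Consequently, as noted in the discussion of helical boundary conditions, any choice of boundary conditions for a sufficiently large surrounding grid yields the same evolution; in particular we may freely replace the periodic boundary conditions of \cref{thm:bbm-hardness} by toroidal boundary conditions of circumference $c$ and period $p$, for any even $c$ exceeding the width of the bounding box plus a constant margin and any even $p$ exceeding $c$ by enough to contain the height. The result is still an $\ib$-hard family: the configuration after $m$ BBM steps is unchanged, and computing the bounding box, the margin, and suitable values of $c$, $p$, and $m$ takes polynomial time.

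Now apply \cref{thm:dim-redux} to the billiard-ball model, which is a reversible cellular automaton with the two-dimensional Margolus neighborhood and $s=2$ states per cell. This yields a single fixed one-dimensional reversible cellular automaton $\mathcal{A}$ with $O(s^2)=O(1)$ states per cell, whose state set and update rule do not depend on $c$, such that running $\mathcal{A}$ for $c/2+1$ steps simulates one step of the BBM under helical boundary conditions of circumference $c$, and such that toroidal boundary conditions of period $p$ become periodic boundary conditions of period $p$ for $\mathcal{A}$. The many-one reduction from the BBM problem to simulating $\mathcal{A}$ then proceeds as follows: given a BBM pattern together with a step count $m$, compute $c$, $p$, and $m$ as above; build the period-$p$ initial configuration of $\mathcal{A}$, whose first few tracks hold the two rows of BBM cells in the helical layout together with the left/right alternation bit, and whose remaining tracks hold the strobing and firing-squad initialization of \cref{obs:noq} and \cref{lem:strobe,lem:strobe-automaton} with strobe period $t=c/2+1$ and a suitable spatial period $k$ (all constructible in polynomial time by those lemmas); simulate $\mathcal{A}$ for exactly $m(c/2+1)$ steps with periodic boundary conditions; and decode the two BBM-row tracks of the resulting configuration back into the BBM configuration after $m$ steps. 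Since all sizes are polynomial and the encoding and decoding are polynomial-time, this is a polynomial-time many-one reduction, so simulating $\mathcal{A}$ is $\ib$-hard, and together with membership it is $\ib$-complete.

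The substantive work has already been carried out in \cref{thm:bbm-hardness} and \cref{thm:dim-redux}, so the main obstacle here is boundary-condition bookkeeping: verifying that the confinement of the hard BBM patterns genuinely makes the switch from periodic to toroidal boundary conditions harmless, that $c$ can be taken even and large enough and $p$ even and larger than $c$ without disturbing this confinement, and that the strobe period $t=c/2+1$ and the spatial period $k$ can be chosen consistently with the hypothesis $1<t\le 3k-O(1)$ of \cref{lem:strobe-automaton}. None of these steps require new ideas, only care with the polynomial bounds.
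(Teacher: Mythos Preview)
Your proposal is correct and follows essentially the same approach as the paper: combine the $\ib$-hardness of BBM simulation from \cref{thm:bbm-hardness} with the dimension-reducing simulation of \cref{thm:dim-redux}, after observing that confinement of the hard BBM instances makes the passage to helical/toroidal boundary conditions harmless. Your write-up is more explicit than the paper's (you spell out membership and the boundary-condition bookkeeping), but the argument is the same.
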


\begin{proof}
Apply \cref{thm:dim-redux} to produce a one-dimensional simulation of the BBM two-dimen\-sion\-al automaton, for a hard instance of BBM generated by \cref{thm:bbm-hardness}, and for toroidal boundary conditions with circumference and period both large enough to make no difference to the dynamics of BBM within the bounding box of live cells of the instance.
\end{proof}

The number of states in the automaton resulting from this construction is $2^3\cdot 90^2=64800$, a finite but large number. (The two strobing tracks have 90 states rather than the 99 states of Imai and Morita because they do not use quiescent states.) It would be of interest to find a $\ib$-complete one-dimensional reversible cellular automaton with significantly fewer states.

\section{Piecewise linear bijections}

The study of iterated behavior of piecewise linear maps and of bijective maps are both central to the theory of dynamical systems. Well-known mappings in this area that combine both characteristics include Arnold's cat map
\[(x,y)\mapsto (2x+y,x+y)\bmod 1\]
and the baker's map
\[(x,y)\mapsto \left(2x\bmod 1, \frac{y+\lfloor 2x\rfloor}{2}\right),\]
both on the unit square~\cite{Orn-AMM-71,DysFal-AMM-92}.

\begin{figure}[t]
\centering\includegraphics[scale=0.4]{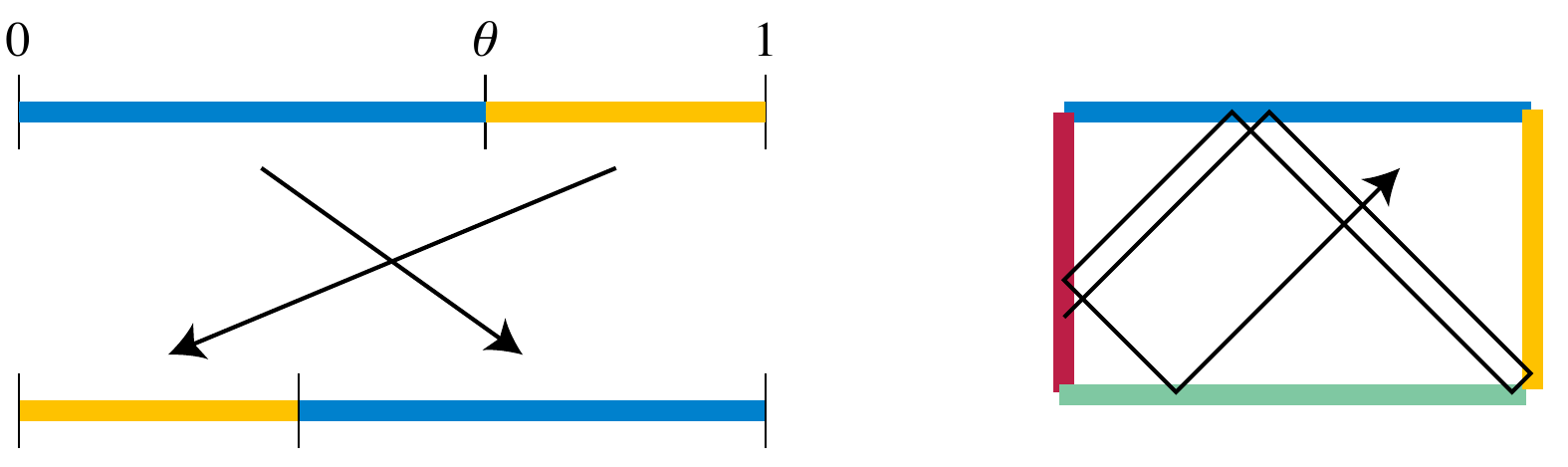}
\caption{Left: The interval exchange transformation $x\mapsto (x+\theta)\bmod 1$. Right: More complicated interval exchange transformations can be used to model reflections in mirrored polygons.}
\label{fig:ix-reflex}
\end{figure}

A prominent family of one-dimensional systems are the \emph{interval exchange transformations}~\cite{Kea-MZ-75}. These are piecewise linear bijections that partition a half-open interval into subintervals, permute the subintervals, and translate each subinterval into its permuted position (\cref{fig:ix-reflex}). The computational complexity of iterated interval exchange transformations, and their application in modeling light reflections within mirrored polygons, was an initial motivation for this paper. Even the most simple nontrivial interval exchange, the transformation $x\mapsto (x+\theta)\bmod 1$, has interesting iterated behavior, including Steinhaus's three-gap theorem according to which there are at most three distinct intervals between consecutive values in the sorted sequence of the first $n$ iterates~\cite{Shi-AMM-18}.

For the purposes of computational complexity it is more convenient to consider mappings that act on discrete sets rather than on continuous spaces like the entire unit square. In this light, it is common, for instance to study the effect of Arnold's cat map on grid points, such as the positions of a discrete set of pixels~\cite{DysFal-AMM-92}; indeed, its name comes from an example given by Arnold of a picture of a cat being transformed in this way. It is important to note, however, that restricting the domain of a function in this way can change whether it is bijective. Arnold's cat map is bijective on square grids, for instance, but the baker's map is not: each step halves the vertical separation of the grid.

\begin{figure}[t]
\includegraphics[width=\textwidth]{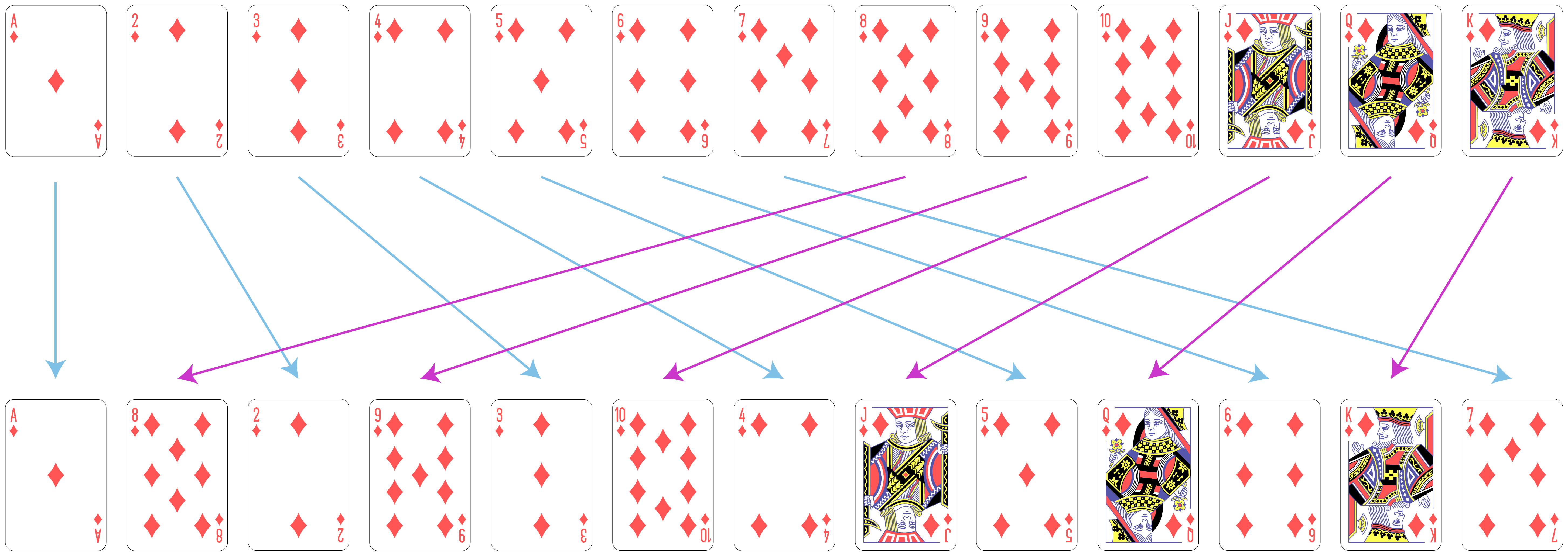}
\caption{A perfect riffle shuffle of $n$ cards can be represented as a piecewise linear transformation with two pieces acting on the first $\lceil n/2\rceil$ cards (blue arrows) and the remaining $\lfloor n/2\rfloor$ cards (magenta arrows). Card images from public domain file ``English pattern playing cards deck.svg'' by Dmitry Fomin on Wikimedia commons.}
\label{fig:shuffle}
\end{figure}

As an example in the other direction, of a piecewise linear transformation that is bijective on integers but not on continuous intervals, consider the familiar perfect riffle shuffle of a deck of $n$ cards~\cite{DiaGraKan-AAM-83}, which (if the cards are represented by integers in the range from $0$ to $n-1$) can be expressed as the piecewise linear transformation
\[
i\mapsto
\begin{cases}
2i & \mbox{if } i<n/2\\
2i-n & \mbox{if } i>n/2\mbox{ and } n \mbox{ is odd}\\
2i-n+1 & \mbox{if } i>n/2\mbox{ and } n \mbox{ is even.}\\
\end{cases}
\]
See \cref{fig:shuffle} for an example with $n=13$. We will use these shuffle transformations as components in a hardness proof for a problem of computing iterated piecewise linear bijections.

\subsection{The piecewise linear bijection problem}

\begin{definition}
We formulate the \emph{iterated piecewise linear bijection problem} as follows: the input is a triple $(x,n,T)$, where $x$ and $n$ are integers, and $T$ is a piecewise linear bijection on a range of integers that includes $x$, described by specifying integer endpoints of each piece of the bijection, together with two integer coefficients for the linear transformation of that piece. The output is $T^{(n)}(x)$.
\end{definition}

\begin{observation}
\label{obs:plb-poly}
Given an input that is in the syntactical form of an input to the iterated piecewise linear bijection problem, we can test in polynomial time whether it correctly describes a piecewise linear bijection. If it does, we can implement a single iteration of the bijection, or of the inverse of the bijection, in polynomial time.
\end{observation}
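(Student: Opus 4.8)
The plan is to run a sequence of polynomial-time semantic checks on the description of $T$, and then to give constant-overhead routines for one forward or one backward step. Write the description as a list of pieces, the $j$th being a quadruple $(a_j,b_j,\alpha_j,\beta_j)$ with the meaning ``for integers $i$ with $a_j\le i\le b_j$, apply $i\mapsto\alpha_j i+\beta_j$''; since the pieces are written out explicitly, their number $k$ is at most the input length. First I would check the combinatorial skeleton: each $[a_j,b_j]$ is nonempty, and after sorting the pieces by left endpoint (polynomial time) they tile one interval $[L,R]$ with no gaps and no overlaps; also that the iteration argument $x$ satisfies $L\le x\le R$. For $T^{(n)}(x)$ to be well defined the codomain must be the same interval $[L,R]$, so I take that as the intended codomain; the number $n$ plays no role in this observation.

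Next I would verify that $T$ is a bijection. A single piece with $\alpha_j\ne 0$ is injective on the integers; a piece with $\alpha_j=0$ is injective only if $a_j=b_j$, which is easy to test. For distinct pieces $j\ne j'$, $T$ is injective across them exactly when their image sets are disjoint, where the image of piece $j$ is the finite arithmetic progression $P_j$ consisting of all integers $z$ with $\min P_j\le z\le\max P_j$ and $z\equiv\beta_j\pmod{|\alpha_j|}$ (here $\min P_j$ and $\max P_j$ are $\alpha_j a_j+\beta_j$ and $\alpha_j b_j+\beta_j$ in some order). These progressions can have exponentially many terms, so I cannot enumerate them; instead I test disjointness of $P_j$ and $P_{j'}$ using the extended Euclidean algorithm: the two congruences $z\equiv\beta_j\pmod{|\alpha_j|}$ and $z\equiv\beta_{j'}\pmod{|\alpha_{j'}|}$ are simultaneously solvable iff $\gcd(|\alpha_j|,|\alpha_{j'}|)$ divides $\beta_j-\beta_{j'}$, in which case their common solutions form a single residue class modulo $\operatorname{lcm}(|\alpha_j|,|\alpha_{j'}|)$, and it remains only to test whether that residue class meets the single integer interval $[\max(\min P_j,\min P_{j'}),\,\min(\max P_j,\max P_{j'})]$, a trivial range computation. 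Performing this for all $O(k^2)$ pairs, together with checking $P_j\subseteq[L,R]$ for each $j$, takes polynomial time. Surjectivity is then automatic: the domain and codomain both have exactly $R-L+1$ elements, so an injection from $[L,R]$ into $[L,R]$ is onto. Hence if all checks pass, $T$ is a piecewise linear bijection on $[L,R]$, and if any check fails it is not.

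Finally, implementing one step. Given $y\in[L,R]$, a forward iteration finds the unique piece with $a_j\le y\le b_j$ (by scanning or binary-searching the sorted list) and outputs $\alpha_j y+\beta_j$; all arithmetic is on integers of polynomially many bits, so this is polynomial time. Given $z\in[L,R]$, a backward iteration finds the unique piece $j$ with $z\in P_j$ --- i.e.\ $\min P_j\le z\le\max P_j$ and $z\equiv\beta_j\pmod{|\alpha_j|}$ --- and outputs $(z-\beta_j)/\alpha_j$, which is an exact integer division because $z\in P_j$. In both cases the uniqueness of the relevant piece is guaranteed by the bijectivity verified above. The only part that needs a genuine idea rather than bookkeeping is the cross-piece disjointness test: one must recognize that pairwise disjointness of the image progressions is precisely global injectivity of $T$, and carry out that test via congruence and gcd arithmetic rather than by expanding the progressions. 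Everything else --- the skeleton checks, the cardinality argument for surjectivity, and the single-step evaluations --- is routine interval and modular arithmetic.
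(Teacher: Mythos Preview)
Your proposal is correct and follows essentially the same approach as the paper: sort and verify that the domain pieces tile an interval, check that each image lies in that same interval, test pairwise disjointness of the image arithmetic progressions via the extended Euclidean algorithm (the one nontrivial step), and conclude bijectivity by a cardinality argument; forward and backward single steps are then the obvious lookup-and-apply routines. The paper's proof is slightly terser but uses the same gcd-based progression-intersection test and the same evaluation strategy, so there is no substantive difference.
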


\begin{proof}
To test whether an input of this form describes a piecewise linear bijection, sort the endpoints of the specified pieces to check that they form disjoint intervals whose union is a single interval, check that each of the specified transformations maps each piece into this union, and check that no two specified transformations have intersecting images.

The images of any two linear pieces of the given input lie in two arithmetic progressions, and testing whether they intersect can be done using greatest common divisors to form the intersection of these progressions~\cite{Ore-AMM-52}. In more detail, suppose that the image of one piece lies within an interval of the progression of values that are $a$ mod $n$, and that the image of a second piece lies within an interval of a progression of values that are $b$ mod $n$. Let $g=\gcd(m,n)$; then the intersection of the two progressions is either empty (if $a\ne b$ mod $g$) or has period $mn/g$ (otherwise). When it is non-empty, the extended Euclidean algorithm can be used to find numbers $m'$ and $n'$ with $mm'+nn'=g$, and the intersection of the progressions consists of the values congruent to $(ann' + bmm')/g$ mod $mn/g$. To test whether the two images intersect, we need only compute the coefficients of this progression and test whether it has any values in the interval between the upper and lower ends of both images.

To implement a single iteration of the bijection, on input $x$, find the piece containing $x$ and apply its transformation. 
To implement the inverse of the bijection, find the piece whose image contains $x$ and apply the inverse of its linear transformation.
\end{proof}

Although the inverse of a piecewise linear bijection is linear on each image of a piece, this may not describe the inverse as a piecewise linear bijection, because the images might not be intervals. Describing the inverse of a piecewise linear bijection as another piecewise linear bijection could produce significantly more pieces. To allow for more general descriptions of transformations, it will be convenient for us to consider compositions of piecewise linear bijections, repeated in a fixed sequence. As the following lemma shows, this can be done by combining the sequence into a single piecewise linear bijection on a larger range, without a significant increase in complexity.

\begin{lemma}
\label{lem:compose-plb}
Let $T$ be the composition of a sequence of $k$ piecewise linear bijections $T_1, T_2,\dots T_k$ on the integers in the range $[0,n)$. Then there exists a single piecewise linear bijection $\widehat T$ on the range $[0,kn)$, such that for all $x$ in $[0,n)$, $T(x)=\widehat T^{(k)}(x)$. The number of pieces needed to define $\widehat T$ as a piecewise linear transformation is the sum of the numbers of pieces in each $T_i$.
\end{lemma}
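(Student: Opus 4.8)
The plan is to lay out the range $[0,kn)$ as $k$ consecutive blocks of length $n$, use the block index as a ``program counter'' recording how far the composition has progressed, and let $\widehat T$ advance the counter by one block on each application while carrying out the next transformation in the sequence. For $i\in\{0,1,\dots,k-1\}$ write $B_i=[\,in,(i+1)n\,)$, so the $B_i$ partition $[0,kn)$, and for $x\in B_i$ write $x=in+y$ with $y\in[0,n)$. Define
\[
\widehat T(in+y)=\bigl((i+1)\bmod k\bigr)\,n+T_{i+1}(y),\qquad y\in[0,n),\ i\in\{0,\dots,k-1\},
\]
so that one step of $\widehat T$ on $B_i$ applies $T_{i+1}$ to the ``local coordinate'' $y$ and moves to block $B_{(i+1)\bmod k}$; in particular $B_{k-1}$ applies $T_k$ and wraps back to $B_0$. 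Starting from $x=y\in[0,n)=B_0$, an easy induction shows $\widehat T^{(j)}(x)=jn+(T_j\circ\cdots\circ T_1)(y)\in B_j$ for $0\le j<k$, and hence $\widehat T^{(k)}(x)=(T_k\circ\cdots\circ T_1)(y)=T(x)\in B_0$, which is the required identity.

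It then remains to verify that $\widehat T$ is a legitimate piecewise linear bijection of the form allowed as input to the iterated piecewise linear bijection problem, and to count its pieces. Since each $T_{i+1}$ is a bijection of $[0,n)$ and $i\mapsto(i+1)\bmod k$ is a permutation of $\{0,\dots,k-1\}$, the map $\widehat T$ sends each $B_i$ bijectively onto $B_{(i+1)\bmod k}$, and is therefore a bijection of $[0,kn)$. For the linear structure: if $[a,b)\subseteq[0,n)$ is a piece of $T_{i+1}$ on which $T_{i+1}$ acts by the linear transformation $y\mapsto\alpha y+\beta$ given by its two integer coefficients, then $\widehat T$ restricted to the interval $[a+in,b+in)\subseteq B_i$ acts by $x\mapsto\alpha(x-in)+\beta+((i+1)\bmod k)n$, again with integer endpoints and integer coefficients. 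These translated pieces of $T_{i+1}$ partition $B_i$, so over all $i$ they partition $[0,kn)$; and their images are pairwise disjoint with union $[0,kn)$, because images of pieces coming from different blocks lie in different target blocks while images of pieces from a single block $B_i$ are disjoint since $T_{i+1}$ is a bijection. Hence $\widehat T$ passes the syntactic validity test of \cref{obs:plb-poly}, and its number of pieces is $\sum_{j=1}^{k}(\text{number of pieces of }T_j)$, as claimed.

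The construction is essentially mechanical, so there is no substantive obstacle. The two points that need care are the treatment of the wrap-around index $(i+1)\bmod k$, which is precisely what forces $\widehat T^{(k)}$ to return to $B_0$ rather than running off the end of the range, and the requirement that $\widehat T$ be well defined and bijective on \emph{all} of $[0,kn)$ — including blocks and points never visited from $B_0$ — since a valid problem instance must describe a genuine piecewise linear bijection on its entire range; both reduce immediately to the corresponding properties of the individual $T_j$ together with the fact that $B_0,\dots,B_{k-1}$ partition $[0,kn)$.
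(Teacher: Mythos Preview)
Your proof is correct and follows essentially the same approach as the paper: both lay out $[0,kn)$ as $k$ blocks of length $n$, use the block index as a program counter, and let $\widehat T$ apply $T_{i+1}$ while advancing from block $i$ to block $(i+1)\bmod k$, with the wrap-around at the last block. Your write-up is more explicit than the paper's in verifying bijectivity and the piecewise-linear structure on all of $[0,kn)$, but the construction is the same.
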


\begin{proof}
For each linear transformation in each of the piecewise linear bijections $T_i$, mapping a subinterval $[a,b)$ to another subinterval $[c,d)$, make a corresponding transformation in $\widehat T$ that maps $[a+(i-1)n,b+(i-1)n)$ to $[c+in,d+in)$.
For the final iteration $T_k$, the image of this transformation should be taken modulo $kn$, so that it wraps around to $[0,n)$. The transformation $\widehat T$ defined by combining these pieces of transformations has the property that, when it is iterated $k$ times on a starting value $x$ in the range $[0,n)$, the $i$th iteration maps $x$ into the range $[in,(i+1)n)$, and that the behavior of this iteration modulo~$n$ is the same as $T_i$. The last of the $k$ iterations maps $x$ back into $[0,n)$ in the same way. Therefore, for $x\in[0,n)$, $T(x)=\widehat T^{(k)}(x)$.
\end{proof}

\subsection{Permuting the bits of a binary number}

Binary rotation or circular shift operates on numbers in the range $[0,2^k)$ as follows. Represent any number as a binary string, with the most significant bit on the left and least significant bit on the right. A left circular shift by $i$ units, for $1\le i<k$, moves each bit value into the position~$i$ steps to the left, with the most significant $i$ bits wrapping around into the least significant $i$ positions. A right circular shift by $i$ units performs the opposite transformation, moving each bit value $i$ steps to the right, with the least significant $i$ bits wrapping around into the most significant $k$ positions. A right circular shift by $i$ units is the same as a left circular shift by $k-i$ units.

\begin{observation}
\label{obs:rot}
A left circular shift by one unit on the range $[0,2^k)$ can be expressed as a piecewise linear bijection with two pieces.
\end{observation}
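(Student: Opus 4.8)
The plan is to write down the two pieces explicitly and verify that they assemble into a bijection. Represent $x\in[0,2^k)$ by its binary digits $b_{k-1}b_{k-2}\cdots b_1 b_0$; by the definition of a left circular shift by one unit, the image of $x$ is the number with digits $b_{k-2}\cdots b_0 b_{k-1}$. The leading digit $b_{k-1}$ equals $0$ exactly when $x<2^{k-1}$ and $1$ exactly when $x\ge 2^{k-1}$, which tells us where to split the domain: at the single endpoint $2^{k-1}$, giving the two half-open pieces $[0,2^{k-1})$ and $[2^{k-1},2^k)$.

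On the first piece, shifting the digits left and appending a trailing $0$ is simply multiplication by $2$, so the transformation there is $x\mapsto 2x$. On the second piece, we discard the leading $1$ (subtract $2^{k-1}$), shift left (multiply by $2$), and append a trailing $1$ (add $1$), so the transformation there is $x\mapsto 2x-2^k+1$. Equivalently, in closed form the map is $x\mapsto 2x-b_{k-1}(2^k-1)$. Both branches are affine maps with integer coefficients, and the endpoints $0$, $2^{k-1}$, and $2^k$ are integers, so this is a legitimate two-piece description in the sense required by the iterated piecewise linear bijection problem.

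It remains to check that the combined map is a bijection of $[0,2^k)$. The first piece sends $[0,2^{k-1})$ bijectively onto the even integers $\{0,2,\dots,2^k-2\}$, and the second sends $[2^{k-1},2^k)$ bijectively onto the odd integers $\{1,3,\dots,2^k-1\}$, as one sees by checking the image at the two ends ($2^{k-1}\mapsto 1$ and $2^k-1\mapsto 2^k-1$) and noting both branches are strictly increasing with step $2$. These two image sets are disjoint and their union is all of $[0,2^k)$, so the two-piece map is a bijection. There is essentially no obstacle here beyond bookkeeping; the only point requiring care is keeping the half-open interval conventions consistent, so that the split point $2^{k-1}$ is assigned to the second piece, matching the fact that its leading binary digit is $1$.
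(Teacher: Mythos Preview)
Your proof is correct and uses exactly the same two-piece formula as the paper, $x\mapsto 2x$ on $[0,2^{k-1})$ and $x\mapsto 2x-2^k+1$ on $[2^{k-1},2^k)$. You have simply added the verification that the map is a bijection and the binary-digit motivation, both of which the paper leaves implicit.
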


\begin{proof}
It is the transformation
\[
x\mapsto \begin{cases}
2x & \mbox{if } x<2^{k-1}\\
2x-2^k+1 &\mbox{otherwise.}\\
\end{cases}
\]
\end{proof}

This is just the riffle shuffle example described earlier, in the case where the number of values being shuffled is a power of two. By applying \cref{lem:compose-plb} we can compose multiple one-unit shifts to obtain circular shifts of larger numbers of units. We can also compose these shifts in more complex ways to obtain other bit permutations:

\begin{lemma}
\label{lem:plb-permute}
Let $C$ be a subset of $[0,k)$, interpreted as bit positions in the $k$-bit binary values, with $0$ as the least significant (rightmost) position, and $k-1$ as the most significant (leftmost) position. Then there exists a function $f$ on $k$-bit binary values that permutes the bits so that the positions in $C$ are moved to the $|C|$ most significant bits, such that both $f$ and $f^{-1}$ can be expressed as compositions of $O(|C|k)$ piecewise linear bijections with $O(1)$ pieces each.
\end{lemma}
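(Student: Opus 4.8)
The plan is to build both $f$ and $f^{-1}$ from two cheap primitives: the one-unit left circular shift $\rho$, which by \cref{obs:rot} is a two-piece piecewise linear bijection on $[0,2^k)$ (so by \cref{lem:compose-plb} a left circular shift by any amount $m$ is a composition of $O(m)$, hence $O(k)$, piecewise linear bijections with $O(1)$ pieces), and the transposition $\sigma$ of the two most significant bit positions, which is a bounded-piece bijection acting as $x\mapsto x\pm2^{k-2}$ on two subintervals and as the identity elsewhere. The swap $\tau_i$ of bit positions $i$ and $i+1$ is then $\rho^{a}\circ\sigma\circ\rho^{-a}$ for $a\equiv i+2\pmod k$, which costs $O(k)$ primitives by itself; the point of the proof is that suitably organised \emph{sequences} of such swaps cost far less than $O(k)$ each.

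The key observation is that a ``sinking pass'' $\tau_{p-1}\circ\tau_{p-2}\circ\dots\circ\tau_q$ (with $q<p$, applying $\tau_{p-1}$ first) telescopes. Writing each factor as $\rho^{a_i}\sigma\rho^{-a_i}$ with $a_i=i+2$, every interior junction becomes $\rho^{-a_i}\rho^{a_{i+1}}=\rho$, a single cheap shift, so the whole pass collapses to $\rho^{a_q}\circ\sigma\circ\rho\circ\sigma\circ\rho\circ\cdots\circ\sigma\circ\rho^{-a_{p-1}}$, using only $O(k)$ primitives instead of $O((p-q)k)$. Only left shifts $\rho$ appear in the interior --- an increasing pass would instead leave $\rho^{-1}=\rho^{k-1}$ at each junction and blow up by a factor of $k$ --- so every ``gather'' must be arranged to run in the sinking direction.

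Accordingly I would realise $f=\rho^{k-|C|}\circ g$, where $g$ gathers $C=\{c_1<\dots<c_{|C|}\}$ into the $|C|$ least significant positions by running, for $t=1,\dots,|C|$, the sinking pass $\tau_{c_t-1}\tau_{c_t-2}\cdots\tau_{t-1}$ that sinks the bit at position $c_t$ down to position $t-1$. Since pass $s<t$ only touches positions at most $c_s<c_t$, the bit moved in pass $t$ is still at $c_t$ and nothing already placed is disturbed, so $g$ is well defined, maps $C$ onto $[0,|C|)$, and costs $O(|C|k)$ primitives; hence $f$ costs $O(|C|k)$ and $f(C)=[k-|C|,k)$. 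For the inverse, $f^{-1}=g^{-1}\circ\rho^{|C|}$ simplifies to the permutation scattering the top $|C|$ positions down onto $C$ (match the bit at position $k-j$ with $c_{|C|-j+1}$, which lies at or below $k-j$ since $C$ has $|C|$ elements in $[0,k)$), and this scatter is again a sequence of $|C|$ sinking passes, giving the matching $O(|C|k)$ bound for $f^{-1}$.

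The main obstacle I anticipate is precisely controlling this asymmetry: left shifts are cheap but right shifts are not, so the construction is forced to route every gathering through the least significant positions (followed by one full-length shift) rather than going straight to the top, and one must check that the sinking passes, in the chosen order and with the chosen bit-to-target correspondence, compose to exactly the required permutations with no collisions. The remaining steps --- the constant piece counts of $\sigma$ and of the slightly wider but still constant-width top-window swaps that are sometimes convenient, and the reduction of $g^{-1}\circ\rho^{|C|}$ to the top-to-$C$ scatter --- are routine, but are where the constants hidden in ``$O(1)$ pieces'' and ``$O(|C|k)$'' get pinned down.
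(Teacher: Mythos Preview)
Your approach is correct and genuinely different from the paper's. The paper proceeds by induction on $|C|$: having placed $C'\subset C$ with $|C'|=|C|-1$ into the top positions, it uses $O(k)$ full-width left rotations to bring the remaining element to position $k-1$, then $O(k)$ left rotations on the low-order $k-1$ bits (four pieces each) to reattach the block containing $C'$; the inverse is built by reversing each step with complementary rotation amounts. You instead work with adjacent transpositions $\tau_i=\rho^{i+2}\sigma\rho^{-(i+2)}$ and exploit the telescoping $\rho^{-a_i}\rho^{a_{i+1}}=\rho$ inside a decreasing pass, so that an entire ``sink from $p$ to $q$'' costs $O(k)$ rather than $O((p-q)k)$ primitives; $|C|$ such passes followed by one rotation give $f$. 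This is an elegant alternative, and your observation that only \emph{decreasing} passes telescope cheaply (since $\rho^{-1}=\rho^{k-1}$) is exactly the right diagnosis.

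The one place that needs more than you have written is the inverse. You note that $f^{-1}$ sends position $k-|C|+j-1$ to $c_j$ with $c_j\le k-|C|+j-1$, and propose $|C|$ sinking passes for these targets; but you must check that this product of sinking passes equals $f^{-1}$ on \emph{all} $k$ positions, not just the top $|C|$. This does hold, and the clean way to see it is that your $g$ is the order-preserving ``stable partition'' sending $C$ to $[0,|C|)$ and its complement to $[|C|,k)$; hence $f=\rho^{k-|C|}\circ g$ sends $C$ to $[k-|C|,k)$ and the complement to $[0,k-|C|)$, both order-preservingly, so $f^{-1}$ is the unique order-preserving map sending $[k-|C|,k)$ onto $C$ and $[0,k-|C|)$ onto $[0,k)\setminus C$. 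Your proposed scatter (sinking $k-|C|+t-1$ to $c_t$ for $t=1,\dots,|C|$) is, by the same reasoning as for $g$, exactly that stable partition, hence equals $f^{-1}$. With this one sentence added your argument is complete and yields the same $O(|C|k)$ bound as the paper, via a different and arguably more transparent mechanism.
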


\begin{proof}
We express $f$ as a composition of piecewise linear bijections using induction on $|C|$. As base cases, if $|C|=0$, we may let $f=f^{-1}$ be the identity function, expressed as a piecewise linear bijection with one piece. If $|C|=1$, we let $f$ be the composition of a sufficient number of two-piece left-rotations (\cref{obs:rot}) to place the single element of $C$ into the most significant position; in this case, $f^{-1}$ is just the composition of a complementary number of left-rotations, modulo~$k$. Otherwise, We construct the function $f$ as a composition of piecewise linear bijections as follows:
\begin{itemize}
\item Remove a single element from $C$, producing the smaller set $C'$. By induction, perform a composition $f'$ of piecewise linear bijections that places $C'$ into the most significant positions of the resulting permuted bit sequence. Let $i$ be the position in which these bijections leave the remaining element that was removed from $C$ to produce $C'$.
\item Perform $k-1-i$ left circular shifts using two-piece piecewise linear bijections, as described in \cref{obs:rot}. As a result, in the value resulting from these shifts, the bit that started out in position $i$ will be in the most significant position. The subinterval $[0,2^{k-1})$ will contain the inputs for which this most significant bit is zero, and the subinterval $[2^{k-1},2^k)$ will contain the inputs for which it is one. The remaining bits of $C$ will form a contiguous block elsewhere in the bit sequence; let $j$ be the number of positions separating the most significant bit from this contiguous block.
\item Perform $j$ left circular shifts of the low-order $k-1$ bits. Each of these circular shifts can be performed as a four-piece piecewise linear bijection, obtained by applying \cref{obs:rot} separately to the two subintervals $[0,2^{k-1})$ and $[2^{k-1},2^k)$.
\end{itemize}
To construct the inverse $f^{-1}$ of the function $f$ that we constructed above, we simply reverse these steps:
\begin{itemize}
\item Perform $k-1-j$ left circular shifts of the low-order $k-1$ bits of the given value.
\item Perform another $i+1$ left circular shifts to invert the effect of the $k-1-i$ left circular shifts included in $f$. 
\item Construct and perform the inverse of $f'$, by induction.
\end{itemize}
The $|C|$ levels of induction each add $O(k)$ circular shifts with two or four pieces each, from which the bounds on numbers of composed functions and the numbers of pieces follow.
\end{proof}

\subsection{Completeness}

We are now ready to prove that the piecewise linear bijection problem is $\ib$-complete. The main idea of the proof is to use \cref{lem:plb-permute} to simulate individual gates of a reversible logic circuit, and \cref{lem:compose-plb} to combine the resulting sequences of piecewise linear bijections into a single piecewise linear bijection for which it is hard to find iterated values.

\begin{theorem}
\label{thm:plb-complete}
The piecewise linear bijection problem is $\ib$-complete.
\end{theorem}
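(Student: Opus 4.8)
The plan is to prove the two directions separately: membership in $\ib$ using \cref{obs:plb-poly}, and $\ib$-hardness by a polynomial-time many-one reduction from the reversible-circuit iteration problem $R$ of \cref{obs:circuit-completeness} (which is $\ib$-complete under many-one reductions), using \cref{lem:plb-permute} to implement individual reversible gates and \cref{lem:compose-plb} to glue a whole circuit into one piecewise linear bijection.

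For membership, I would note that \cref{obs:plb-poly} lets us recognize syntactically valid piecewise linear bijections and carry out a single forward step (or, if needed, a single inverse step) in polynomial time. Hence the iterated piecewise linear bijection problem many-one reduces to iterating the polynomial-time invertible bijection $g$ that maps a suitably encoded pair $(x,T)$ to $(T(x),T)$, leaving the description $T$ untouched, acting as the identity on malformed inputs (detectable via \cref{obs:plb-poly}), and inverted by $(y,T)\mapsto(T^{-1}(y),T)$; reading $T^{(n)}(x)$ off of $g^{(n)}(x,T)$ is a many-one reduction into $\iib_{\noredux,\invertible}$, so by \cref{thm:FPPS} the problem lies in $\ib$.

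For hardness, I would start from an instance of $R$: a reversible circuit $C$ on $m$ input/output wires built from $t$ gates of a fixed constant arity $a$, a number $n$, and an initial assignment $x\in\{0,1\}^m$. Encode a wire assignment as an integer in $[0,2^m)$ and process the gates of $C$ in topological order. A single gate $G$ touches $a$ wires and applies a fixed permutation of $\{0,1\}^a$ to the corresponding bits; I would simulate $G$ in three stages. First, applying \cref{lem:plb-permute} with its subset taken to be $G$'s $a$ wire positions and its word length taken to be $m$, compose the $O(am)$ piecewise linear bijections with $O(1)$ pieces each that move $G$'s wires into the $a$ most significant bit positions; call this composition $f$. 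Second, the conjugate of $G$ by $f$ acts only on those top $a$ bits, so it is a permutation of the $2^a$ equal-length blocks into which $[0,2^m)$ is cut by its leading $a$ bits, realizable (since $a=O(1)$) as a piecewise linear bijection with $2^a=O(1)$ pieces — one translation per block — computed explicitly from $G$'s permutation and $f$. Third, apply $f^{-1}$, again a composition of $O(am)$ bounded-piece bijections via \cref{lem:plb-permute}. Concatenating these stages over all $t$ gates yields a sequence $T_1,\dots,T_K$ of piecewise linear bijections on $[0,2^m)$, with $K=O(tm)$ and total piece count $O(tm)$, whose composition in order is exactly one feedback pass of $C$. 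Then \cref{lem:compose-plb} folds $T_1,\dots,T_K$ into a single piecewise linear bijection $\widehat T$ on $[0,K2^m)$ with $\widehat T^{(K)}=C$ on $[0,2^m)$ and with only $O(tm)$ pieces, so that $\widehat T^{(nK)}(x)=C^{(n)}(x)$. The reduction would output the iterated piecewise linear bijection instance $(x,nK,\widehat T)$ and decode the returned integer to an $m$-bit wire assignment. Since $\widehat T$ has polynomial size and $nK$ has polynomially many bits, this is a polynomial-time many-one reduction; together with membership and the many-one $\ib$-completeness of $R$, it gives $\ib$-completeness.

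The hard part will be the middle stage of the gate simulation: recognizing that conjugating an arbitrary bounded-arity reversible gate by the bit-permutation of \cref{lem:plb-permute} yields a map that permutes only the high-order block structure of $[0,2^m)$, so that it needs only $O(1)$ linear pieces, and then verifying that the piece counts stay polynomial throughout — which rests on the polynomial bound of \cref{lem:plb-permute} on numbers of composed bijections and pieces, and on \cref{lem:compose-plb} preserving the sum of piece counts. The integer encoding, the topological-order bookkeeping, and the arithmetic that $nK$ is polynomially bounded are all routine.
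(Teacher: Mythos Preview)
Your proposal is correct and follows essentially the same approach as the paper: membership via \cref{obs:plb-poly} (the paper is terser here, but your explicit $(x,T)\mapsto(T(x),T)$ construction is a faithful expansion), and hardness by the same three-stage per-gate simulation---permute the gate's wires to the top via \cref{lem:plb-permute}, realize the gate as a $2^a$-block permutation, un-permute---followed by \cref{lem:compose-plb} to collapse the sequence into one bijection. The piece-count accounting and the final $\widehat T^{(nK)}(x)=C^{(n)}(x)$ identity match the paper's argument exactly.
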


\begin{proof}
Because it is a problem of finding iterated values of a bijection that can be computed in polynomial time (\cref{obs:plb-poly}), it is clearly in $\ib$. We prove that it is complete by finding a many-one reduction from the problem of finding iterated values of reversible logic circuits, already proven $\ib$-complete (\cref{obs:circuit-completeness}). To do so, we suppose that we are given a reversible logic circuit operating on $k$-bit inputs and outputs, and we represent the values on sets of $k$ wires of the circuit (progressing from the $k$ input wires, through the various gates of the circuit, to the $k$ output wires, as numbers in the range $[0,2^k)$, interpreted as $k$-bit binary numbers. We will find sequences of  piecewise linear bijections that, under this interpretation, perform the operation of each gate of the given circuit, and compose these sequences into a single bijection using \cref{lem:compose-plb}.

In more detail, for each gate of the circuit, in a topological ordering of the circuit ordered from its inputs to its outputs, we apply the following sequence of piecewise linear bijections:
\begin{itemize}
\item Let $C$ be the set of bit positions representing the inputs and outputs of the gate. Apply \cref{lem:plb-permute} to find a sequence of piecewise linear bijections whose composition $f_C$ permutes all of the bits so that the positions of $C$ are permuted into the $|C|$ most significant positions.
\item The resulting values can be partitioned into $2^{|C|}$ subintervals of the form $[i2^{k-|C|},(i+1)2^{k-|C|})$ (for $i=0,1,\dots 2^{|C|}-1$, within which the bits in the $|C|$ most significant bit positions have constant values. Apply a single piecewise linear bijection that permutes these $2^{|C|}$ subintervals into an order representing the output values of the given gate for these input values.
\item Apply the sequence of piecewise linear bijections whose composition represents the inverse function $f^{-1}_C$ to the function $f_C$, as described by \cref{lem:plb-permute}.
\end{itemize}
If the circuit has $g$ gates, each operating on $O(1)$ bits, then the composition of the sequences described above for each gate gives us an overall sequence of $s=O(gk)$ piecewise linear transformations, with $O(gk)$ pieces, that implements the same function as the given circuit.

By \cref{lem:compose-plb} we can find an equivalent single piecewise linear transformation $T$, with the same number of pieces, operating on $O(gk^2)$-bit values, such that $T^{(s)}(x)$, for $x\in[0,2^k)$, performs a single iteration of the given circuit. Therefore, for any $n$, we can apply the circuit $n$ times to input value $x$, by solving the piecewise linear bijection problem of computing $T^{(ns)}(x)$. The construction of $T$ and $ns$ from the circuit are polynomial-time transformations, so this gives a many-one reduction from finding iterated values of reversible logic circuits to the piecewise linear bijection problem.
\end{proof}

\subsection{Integer interval exchange transformations}

We conclude this section with a special case of the piecewise linear bijection problem that has a non-obvious polynomial time algorithm. We define the \emph{iterated integer interval exchange transformation problem} to be the special case of the piecewise linear bijection problem in which each of the linear bijections of the given bijection has multiplier 1. In these piecewise linear bijections, each piece is just a translation, and the whole bijection is an interval exchange transformation. After an earlier version of this paper listed the complexity of the iterated integer interval exchange transformation problem as an open problem, Mark Bell provided in a personal communication the observation that it has a polynomial time solution, obtained by reinterpreting it as a problem on normal curves in triangulated surfaces and plugging in known results from computational topology. In this section we provide a more detailed expansion of Bell's observation.
To describe this solution, we need the following topological definitions, for which refer to \cref{fig:normal-interval-exchange} and \cite{EriNay-DCG-13}.

\begin{figure}[t]
\centering\includegraphics[width=0.6\textwidth]{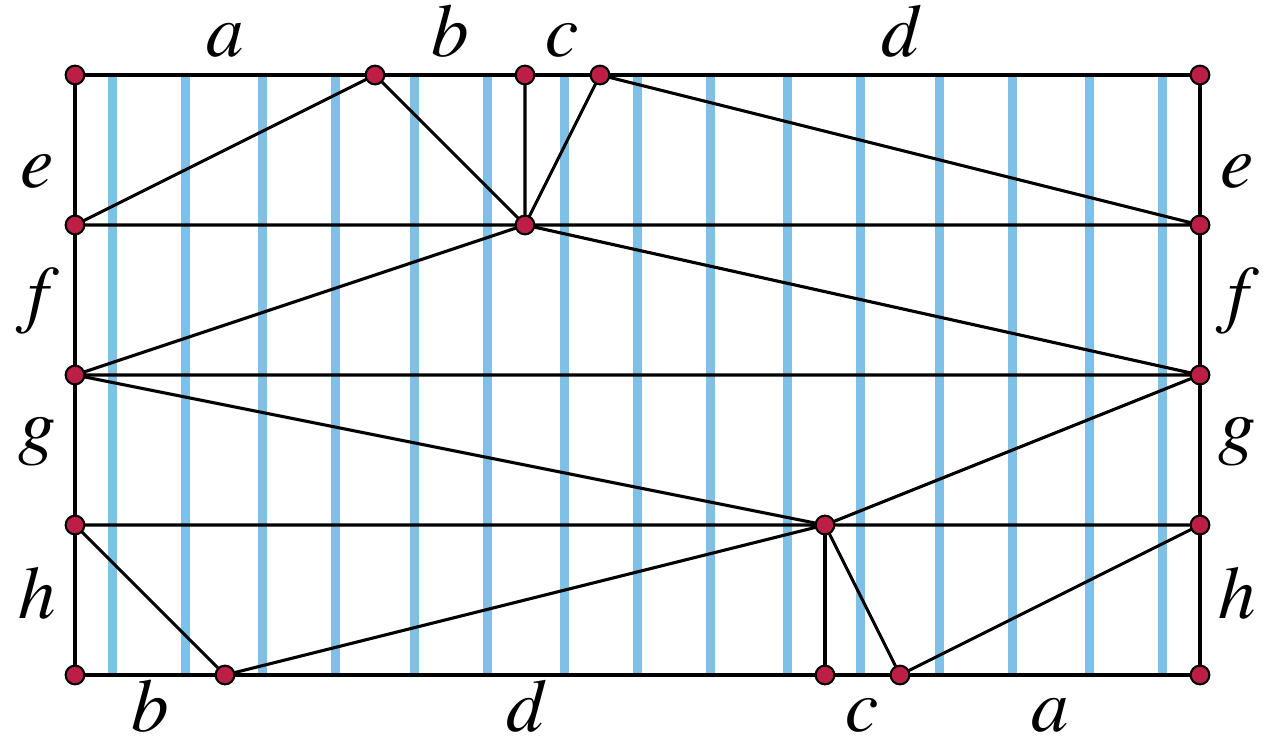}
\caption{A normal curve (light blue) on a triangulated double torus (black triangles and red vertices, glued from top to bottom and from left side to right side with the pairing indicated by the letters). Traversing the normal curve upwards from its central horizontal line, through the glued edges from top to bottom, and continuing upwards back to the same central line, permutes the branches of the curve according to the integer interval exchange transformation that maps $[0,3]\mapsto [11,14]$, $[4,5]\mapsto [0,1]$, $6\mapsto 10$, and $[7,14]\mapsto [2,9]$.}
\label{fig:normal-interval-exchange}
\end{figure}

\begin{definition}
A \emph{triangulated oriented 2-manifold} (``triangulated manifold'', for short) is a topological space obtained from a system of triangles, each having a specified clockwise orientation of its edges, and a matching of pairs of edges of those triangles to be glued together consistently with those orientations. (The vertices of the manifold are equivalence classes of vertices of triangles after this gluing, but are unimportant for what follows.) A \emph{normal curve}  is a one-dimensional subspace of a triangulated manifold, topologically equivalent within each triangle to a collection of disjoint line segments that extend from edge to edge within the triangle, avoiding its vertices. At the point where one of these segments meets an edge of a triangle, it is required to continue from the same point into another segment in the other triangle glued to the same edge. In particular, because a line segment cannot cross the same edge of a triangle twice in succession, a normal curve is also forbidden from having segments that do this, although it can return to an edge after crossings with other edges. A normal curve can have multiple connected components; a single component is called an \emph{arc}.
\end{definition}

\begin{definition}
The \emph{normal coordinates} of a normal curve on a triangulated surface are a labeling of each edge $e$ of the triangulation by a non-negative integer $N_e$, the number of points of intersection between the curve and edge $e$ (\cref{fig:normal-coords}).
\end{definition}

The following is standard:

\begin{observation}
The normal coordinates of any normal curve of any triangulated surface obey the triangle inequality in each triangle of the surface, and sum to an even number in each triangle of the surface. Any system of non-negative integer edge labels obeying these constraints defines a normal curve, which is unique up to homeomorphisms of the surface that map each vertex and edge of the triangulation to itself.
\end{observation}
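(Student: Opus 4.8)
The plan is to reduce both assertions---the necessity of the two constraints, and the existence and uniqueness of the curve they determine---to the elementary picture inside a single triangle.

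First I would handle the \emph{necessity} direction. Fix a triangle of the triangulation with edges $e_1,e_2,e_3$ and intersection numbers $a_i=N_{e_i}$. Inside this triangle a normal curve restricts to a disjoint family of properly embedded arcs, each joining two distinct edges (the definition forbids an arc from meeting one edge twice) and avoiding the three vertices. Such an arc separates the triangular disk into two pieces, exactly one of which contains a single vertex, so it ``cuts off'' exactly one corner---namely, it joins the two edges incident to that corner. Letting $c_i$ be the number of arcs cutting off the corner opposite $e_i$, each arc met by edge $e_i$ cuts off one of the two corners at the endpoints of $e_i$, so
\[
a_1=c_2+c_3,\qquad a_2=c_1+c_3,\qquad a_3=c_1+c_2.
\]
Summing gives $a_1+a_2+a_3=2(c_1+c_2+c_3)$, and inverting gives $c_i=\tfrac12(a_j+a_k-a_i)\ge 0$; these are precisely evenness of the sum and the triangle inequality, which must therefore hold in every triangle.

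For \emph{existence}, given labels $N_e$ satisfying the two constraints in each triangle, I would set $c_i:=\tfrac12(a_j+a_k-a_i)$---a non-negative integer by hypothesis---mark $a_e$ points in the interior of each edge $e$, and inside each triangle draw, near each corner, a nested ``fan'' of $c_i$ disjoint arcs cutting off that corner and joining the points nearest it on its two incident edges. The identity $a_i=c_j+c_k$ is exactly what makes the three fans consume all the marked points on each edge and be drawable pairwise disjointly. Two triangles sharing an edge $e$ carry the same number $N_e$ of marked points there; gluing by the orientation-compatible order-preserving identification of those points yields a normal curve realizing the given coordinates. For \emph{uniqueness}, given two normal curves $K$ and $K'$ with the same coordinates, I would build a homeomorphism of the surface carrying $K$ to $K'$ and fixing every vertex and every edge setwise: on each edge take the order-preserving self-homeomorphism matching the $N_e$ points of $K$ to those of $K'$, then extend over each triangle using that a disjoint arc family there is determined, up to homeomorphism of the triangle fixing its edges and vertices, by the triple $(c_1,c_2,c_3)$, hence by the edge labels; these per-triangle extensions agree on shared edges.

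The hard part---the only step that is not bookkeeping---is the classification of arc families inside a single triangle: that every properly embedded disjoint family of arcs avoiding the vertices and never joining two points of one edge splits into three nested corner fans whose combinatorial type is recorded by $(c_1,c_2,c_3)$. This follows from the Jordan-curve property in a disk (each such arc separates it and isolates a unique marked vertex, and two disjoint arcs isolating the same vertex must be nested), but it carries the content of the observation; once it is in hand, both the identities $a_i=c_j+c_k$ and the uniqueness-up-to-homeomorphism claim follow immediately.
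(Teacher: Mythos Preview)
Your proof is correct and follows essentially the same approach as the paper: both reduce to a single triangle, express the edge counts as pairwise sums of three nonnegative integers counting the three corner-arc types, derive evenness and the triangle inequality from that, invert to construct the curve, and obtain uniqueness by matching crossing points on edges and extending over each triangle. The only notable difference is that you explicitly isolate and justify the classification of disjoint arc families in a triangle (via the Jordan property and nesting), whereas the paper simply asserts that any normal curve must have the described segment pattern within each triangle; your version is slightly more careful on this point.
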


\begin{figure}[t]
\centering\includegraphics[width=0.3\textwidth]{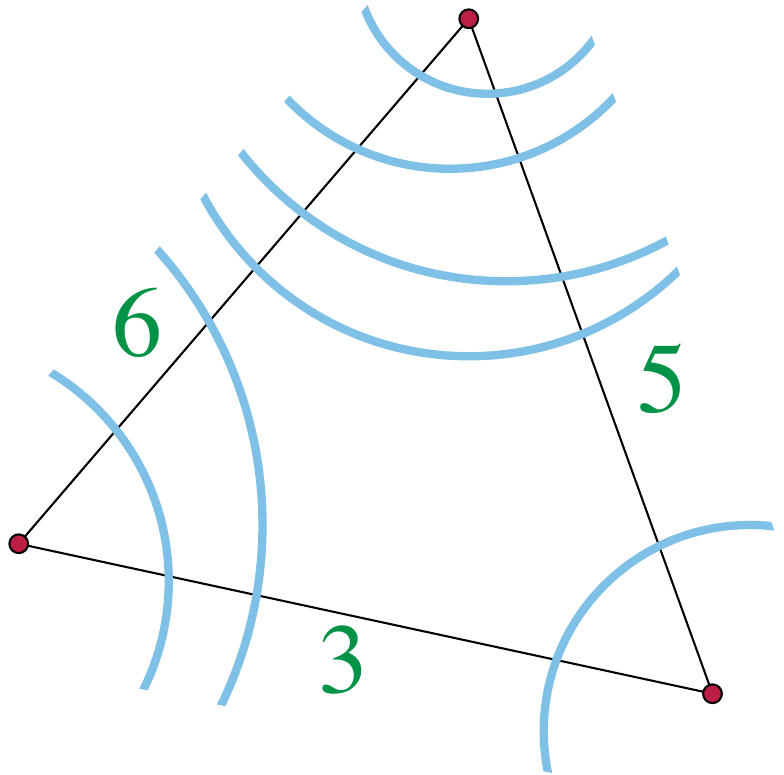}
\caption{A triangle in a triangulated surface (black), part of a normal curve (blue), and the normal coordinates of the three triangle edges (green)}
\label{fig:normal-coords}
\end{figure}

\begin{proof}
Consider any normal curve $C$ of any triangulated surface, and a triangle $\Delta$ of the triangulation with edges $x$, $y$, and $z$. Within $\Delta$, $C$ must consist of some number $a\ge 0$ of segments crossing from $x$ to $y$, some number $b\ge 0$ of segments crossing from $y$ to $z$, and some number $c$ of segments crossing from $x$ to $z$. Then its normal coordinates on these three edges are $N_x=a+c$, $N_y=a+b$, and $N_z=b+c$. Their sum is $2a+2b+2c$, an even number. They obey the triangle inequality because $N_x+N_y=2a+b+c\ge b+c=N_z$.

Conversely, suppose we are given any system of normal coordinates that obey the triangle inequality and sum to an even number in each triangle.
Then because the sum $N_x+N_y+N_z$ is even, by assumption, $N_x+N_y-N_z$ is even, as it differs from $N_x+N_y+N_z$ by the even number $2N_z$. Additionally, $N_x+N_y-N_z$ is non-negative, by the triangle inequality. It follows that if we set $a=(N_x+N_y-N_z)/2$, and symmetrically $b=(N_x-N_y+N_z)/2$ and $c=(-N_x+N_y+N_z)/2$, then $a$, $b$, and $c$ are non-negative integers. We can construct a normal curve having this system of normal coordinates by placing $N_e$ crossing points on each edge $e$. Then, within each triangle $xyz$ with $a$, $b$, and $c$ calculated as above, we draw $a$ line segments connecting the crossing points on edges $x$ and $y$ that are the $a$ nearest crossings to the shared vertex of $x$ and $y$. Symmetrically, we draw $b$ line segments connecting the crossing points on edges $y$ and $z$ that are the $b$ nearest crossings to the shared vertex of $y$ and $z$, and we draw $c$ line segments connecting the crossing points on edges $x$ and $z$ that are the $c$ nearest crossings to the shared vertex of $x$ and $z$. The resulting system of line segments within each triangle link up to form a normal curve, whose normal coordinates as calculated above are exactly the numbers we are given.

Any two normal curves with the same coordinates necessarily have the same number of crossing points on each edge of the triangulation (given by the normal coordinates) and the same pattern of segments of the curve within each triangle (as described above).
They can be mapped to each other by a homeomorphism of the surface
that fixes the vertices of the triangulation, maps each edge to itself in a way that that takes the crossing points of one normal curve to the crossing points of the other normal curve, and then maps the interior of each triangle to itself in a way that deforms one system of segments from one normal curve into the corresponding system of segments from the other normal curve.
\end{proof}

In order to analyze the computational complexity of algorithms on normal curves we also need the following.

\begin{observation}
\label{obs:normal-bit-complexity}
The number of bits needed to specify a triangulated surface with $\tau$ triangles, and a normal curve on that surface with $\sigma$ segments,
is $O(\tau\log(\tau + \sigma))$.
\end{observation}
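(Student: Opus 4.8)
The plan is to exhibit an explicit binary encoding of a triangulated surface together with a normal curve on it, and to bound its length.

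\textbf{Encoding the surface.} First I would encode the combinatorial type. A triangulated surface with $\tau$ triangles has $3\tau$ triangle-sides, and since the gluing matches them in pairs it has $3\tau/2$ edges. Fix a canonical order on the $3\tau$ triangle-sides, say by triangle index and then position within the triangle. The surface is then determined by recording, for each triangle-side in this order, the index in $\{0,\dots,3\tau-1\}$ of the side it is glued to, plus $O(1)$ bits per triangle for its specified edge orientations. This comes to $O(\tau\log\tau)$ bits altogether.

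\textbf{Encoding the curve.} Then I would encode the normal curve by its normal coordinates, using the standardness fact established just above: a normal curve is determined, up to the homeomorphisms allowed there, by the edge labels $(N_e)_e$, and conversely any labeling obeying the per-triangle triangle inequality and even-sum conditions is realized. So it suffices to list the $3\tau/2$ integers $N_e$ in the canonical edge order. The one quantitative input needed is the identity $\sum_e N_e=\sigma$, which I would prove by counting segment-endpoints two ways: each of the $\sigma$ segments has two endpoints, each lying on an edge, for $2\sigma$ in all; and each of the $N_e$ crossing points on an edge $e$ is an endpoint of exactly one segment in each of the two triangles incident to $e$, contributing $2N_e$, so $2\sigma=\sum_e 2N_e$. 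In particular each $N_e\le\sigma$, hence writable in $O(\log(\tau+\sigma))$ bits (since $\tau\ge 2$ this quantity is positive), and the whole list of normal coordinates fits in $O(\tau\log(\tau+\sigma))$ bits. Since the combinatorial data above is $O(\tau\log\tau)=O(\tau\log(\tau+\sigma))$ bits, the total is $O(\tau\log(\tau+\sigma))$, and the encoding is plainly decodable, which proves the bound.

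The argument is essentially bookkeeping, with no real obstacle. The one point that needs thought is the decision to encode the curve by its normal coordinates rather than by listing its $\sigma$ segments one at a time — the latter would cost $\Omega(\sigma\log\tau)$ bits and could far exceed the claimed bound when $\sigma\gg\tau$ — together with the identity $\sum_e N_e=\sigma$, which is what keeps each of the $O(\tau)$ coordinates short enough for the count to close.
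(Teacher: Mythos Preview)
Your proposal is correct and follows essentially the same approach as the paper: encode the surface by its gluing data in $O(\tau\log\tau)$ bits and the curve by its normal coordinates, each bounded by $\sigma$, in $O(\tau\log\sigma)$ bits. You give a bit more justification than the paper (the double-counting identity $\sum_e N_e=\sigma$ where the paper simply asserts $N_e\le\sigma$), but the substance is the same.
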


\begin{proof}
The surface can be specified by numbering and orienting the triangles and, for each triangle, specifying its three neighboring triangles. This specification uses $3\lceil\log_2\tau\rceil$ bits per triangle.
Additionally, each normal coordinate is at most $\sigma$ and specifying it takes at most $1+\log_2\sigma$ bits for each of the $3\tau/2$ edges.
\end{proof}

We need to specify, not only a curve on a surface, but a crossing point of the curve with an edge of the triangulated surface. To do so, it is helpful to introduce index numbers for these crossing points. We use two different forms of indexing, \emph{edge coordinates} and \emph{arc coordinates}.

\begin{definition}
Choose an arbitrary orientation for each edge of a triangulated surface with a specified normal curve. Then, for this orientation,
the edge coordinate of a point where a normal curve crosses an edge is just its position among the crossings on that edge, after choosing an arbitrary orientation for that edge.

Similarly, choose an arbitrary orientation for each arc of the specified normal curve, and designate one of the crossings points of each arc  (chosen arbitrarily) as its starting point. Then, for this data, the arc coordinate of a point where a normal arc crosses an edge is its position among all of the crossings along the normal arc, in the order they are reached by following that arc from its starting point in the direction of the specified orientation.
\end{definition}

Erickson and Nayyeri~\cite{EriNay-DCG-13} provide several useful algorithms for manipulating normal curves, normal coordinates, and the edge coordinates and arc coordinates of their crossings. In particular, they show that the following computations can all be done in time polynomial in the bit complexity of the normal curve, as given by \cref{obs:normal-bit-complexity}:
\begin{itemize}
\item Given a normal curve and the edge coordinate of a crossing point $p$ in this curve, find the normal coordinates that describe the arc of the normal curve containing $p$ \cite[Theorem 6.2]{EriNay-DCG-13}. The algorithm constructs a \emph{street complex} describing this arc, from which it is possible to convert edge coordinates in the given curve into edge coordinates in the arc and vice versa.
\item Given a normal curve consisting of a single arc, the edge coordinate of a crossing point $p$, and a choice of a starting point on that arc,
find the arc coordinate of $p$ \cite[Theorem 6.3]{EriNay-DCG-13}.
\item Given a normal curve consisting of a single arc, the arc coordinate of a crossing point $p$, and a choice of a starting point on that arc,
find the edge coordinate of $p$ \cite[Theorem 6.4]{EriNay-DCG-13}.
\end{itemize}
More precisely, the time for each of these operations is quadratic in the number of triangles in the triangulation and logarithmic in the total number of crossings of the normal curve.

\begin{theorem}[Bell]
\label{thm:integer-exchange}
The iterated integer interval exchange transformation problem, for $n$ iterations of a transformation on $k$ intervals over the integers in the range $[0,N-1]$, can be solved in time polynomial in $k$, $\log N$, and $\log n$.
\end{theorem}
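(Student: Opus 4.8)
The plan is to realize $T$ as the first-return map, to a distinguished transversal, of a normal curve on a triangulated surface, and then to compute iterates of $T$ by jumping along that curve using the edge-coordinate/arc-coordinate algorithms of Erickson and Nayyeri~\cite{EriNay-DCG-13}. The first and creative step is a construction. From the combinatorial data of $T$ --- its $k$ breakpoints and the permutation of the $k$ pieces --- we build a triangulated oriented $2$-manifold $S$, a distinguished union $I$ of its edges (the ``central horizontal line'' of \cref{fig:normal-interval-exchange}), and a normal curve $C$ on $S$ such that: (i) $S$ has $O(k)$ triangles and $C$ crosses every edge $O(N)$ times, so by \cref{obs:normal-bit-complexity} the pair $(S,C)$ has bit complexity polynomial in $k$ and $\log N$; (ii) $C$ meets $I$ in exactly $N$ points, indexed $0,\dots,N-1$ by their edge coordinates, with the crossing numbered $x$ representing the integer $x$; and (iii) following $C$ from the crossing numbered $x$, in a fixed orientation, until it returns to $I$ lands on the crossing numbered $T(x)$, and the number $c$ of crossings with non-$I$ edges strictly between two consecutive meetings with $I$ is a constant, the same for every branch. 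Such an $S$ is obtained from a suspension (``staircase'') polygon for $T$ whose vertical foliation realizes $T$ as the return map to a horizontal chord $I$, triangulated uniformly so that each lap of the return map crosses the same number of edges; the $N$ vertical leaves through the integer points cut this triangulation in the desired normal curve $C$.

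Given this construction the algorithm is short. Since $T$ is a bijection of a finite set, the orbit of $x$ is a cycle of length $m \le N$, and this cycle is precisely the sequence of $I$-crossings met along the arc $A$ of $C$ that contains the crossing numbered $x$. First, apply \cite[Theorem 6.2]{EriNay-DCG-13} to extract from the normal coordinates of $C$ those of $A$ together with a street complex converting edge coordinates in $C$ to and from edge coordinates in $A$; this also yields $m$ and lets us pick one of the $I$-crossings of $A$ as its base point. Next, use \cite[Theorem 6.3]{EriNay-DCG-13} to convert the edge coordinate of the crossing numbered $x$, viewed within $A$, into an arc coordinate; by property~(iii) this arc coordinate equals $r\,c$, where $r$ is the position of $x$ within the cyclic orbit, so $r$ is recovered by division. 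The value $T^{(n)}(x)$ then sits at cyclic position $r' = (r+n)\bmod m$, i.e.\ at arc coordinate $r'\,c$; reducing $n$ modulo $m$ first keeps this arithmetic polynomial even when $n$ is astronomically large. Finally, apply \cite[Theorem 6.4]{EriNay-DCG-13} to turn arc coordinate $r'\,c$ back into an edge coordinate within $A$ --- which, being a multiple of $c$ measured from an $I$-crossing, is again an $I$-crossing --- convert it back to an edge coordinate in $C$ via the street complex, and report the integer it represents. Each Erickson--Nayyeri call costs time quadratic in the $O(k)$ triangles and logarithmic in the $\operatorname{poly}(k)\cdot N$ crossings, and everything else is integer arithmetic on numbers of polynomial length, so the whole procedure runs in time polynomial in $k$ and $\log N$.

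The main obstacle is entirely in the construction of the first paragraph: producing a triangulated surface whose vertical foliation realizes $T$ as a return map while simultaneously (a) using only $O(k)$ triangles, (b) keeping all normal coordinates $O(N)$, and (c) making the per-lap crossing count $c$ constant. The suspension of an interval exchange on $k$ intervals is a classical genus-$O(k)$ translation surface, and cutting it into $O(k)$ triangles is routine; the genuinely delicate points are subdividing so that every lap --- regardless of how the branches get permuted --- sweeps through the same number of edges, and checking that each of the $N$ leaves of interest crosses each edge only $O(1)$ times so that the edge labels stay $O(N)$. Once these geometric details are pinned down, the rest of the proof is a direct assembly of the cited computational-topology algorithms with modular arithmetic.
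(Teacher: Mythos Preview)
Your proposal is correct and follows essentially the same approach as the paper: build a triangulated surface on which $T$ is the first-return map of a normal curve to a distinguished edge, then use Erickson--Nayyeri's Theorems~6.2--6.4 to jump $n$ laps along the containing arc in arc coordinates. The paper makes the construction you leave abstract completely explicit---a triangulated rectangle with $s=\lceil\log_2 k\rceil$ horizontal stripes above and below a single central edge~$I$, glued top-to-bottom by the permutation and left-to-right, with the $N$ vertical integer lines as the normal curve---so that each lap crosses exactly $2s$ edges (constant across branches, though depending on~$k$) and every normal coordinate is at most~$N$.
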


\begin{proof}
Given an integer interval exchange transformation $f$, a number of iterations $n$, and a starting value $i$, perform the following steps to compute $f^{(n)}(i)$:
\begin{itemize}
\item Construct a surface in the form depicted in \cref{fig:normal-interval-exchange}: a triangulated rectangle, divided into some number of horizontal stripes, with the input intervals to the transformation subdividing the top of the rectangle and the outputs subdividing the bottom. It is convenient to assign Cartesian coordinates to this rectangle in such a way that the vertical blue lines have integer $x$-coordinates in the range $[0,N-1]$. To do so, place the left edge of the rectangle on the line $x=-1/2$ of the Cartesian plane and its right edge on the line $x=N-1/2$.  Triangulate it as shown in the figure, so that each vertical line through the rectangle crosses exactly one diagonal edge per slice, and so that the central horizontal line across the rectangle forms an unsubdivided edge. Each step up or down from this central edge allows the number of horizontal subdivisions to double, so it suffices to take $s=\lceil\log_2 k\rceil$ steps above and below the central edge, producing a triangulation with  $O(k)$ triangles. Glue the left and right sides of the rectangle together, and glue top and bottom according to the labeling from the transformation.
\item Compute the normal coordinates of the normal curve formed by the vertical integer lines in the rectangle. The normal coordinate of any edge is how far apart horizontally its endpoints are within the rectangle.
\item Given the integer $i$ whose iterates we want to compute, let $p_i$ be the crossing point on the central horizontal edge with $x$-coordinate equal to~$i$. Each iteration of the exchange transformation can be obtained by advancing $2s$ units upward along the curve, starting from $p_i$, and wrapping around halfway through from the top of the rectangle to the bottom.
\item Find the normal arc containing $p_i$ and its normal coordinates (the number of times this arc crosses each edge of the triangulation), using the algorithm of \cite[Theorem 6.2]{EriNay-DCG-13}. The length $\ell$ of this arc (measured as a number of crossings) is just the sum of these normal coordinates. The whole normal curve has $O(N\log k)$ crossings so this bound applies \emph{a fortiori} to this arc.
\item Convert the known edge coordinate~$i$ of $p_i$ to an arc coordinate using the algorithm of \cite[Theorem 6.3]{EriNay-DCG-13}. Add $2ns$ (modulo $\ell$) to this arc coordinate, and convert the resulting arc coordinate back into an edge coordinate in its arc using the algorithm of \cite[Theorem 6.4]{EriNay-DCG-13}. Reversing the reduction from the normal curve to a normal arc, convert this edge coordinate in the arc back into an edge coordinate in the whole curve. This edge coordinate is the number $f^{(n)}(i)$ that we want to compute.
\end{itemize}

The topological subroutines used by this algorithm all take time quadratic in the size of the triangulation and logarithmic in the number of crossings of the normal curve, so for the triangulation and normal curve constructed above this time bound is $O(k^2\log N)$.
\end{proof}

\section{Conclusions}

We have studied the complexity of computing the iterated values of polynomial-time bijections, shown the equivalence of several different complexity classes defined in this way to each other and to $\ib$, and found $\ib$-complete problems involving circuits, cellular automata, graphs, and piecewise linear transformations. Our cellular automaton completeness result, in particular, involves a novel dimension-reducing simulation from two-dimensional to one-dimensional reversible cellular automata. Our results may be of interest in expanding the types of problems that are complete for this class.

We have also described a class of problems of this type, the iterated integer interval exchange transformation problems, that have a non-obvious polynomial-time algorithm. In subsequent work, we have applied this algorithm as a subroutine to solve certain problems of ray-tracing in two-dimensional systems of mirrors~\cite{Epp-CCCG-22}.

\printbibliography
\end{document}